\documentclass[
  reprint,
  superscriptaddress,
  amsmath,
  amssymb,
  aps,
  prx,
  nofootinbib
]{revtex4-2}

\usepackage[utf8]{inputenc}
\usepackage{mathtools}
\usepackage{braket}
\usepackage{verbatim}
\usepackage{xcolor}
\usepackage[normalem]{ulem}
\usepackage{enumitem}
\usepackage{tikz}
\usetikzlibrary{arrows.meta,positioning}
\usepackage{hyperref}
\usepackage{cleveref}
\usepackage[most]{tcolorbox}

\newtcolorbox{summarybox}{
  colback=gray!8,
  colframe=black,
  boxrule=0.6pt,
  arc=2pt,
  left=8pt,
  right=8pt,
  top=6pt,
  bottom=6pt,
  before skip=8pt,
  after skip=8pt
}

\crefname{problem}{Problem}{Problems}
\Crefname{problem}{Problem}{Problems}

\DeclareSymbolFont{bbold}{U}{bbold}{m}{n}
\DeclareMathAlphabet{\mathbbold}{U}{bbold}{m}{n}

\newcommand{\ind}{\mathbbold{1}}

\usepackage{amsthm}

\newtheorem{theorem*}{Theorem}
\newtheorem{theorem}{Theorem}
\newtheorem{lemma}{Lemma}
\newtheorem{proposition}{Proposition}
\newtheorem{corollary}{Corollary}

\theoremstyle{definition}
\newtheorem{definition}{Definition}
\newtheorem{example}{Example}
\newtheorem{problem}{Problem}
\theoremstyle{remark}
\newtheorem{remark}{Remark}

\newcommand{\PSPACE}{\mathsf{PSPACE}}

\newcommand{\SigmaTwoP}{\ensuremath{\Sigma_2^p}}
\renewcommand{\P}{\mathsf{P}}
\newcommand{\MA}{\mathsf{MA}}

\newcommand{\QMA}{\mathsf{QMA}}
\newcommand{\StoqMA}{\mathsf{StoqMA}}
\newcommand{\VGPMA}{\mathsf{VGPMA}}
\newcommand{\VGPLH}{\mathsf{VGP\textnormal{-}LH}}
\newcommand{\STOQLH}{\mathsf{Stoq\textnormal{-}LH}}
\newcommand{\NP}{\mathsf{NP}}
\newcommand{\coNP}{\mathsf{coNP}}
\newcommand{\poly}{\mathrm{poly}}
\newcommand{\Tr}{\mathrm{Tr}}

\newcommand{\Z}{\mathbb{Z}}
\newcommand{\R}{\mathbb{R}}
\newcommand{\C}{\mathbb{C}}

\newcommand{\supp}{\mathrm{supp}}
\newcommand{\stoq}{\mathrm{stoq}}
\newcommand{\gadget}{\mathrm{gadget}}

\newcommand{\diag}{\mathrm{diag}}

\newcommand{\FREENCLREV}{\mathsf{FreeNCLRev}}
\newcommand{\DSTOQ}{\mathsf{LOCAL}\textnormal{-}\mathcal{D}\textnormal{-}\mathsf{STOQ}}
\newcommand{\DSTOQl}{\DSTOQ\text{[}\ell\text{]}}
\newcommand{\USTOQ}{\mathsf{LOCAL}\textnormal{-}\mathcal{U}\textnormal{-}\mathsf{STOQ}}
\newcommand{\AND}{{\textsc{And}}}
\newcommand{\OR}{{\textsc{Or}}}
\newcommand{\ANDOR}{{\AND/\OR}}

\newcommand{\PARTITION}{\mathsf{PARTITION}}

\begin{document}

\title{
    Dismantling the Stoquastic Dichotomy
}

\author{Armen Karakashian}
\email{armen@karakashian.com}
\affiliation{Department of Applied Mathematics \& Statistics, Stony Brook University, Stony Brook, New York 11794, USA}

\author{Itay Hen}
\email{itayhen@isi.edu}
\affiliation{Information Sciences Institute, University of Southern California, Marina del Rey, California 90292, USA}
\affiliation{Department of Physics and Astronomy and Center for Quantum Information Science \& Technology, University of Southern California, Los Angeles, California 90089, USA}
\affiliation{Department of Electrical and Computer Engineering, University of Southern California, Los Angeles, California 90089, USA}

\date{\today}

\begin{abstract}
    \noindent
    We challenge the notion that a stoquastic binary governs fundamental computational boundaries in quantum computing and classical simulation of quantum systems.
    We argue that vanishing geometric phase (VGP), a geometric condition on the Hamiltonian's transition graph, more adequately captures these boundaries.
    To distinguish VGP from stoquasticity, we construct VGP 3-local Hamiltonians that are formally hard to stoquastize, yet belong to a family admitting polynomial-time recognition of the VGP property.
    Without constructing a stoquastizing unitary, we prove that the local Hamiltonian problem is $\mathsf{StoqMA}$-complete under the promise that the input Hamiltonian has VGP, and that a frustration-free variant is in $\mathsf{MA}$ under the same promise.
    We use this result to argue that non-VGP is necessary for any claimed adiabatic advantage justified by escaping the $\mathsf{StoqMA}$ regime.
    Further, we identify natural settings where the VGP property can be recognized in polynomial time.
    In contrast, we show that recognition of VGP is $\mathsf{PSPACE}$-complete in general for geometrically local Hamiltonians.
    Our results show that the computational boundaries $\mathsf{MA} \subseteq \mathsf{StoqMA} \subseteq \mathsf{QMA}$ traditionally attributed to stoquasticity are better understood as boundaries between vanishing and non-vanishing geometric phase structure.
\end{abstract}

\maketitle

\section{Introduction}
\label{sec:intro}

The quantum sign problem is a central obstacle in computational
quantum physics~\cite{Troyer_2005,Loh_1990}: it is often introduced as a primary cause of inefficient classical quantum Monte Carlo (QMC) simulation of generic quantum many-body
systems.
However, the sign problem's pervasiveness extends beyond any particular numerical method. 
Fundamentally, the sign problem concerns whether a quantum-many body system can admit a description analogous to a classical probabilistic model, having consequences for quantum computational complexity, efficient classical simulation, and the power of adiabatic quantum algorithms.

A Hamiltonian is \emph{stoquastic} if all its off-diagonal matrix elements in the computational basis are nonpositive~\cite{Bravyi_2006}.
Stoquasticity of $H$ ensures that Boltzmann weights $e^{-\beta H}$ have nonnegative entries, making Monte Carlo sampling more tractable, and is the property that the complexity class $\StoqMA$\footnote{See~\Cref{def:stoqma}.} was canonically defined around~\cite{Bravyi_2006}.
The sign problem boundary among Hamiltonians is often framed as a dichotomy between stoquastic and non-stoquastic Hamiltonians, with stoquastic Hamiltonians being considered sign-problem-free. 
However, previous works \cite{Hen_2021,Babakhani_2025} have demonstrated that stoquasticity is a special case of sign-problem-freeness in a particular type of QMC algorithm (permutation matrix representation quantum Monte Carlo, or PMR-QMC).
In particular, in this context \emph{stoquasticity} depends on the choice of computational
basis and can be created or destroyed by a diagonal unitary transformation, whereas
sign-problem-freeness itself is invariant under such transformations and does not cross the
PMR-QMC sign problem boundary~\cite{Hen_2021}.
The class of non-stoquastic Hamiltonians proven to be sign-problem-free in this context are known as \emph{vanishing geometric phase} (VGP)  Hamiltonians (see~\Cref{def:vgp}).
This finding is the key motivation for exploring whether VGP defines the sign problem boundary in other contexts.

In the main result of this paper, we will find that VGP preserves the $\StoqMA$ level of the $\MA$ vs. $\StoqMA$ vs. $\QMA$\footnote{See~\Cref{def:ma} and ~\Cref{def:qma} for the standard definitions of the complexity classes.} hierarchy: the local Hamiltonian problem restricted to VGP Hamiltonians is $\StoqMA$-complete, demonstrating that these boundaries in Hamiltonian complexity are not fully captured by stoquasticity (\Cref{cor:stoqma_complete}, \Cref{cor:vgpma_collapse}).
More broadly, while stoquasticity is an algorithmically useful representation, it is not the intrinsic feature characterizing sign problem boundaries in PMR-QMC or quantum computational complexity.
What's more, we will find that VGP can also be computationally advantageous by explicitly constructing VGP local Hamiltonians that are provably hard to stoquastize via locality-preserving unitary transformations (\Cref{thm:vgp_hard_to_stoquastize}).
In other words, it may be favorable to \emph{not attempt} to find a stoquastizing unitary to simulate these Hamiltonians.
Lastly, in one of the final sections (\Cref{sec:aqc}), we will argue that non-VGP instantaneous Hamiltonians are \emph{necessary} for the standard route to universal adiabatic quantum computing\footnote{The standard route being a claimed adiabatic advantage that relies on leaving the $\StoqMA$ regime.}, while non-stoquastic ones are not even \emph{sufficient}.

\subsection{Sign-Problem-Freeness is Invariant Under Diagonal Unitary Transformation in PMR-QMC}
\label{subsec:stoq_wrong}

The first diagonal-unitary-invariant description of sign-problem-freeness was introduced in Ref.~\cite{Hen_2021}, motivated by understanding on which Hamiltonian inputs the Permutation Matrix Representation quantum Monte Carlo (PMR-QMC)~\cite{Gupta_2020} algorithm is efficient.
We will review the findings of Ref.~\cite{Hen_2021}.

Recall the PMR framework for quantum Monte Carlo~\cite{Gupta_2020}.
In PMR-QMC, the partition function and energy estimators are expressed as sums over closed walks in the transition graph $G_H$ (\Cref{def:transition_graph}), with each closed walk contributing a weight proportional to the product of off-diagonal matrix elements along its edges.
For these weights to be nonnegative (i.e., there is no destructive interference in the closed-walk expansion of the partition function), it is necessary and sufficient that the \emph{holonomy} of every closed cycle in $G_H$ be trivial.
The holonomy of a cycle $\gamma = (z_0, z_1, \ldots, z_q = z_0)$ is the complex number
\begin{align*}
  \Phi(\gamma) = (-1)^q \prod_{t=1}^{q}
  \frac{H_{z_t, z_{t-1}}}{|H_{z_t, z_{t-1}}|},
\end{align*}
encoding the total phase accumulated around the cycle.
A Hamiltonian $H$ has \emph{vanishing geometric phase} (VGP) if $\Phi(\gamma) = 1$ for every cycle in $G_H$; equivalently, if a diagonal unitary $D$ exists that renders $H$ stoquastic via conjugation (proven in Ref.~\cite{Hen_2021}).
As established in Ref.~\cite{Hen_2021}, a Hamiltonian is sign-problem-free under PMR-QMC if and only if it has VGP.
Stoquasticity is sufficient but not necessary: a Hamiltonian may be sign-problem-free without being stoquastic in any given basis (see \Cref{ex:twoqubit_vgp}, \Cref{ex:n_qubit_family_non_local_stoquastizing _diagonal}).
In particular, VGP is the diagonal-unitary-invariant notion of sign-problem-freeness in PMR-QMC, while stoquasticity is just one representative.
This result implies that the correct equivalence class for sign-problem-freeness in PMR-QMC is not characterized by stoquasticity, but rather by VGP---a deeper property of the geometric phase structure of the Hamiltonian's transition graph. 

Follow-up work~\cite{Babakhani_2025} explicitly identified cases where recognition of VGP in physically-motivated Hamiltonians is tractable, and also analyzed the scaling behavior of VGP-inspired sign problem diagnostics across the space of unitary transformations.
We intend to advance the results of these works by resolving open complexity-theoretic questions surrounding VGP: namely, is there a formal computational separation between stoquasticity and VGP, what is the complexity of deciding VGP, how does VGP fit into established boundaries in quantum computational complexity, and what does the VGP framework say about quantum advantage in the design of adiabatic quantum computers?

\subsection{Outline}
\label{subsec:this_work}

In~\Cref{sec:vgp}, we provide several different equivalent characterizations of VGP and we introduce the abstract \emph{stoquastic proxy map} $\mathcal{S}(\, \cdot \,)$ (\Cref{def:stoq_proxy_map}), which transforms VGP Hamiltonians into cospectral stoquastic Hamiltonians. We emphasize that $\mathcal{S}(\, \cdot \,)$ is not guaranteed to be efficiently computable in general---it is merely a helpful tool for the methods of this paper. We also review existing results on VGP as the exact QMC sign-problem-free boundary and provide explicit examples of VGP-but-not-stoquastic Hamiltonians.
For the main result of this section, we prove that there are VGP Hamiltonians that are provably hard to stoquastize via unitary transformation but admit efficient recognition of the VGP property.

In~\Cref{sec:pmr_locality}, we introduce a new notion of locality (PMR locality) that is motivated by the PMR formalism. PMR locality subsumes geometric locality, but is stricter than $k$-locality. We prove that PMR-local Hamiltonians (and, consequently, geometrically local Hamiltonians) promised to have VGP can be efficiently stoquastized. That is, one can efficiently obtain a compact representation of a cospectral stoquastic local Hamiltonian.

In \Cref{sec:vgpma}, we introduce the complexity class $\VGPMA$ (\Cref{def:vgpma}) defined as the class of problems reducible to the local Hamiltonian problem (\Cref{def:local_hamiltonian_problem}) under the promise that the input Hamiltonian has VGP.
We then present the main result of the paper, showing $\VGPMA = \StoqMA$ as a corollary of $\VGPLH$ being $\StoqMA$-complete.
The containment $\StoqMA \subseteq \VGPMA \subseteq \QMA$ is immediate because stoquasticity implies VGP.
The question of interest is where $\VGPMA$ exists in the complexity hierarchy with respect to $\StoqMA$ and $\QMA$.

A priori, there are three possible outcomes:
\begin{enumerate}[label=($\roman*$)]
\item[$(i)$] $\VGPMA = \StoqMA$
\item[$(ii)$] $\StoqMA \subsetneq \VGPMA \subsetneq \QMA$
\item[$(iii)$] $\VGPMA = \QMA$
\end{enumerate}
Each outcome is plausible: for outcomes $(ii)$ and $(iii)$, a VGP Hamiltonian $H$ may encode computational hardness by knowing that some diagonal unitary $D$ maps $H \mapsto H_{\stoq}$, but $D$ may not be efficiently realizable as a unitary circuit (see \Cref{ex:n_qubit_family_non_local_stoquastizing _diagonal}).
We define efficiently realizable unitaries in ~\Cref{def:eff_realizable}.
In this work, we show that the intermediate scenario outcome $(ii)$ is not possible. 
In particular, we prove that the local Hamiltonian problem is $\StoqMA$-complete for VGP Hamiltonians (\Cref{thm:vgp_stoqma}).
This immediately implies that outcome $(i)$ $\VGPMA = \StoqMA$ holds (\Cref{cor:vgpma_collapse}).
This result, together with the characterization from Ref.~\cite{Hen_2021}, shows that VGP is the appropriate notion of sign-problem-freeness for both PMR-QMC and Hamiltonian complexity.
We also prove that the frustration-free local Hamiltonian problem is in $\MA$ for PMR-local Hamiltonians that are promised to have VGP (\Cref{thm:ff-pmr-local-lh-ma}).

In~\Cref{sec:vgp_complexity}, we use the separate characterizations of VGP from \Cref{sec:vgp} to classify the computational complexity of recognizing VGP in local Hamiltonians. 
We prove that recognizing VGP in local Hamiltonians is $\PSPACE$-complete in general, but is $\coNP$-hard for 2-local.
$\coNP$-completeness in the 2-local case is conjectured (Problem \ref{prob:2_local_poly_diam}).
We complement these results in the next section by demonstrating when VGP recognition is tractable.

In~\Cref{sec:diagonal_recovery}, we examine when recovering a stoquastizing  diagonal can be done efficiently, allowing one to efficiently certify VGP under some natural conditions, particularly in certain 2-local, PMR-local and geometrically local settings (\Cref{thm:dstoq_efficient}, \Cref{cor:dstoq_ell}).

In~\Cref{sec:operational}, we identify computational tasks that separate VGP Hamiltonians from stoquastic Hamiltonians.

In~\Cref{sec:aqc}, we discuss the consequences of our results for the design of non-VGP adiabatic quantum computing algorithms.

In \Cref{sec:discussion}, we state conclusions, open problems, and future directions.

\section{Preliminaries: notation \& key concepts}
\label{sec:preliminaries}

We consider $n$-qubit systems with Hilbert space $\mathcal{H} = (\C^2)^{\otimes n}$ and computational basis $\{\ket{z} : z \in \{0,1\}^n\}$.
Further, we assume the matrix entries of any input local Hamiltonian are given in polar form
\begin{align*}
    H_{xy} = R_{xy}e^{i\theta(x,y)},
\end{align*}
where $R_{xy} \in \R_{\geq 0}$ and $e^{i\theta(x,y)}$ is an $m$th root of unity for some even $m \leq 2^{\poly(n)}$ specified in the input, with the exponent $k_{xy} := m\,\theta(x,y)/2\pi \in \Z_m$ computable in $\poly(n)$ time from the local description of $H$.
The product of two matrix entries $H_{x_1y_1}H_{x_2y_2}$ then has a complex phase which may be stored as an integer modulo $m$, allowing one to take the product of many entries of $H$ without increasing memory requirements.

\begin{definition}
\label{def:locality}
$H$ is a $k$\emph{-local} Hamiltonian in the computational basis if it is decomposable as a sum of terms $H = \sum_i h_i$ where $h_i$ has support on at most $k$ qubits.
If $k = O(1)$, we may simply refer to $H$ as a local Hamiltonian.
\qed
\end{definition}

\begin{definition}
    \label{def:geometric_locality}
    A Hamiltonian $H = \sum_i h_i$ is \emph{geometrically local} if its qubits can be embedded into a $p$-dimensional lattice $\Lambda$ with distance function $d_\Lambda$, where $p = O(1)$, such that there exists a constant $R = O(1)$ satisfying
    \begin{align*}
        d_\Lambda(a,b) \leq R
    \end{align*}
    for every pair of qubits $a$ and $b$ that interact through a common term $h_i$ of $H$ (i.e. $\{a,b\} \subseteq \supp(h_i)$).
    \qed
\end{definition}

\begin{definition}
\label{def:stoq}
$H$ is \emph{stoquastic} in the computational basis if
$\braket{x|H|y} \leq 0$ for all $x \neq y$.
\qed
\end{definition}

Stoquasticity is a global matrix condition that depends on the choice of basis. Some works (Ref.~\cite{Ioannou_2020}) that we reference in this paper will use the term \emph{global stoquasticity}, which is equivalent to our definition of stoquastic.

We say that a process is \emph{efficient} or can be done \emph{efficiently} if it can be performed in polynomially-many steps with respect to the size of the input.

\begin{definition}
Let $\operatorname{rank}_{\mathrm P}(A)$ denote the number of nonzero coefficients in the Pauli operator basis expansion of an operator $A$.
A family of $n$-qubit unitaries $\{U_n\}$ is (classically) \emph{Pauli-tractable} if there exists a polynomial $p$ and a classical polynomial-time algorithm that, given any Pauli string $P$, outputs the Pauli operator basis expansion of $U_n P U_n^\dagger$ with efficiently computable coefficients, and satisfies $\operatorname{rank}_{\mathrm P}(U_n P U_n^\dagger)\leq p(n)$ for every Pauli string $P$.
The family of unitaries is additionally \emph{locality-preserving} if $U_nPU_n^\dagger$ is local for every Pauli string $P$.
\qed
\end{definition}

\begin{definition}
    \label{def:eff_realizable}
    A unitary family is \emph{efficiently realizable} if it is implementable
    by a polynomial-time-uniform family of polynomial-size quantum circuits
    and is Pauli-tractable. \qed
\end{definition}

Efficiently realizable unitaries are desirable choices for finding a basis in which a given Hamiltonian is sign-problem-free.
For this reason, we will primarily consider efficiently realizable unitaries when addressing complexity-theoretic questions surrounding VGP.

\begin{definition}
    \label{def:stoq_diagonal}
    A \emph{stoquastizing  diagonal} for $H$ is a diagonal unitary $D = \sum_x e^{i\theta_x}\ket{x}\!\bra{x}$ that transforms matrix
    elements as
    \begin{align*}
        \braket{x|DHD^{\dagger}|y} = e^{i(\theta_x - \theta_y)}
        \braket{x|H|y} \leq 0, && \forall x \neq y \in \{0,1\}^n.
    \end{align*}
    In words, $D$ is a unitary transformation that renders $H$ stoquastic; in this case, we say that $D$ \emph{stoquastizes} $H$.
    \qed
\end{definition}
Diagonal elements are invariant under a diagonal unitary transformation, while off-diagonal elements $\braket{x|H|y}$ acquire a relative phase.
Stoquastizing  diagonals are precisely the unitaries that relate VGP Hamiltonians to their stoquastic counterparts.
Many of the main complexity results of this work will concern when stoquastizing diagonals are efficiently realizable or not.

\begin{definition}
    \label{def:pmr}
    The \emph{permutation matrix representation} (PMR) of a $k$-local Hamiltonian $H = \sum_{\alpha} h_{\alpha}$ is the decomposition
    \begin{align*}
        H = \sum_{i=1}^m D_iP_i,
    \end{align*}
    where $P_i$ are entry-wise disjoint permutation matrices and $D_i$ are diagonal operators. More specifically, 
    \begin{itemize}
        \item each $D_i = \sum_{j} B^{(i)}_j$ is a sum of $k$-local diagonal terms, and
        \item for each $B^{(i)}_j$ there exists $\alpha$ so that $B^{(i)}_j P_i = h_{\alpha}$.
    \end{itemize}
    We refer to each $D_iP_i$ as \emph{PMR term}. \qed
\end{definition}

Every Hamiltonian admits such a PMR decomposition~\cite{Gupta_2020}.
Additionally, if $H$ is an arbitrary spin-$s$ Hamiltonian, then the PMR terms $D_iP_i$ of $H$ may be chosen so that
\begin{align*}
    P_i = \bigotimes_{j=1}^n \Pi_{j}^{(i)}, && \left[\Pi_{j}^{(i)}\right]^d = I
\end{align*}
for $d := 2s+1$, and this choice is unique~\cite{Babakhani_2026}. We choose this as the canonical/default PMR decomposition of a Hamiltonian for the purposes of this work.
\begin{example}
    Consider $s=\frac12$.
    Then we may write $H$ as a linear dependence relation of the Pauli operator basis.
    The PMR terms of $H$ may be chosen so that $D_iP_i = \sum_{i} c_i \,Z^{\beta_i} X^{\alpha}$ for bitstrings $\alpha$ and $\beta_i$ and complex $c_i$.
    This is the canonical PMR decomposition of $H$ because every $P_i$ is its own inverse.
\qed \end{example}

\begin{definition}
    \label{def:transition_graph}
    The \emph{transition graph} $G_H$ for a Hamiltonian $H$ acting on $n$ qubits is a graph with vertex set $\{0,1\}^n$ and edge set
    \begin{align*}
        \{   (x,y)   :  x \neq y,  \braket{x|H|y} \neq 0  \}.
    \end{align*}
    We define the \emph{weight of a walk} $\ket{z_1} \to \ldots \to \ket{z_\ell}$ in $G_H$ as the product of the negation of the walk's constituent edge weights. That is, the weight satisfies
    \begin{align*}
        w\left(\,\ket{z_1} \to \ldots \to \ket{z_\ell}\,\right)    =    \prod_{i=1}^{\ell-1}- \braket{z_i|H|z_{i+1}}. && \qed
    \end{align*}
\end{definition}

The PMR framework is useful for expanding the partition function $\Tr[e^{-H}]$ into a sum over weights of closed walks in the transition graph $G_H$, because closed walks can be understood as products of PMR terms.
We formalize this notion with the following definition.

\begin{definition}
    \label{def:fundamental_cycle}
    A \emph{fundamental cycle} of a Hamiltonian $H$ is a product of PMR terms $\prod_{i=1}^L D'_iP'_i$ of $H$ such that 
    \begin{enumerate}[label=(\roman*),leftmargin=2em]
        \item $\prod_{i=1}^L P'_i = I$ and
        \item for every computational basis state $\ket{z}$ for which the constituent matrix elements are all nonzero, the closed walk in $G_H$
        \begin{equation}
            \label{eq:G_H_closed_walk}
            \ket{z} \to P'_1\ket{z} \to \ldots \to \left[\prod_{i=1}^{L-1} P'_i\right] \ket{z} \to \ket{z}
        \end{equation}
        is an \emph{induced cycle}\footnote{An \emph{induced cycle} is a closed walk with the property that any two non-consecutive vertices in the walk do not share any edges.}.\qed
    \end{enumerate}
    
\end{definition}
Every fundamental cycle encodes weights of $\leq 2^n$ closed walks. One can see that the weight of the cycle in \Cref{eq:G_H_closed_walk} is given by $\left[(-1)^L \prod_{i=1}^L D'_iP'_i\right]\ket{z}$. Each $P'_i$ generates the edge $\ket{z} \to P'_i\ket{z}$ in $G_H$ and $D'_i$ encodes the weight of that edge. The fundamental cycles provide a generating set for the space of all closed walks in $G_H$~\cite{Babakhani_2026}. The complex phases encoded in the diagonals of fundamental cycles determine the holonomies of all closed walks in $G_H$, so studying fundamental cycles is sufficient for understanding when a Hamiltonian has or does not have VGP~\cite{Hen_2021}. We will explore precisely why this is the case in \Cref{prop:vgp_equiv_def} of the next section.

\section{Vanishing Geometric Phase}
\label{sec:vgp}

We redefine VGP using the definition in \Cref{sec:intro}, introduce the stoquastic proxy map, and propose several equivalent definitions for VGP.
\subsection{Characterizations of VGP}

\begin{definition}
\label{def:vgp}
$H$ has \emph{vanishing geometric phase} (VGP) if, for every closed walk $\gamma = (z_0,z_1,\ldots,z_L = z_0)$ in $G_H$, the \emph{holonomy}
\begin{align*}
    \Phi(\gamma) := (-1)^L\prod_{t=1}^L \frac{H_{z_t,z_{t-1}}}{\left| H_{z_t,z_{t-1}}\right|}
\end{align*}
has trivial (vanishing) complex (geometric) phase. Equivalently, $\Phi(\gamma) = 1$.
A Hamiltonian that has VGP is a \emph{VGP Hamiltonian}. \qed
\end{definition}

A key observation is that every stoquastic Hamiltonian $H$ has VGP: because $H_{xy} \leq 0$, $\Phi(\gamma)$ is always 1 for any closed walk $\gamma$ in the graph of $G_H$.
Another observation is that whenever a Hamiltonian $H$ admits an acyclic transition graph $G_H$, then $H$ trivially has VGP.
Or, when $G_H$ is bipartite (it contains no odd-length cycles) and $H$ is real with $H \geq 0$ entry-wise, then $H$ has VGP.

\begin{definition}
    \label{def:stoq_proxy_map}
    Let $\mathcal{S}(\,\cdot\,)$ denote the map on matrices defined by
    \begin{align*}
        \left(\mathcal{S}(A)\right)_{ij} :=
        \begin{cases}
            A_{ii}, & i=j,\\
            -\,|A_{ij}|, & i \neq j .
        \end{cases}
    \end{align*}
    $\mathcal{S}$ is known as the \emph{stoquastic proxy map} and $\mathcal{S}(H)$ is referred to as the stoquastic proxy of $H$.
    \qed
\end{definition}

\begin{proposition}(Characterizations of VGP)
    \label{prop:vgp_equiv_def}
    The following statements are equivalent:
    \begin{enumerate}[label=(\roman*),leftmargin=2em]
        \item $H$ has VGP.
        \item For every fundamental cycle $C$ of $H$ with $L$ PMR-term factors, the matrix $(-1)^L C$ is entry-wise nonnegative in the computational basis.
        \item The weight of every induced cycle in $G_H$ has trivial complex phase.
        \item For every cycle basis\footnote{A minimal generating set of cycles for the transition graph. Cycles are generated by taking the symmetric difference of edges between cycle basis elements.} $\mathcal{C}$ of $G_H$, the weight of every $C \in \mathcal{C}$ has trivial complex phase.
        \item There exists a diagonal unitary $D$ such that $DHD^\dagger = H_{\mathrm{\stoq}}$ is stoquastic.
        \item There exists a diagonal unitary $D$ such that $\mathcal{S}(H) = DHD^\dagger$.
    \end{enumerate}
\end{proposition}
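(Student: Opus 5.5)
I will prove the six statements equivalent by organizing them around one central object: a phase "potential" on the vertices of $G_H$. Throughout I assume $H$ is Hermitian, so $H_{yx}=\overline{H_{xy}}$. Reversing a walk therefore conjugates its holonomy, and the edge phases $u_{xy} := -H_{xy}/|H_{xy}|$ define a $U(1)$ connection on the undirected graph $G_H$ with $u_{yx}=\overline{u_{xy}}$. With this notation, $\Phi(\gamma)$ is exactly the product of the $u$'s around $\gamma$, and the weight of a closed walk in \Cref{def:transition_graph} equals $\Phi$ times a positive modulus. Consequently (i) says that every closed walk has trivial phase, and (iii) and (iv) say that certain closed walks have trivial weight phase.

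The chain of implications is as follows.

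\textbf{(i)$\Rightarrow$(iii)$\Rightarrow$(iv).} These are immediate specializations, since induced cycles and cycle-basis elements are closed walks.

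\textbf{(iv)$\Rightarrow$(i).} I will use the cycle space. Any closed walk, read edge by edge with multiplicity and orientation, has its phase determined by the oriented edge multiset modulo backtracks. Each backtrack $x\to y\to x$ contributes $u_{xy}u_{yx}=1$. The integer cycle space of a graph is generated by any cycle basis arising from a spanning forest, so the walk's phase is a product of basis-cycle phases raised to integer powers. The one subtlety is that the footnote defines generation via symmetric difference, which is the $\mathbb{Z}_2$ cycle space. Since phases need not be $\pm1$, I will argue with the fundamental cycle basis of a spanning tree $T$: every closed walk is $\mathbb{Z}$-homologous to the product of fundamental cycles of its non-tree edges. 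Because (iv) is quantified over \emph{every} cycle basis, it covers the tree bases in particular, which suffices.

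\textbf{(i)$\Rightarrow$(v).} Construct $D$ explicitly. On each connected component, fix a root $r$ and a spanning tree $T$. Set $\theta_r=0$ and, for each vertex $x$, define $e^{i\theta_x}$ as the inverse of the product of the $u$'s along the tree path from $r$ to $x$, so that $e^{i(\theta_x-\theta_y)}u_{xy}=1$ on every tree edge. For a non-tree edge $(x,y)$, the closed walk $r\to x\to y\to r$ (tree path, then the edge, then the tree path back) has trivial holonomy by (i), which forces $e^{i(\theta_x-\theta_y)}u_{xy}=1$ on that edge as well. Hence every off-diagonal entry of $DHD^\dagger$ equals $-|H_{xy}|\le 0$. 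This construction in fact gives (vi) directly.

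\textbf{(v)$\Rightarrow$(vi).} If $DHD^\dagger$ is stoquastic, its off-diagonal entries are nonpositive reals with moduli $|H_{xy}|$, because conjugation by a diagonal unitary preserves moduli and the diagonal. So $DHD^\dagger=\mathcal{S}(H)$.

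\textbf{(vi)$\Rightarrow$(i).} Holonomy is invariant under diagonal conjugation, since the factors $e^{i\theta}$ telescope around a closed walk, and $\mathcal{S}(H)$ has all $u=1$.

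\textbf{(i)$\Leftrightarrow$(ii).} By the remark after \Cref{def:fundamental_cycle}, $(-1)^L C\ket{z}$ is the weight of the closed walk \eqref{eq:G_H_closed_walk} starting at $z$, placed in the single entry $\bra{z}\cdot\ket{z}$ because $\prod P'_i=I$. The entry is zero when some factor vanishes. So (i) gives (ii) immediately. For (ii)$\Rightarrow$(i), I invoke the cited fact \cite{Babakhani_2026} that fundamental cycles generate all closed walks. Each induced cycle of $G_H$ then appears as some fundamental cycle evaluated at some $z$; more safely, I will show every induced cycle is realized this way, by labeling each edge by its PMR permutation $P_i$. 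This yields (iii), and (iii)$\Rightarrow$(iv)$\Rightarrow$(i) closes the loop.

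\textbf{Main obstacle.} The hardest part is making (ii)$\Rightarrow$(iii) rigorous: every induced cycle of $G_H$ must be exhibited as the walk of a fundamental cycle. One has to check that the chosen sequence of PMR permutations has product exactly $I$ as an operator, not merely fixing $z$. If that fails, I will fall back on citing the generating-set result of \cite{Babakhani_2026} together with the homology argument from (iv)$\Rightarrow$(i). A secondary point of care is the $\mathbb{Z}$ versus $\mathbb{Z}_2$ generation issue in (iv)$\Rightarrow$(i), which I will handle via tree-based bases as described above.
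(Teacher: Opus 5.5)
\textbf{Main gap: (ii)$\Rightarrow$(iii).} You flagged the wrong obstacle here. In the canonical PMR decomposition the $P_i$ are $X$-strings, so a product of them that fixes one basis state is automatically $I$; that check is free.

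The real problem is condition (ii) of \Cref{def:fundamental_cycle}. It requires the walk to be an induced cycle at \emph{every} $z$ where all factors are nonzero. A PMR sequence that traces an induced cycle at $z_0$ can trace a chorded cycle at some $z'$ that differs on the qubits controlling the diagonal coefficients. Such a product is then not a fundamental cycle, and statement (ii) says nothing about it.

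No better argument can rescue this step. Take
$H=\bigl[\tfrac{I+Z_c}{2}Z_2-\tfrac{I-Z_c}{2}\bigr]X_1+Z_cX_2-\tfrac{I-Z_c}{2}X_1X_2$.
\begin{itemize}
\item In the $c=0$ sector, $H=Z_2X_1+X_2$. Its transition graph is a single induced square with holonomy $-1$, so (i) and (iii) fail.
\item In the $c=1$ sector, $H=-X_1-X_2-X_1X_2$ has transition graph $K_4$. The square's PMR sequence acquires chords there, so it is not fundamental.
\item The only fundamental cycles visiting three or more vertices are the triangles through the $X_1X_2$ term. These vanish at $c=0$ and give $(-1)^3(-1)^3=+1$ at $c=1$. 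Back-and-forth products are $D_i^{2k}\ge 0$.
\end{itemize}
Hence (ii) holds while (i) fails. With the definition read literally, (ii) is strictly weaker than (i). Neither your edge-labeling argument nor a fallback citation of \cite{Babakhani_2026} can close this step: here the fundamental cycles do not even see the $c=0$ plaquette. The paper's ``immediate by the definition'' silently uses a per-basis-state reading: a product with $\prod_i P'_i=I$ is counted at those $z$ where its walk is induced, and only those diagonal entries are required to be nonnegative. Under that reading, your edge-labeling argument together with the $X$-string observation does prove (ii)$\Leftrightarrow$(iii).

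\textbf{Second gap: (iii)$\Rightarrow$(iv).} This is not a specialization, contrary to what you wrote. A cycle-basis element can have chords, and (iii) constrains only chordless cycles. You need a chord-splitting step. A chord $(z_a,z_b)$ cuts a cycle $\gamma$ into two shorter cycles $\gamma_1,\gamma_2$ that traverse the chord in opposite directions. Your relation $u_{z_az_b}u_{z_bz_a}=1$ then gives $\Phi(\gamma)=\Phi(\gamma_1)\Phi(\gamma_2)$, and induction on length reduces every cycle to induced ones.

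\textbf{What is sound.} The rest of your proposal is correct:
\begin{itemize}
\item the spanning-tree gauge for (i)$\Rightarrow$(v),(vi);
\item the telescoping argument for (vi)$\Rightarrow$(i);
\item the fundamental-cycle-of-a-tree argument for (iv)$\Rightarrow$(i).
\end{itemize}
It is also more self-contained than the paper, which cites \cite{Hen_2021} for (i)$\Leftrightarrow$(iii)$\Leftrightarrow$(iv) and (i)$\Leftrightarrow$(v) and proves only (v)$\Leftrightarrow$(vi) directly. Your integer-homology treatment also avoids the paper's ``additive under symmetric difference'' shortcut, which does not track $U(1)$ phases without orientations.
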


\begin{proof}
    $(ii) \iff (iii)$ is immediate by the definition of fundamental cycle (\Cref{def:fundamental_cycle}). $(i) \iff (iii) \iff (iv)$ is proven in Ref.~\cite{Hen_2021} by the fact that the complex phases of combined closed walks are additive under symmetric difference of edges. Ref.~\cite{Hen_2021} also proves $(i) \iff (v)$.

    To finish the proof, it is sufficient to show that $(v) \iff (vi)$. If $(v)$ holds, then notice $\diag(DHD^\dagger) = \diag(\mathcal{S}(H))$ since both application of $\mathcal{S}$ and conjugation by $D$ preserve the diagonal. For $x \neq y$, the unitary $D = \diag(\{e^{i \theta_z}\}_z)$ gives
    \begin{align*}
        (H_{\mathrm{\stoq}})_{xy} = e^{i(\theta_x - \theta_y)} H_{xy}.
    \end{align*}
    Since $H_{\mathrm{\stoq}}$ is stoquastic, $(H_{\mathrm{\stoq}})_{xy} \leq 0$, so $(H_{\mathrm{\stoq}})_{xy} = -|H_{xy}|$. Then
    \begin{align*}
        (\mathcal{S}(H))_{xy} = -|H_{xy}| = (H_{\mathrm{\stoq}})_{xy}.
    \end{align*}
    The off-diagonal and diagonal elements of both operators agree, so $(vi)$ follows. Assuming $(vi)$ gives $(v)$ immediately because $\mathcal{S}(H)$ is stoquastic. This completes the proof.
\end{proof}

We immediately get a useful and conceptually meaningful corollary.
\begin{corollary}\label{cor:vgp_diagonal_invariant}
    If $H$ has VGP, so does $DHD^\dagger$ for any diagonal unitary $D$.
\end{corollary}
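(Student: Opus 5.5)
The plan is to use characterization $(v)$ of \Cref{prop:vgp_equiv_def} and the fact that diagonal unitaries form a group under multiplication. Suppose $H$ has VGP. By $(i)\Rightarrow(v)$ there is a diagonal unitary $D_0$ with $D_0 H D_0^\dagger = H_{\stoq}$ stoquastic. Given any diagonal unitary $D$, set $H' := DHD^\dagger$. Then $H = D^\dagger H' D$, so
\begin{align*}
    (D_0 D^\dagger)\, H'\, (D_0 D^\dagger)^\dagger = D_0 D^\dagger H' D D_0^\dagger = D_0 H D_0^\dagger = H_{\stoq}.
\end{align*}
Since $D_0 D^\dagger$ is again a diagonal unitary, $H'$ satisfies $(v)$. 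Applying $(v)\Rightarrow(i)$ then shows that $H'$ has VGP.

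As a sanity check, I would also note a direct argument from \Cref{def:vgp}. Conjugation by $D=\diag(\{e^{i\theta_z}\}_z)$ does not change which entries are nonzero, so $G_{H'} = G_H$. Each edge factor $H_{z_t,z_{t-1}}$ acquires the phase $e^{i(\theta_{z_t}-\theta_{z_{t-1}})}$, and these phases telescope to $1$ around any closed walk. Hence $\Phi(\gamma)$ is unchanged for every closed walk $\gamma$.

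No step here is a real obstacle. The only point needing care is the direction of conjugation: writing $H$ in terms of $H'$ and composing $D_0$ with $D^\dagger$ in the correct order.
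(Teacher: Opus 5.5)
Your argument is correct and is the immediate consequence of characterization $(v)$ in \Cref{prop:vgp_equiv_def} that the paper intends (it states the corollary without a written proof), with $D_0D^\dagger$ composed in the right order. Your telescoping check is also valid, and it has the small advantage of not relying on the cited equivalence $(i)\iff(v)$.
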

This type of diagonal unitary invariance is the key structural property of VGP that stoquasticity lacks.
This establishes VGP as a property tied to the geometric phases of the cycles of $G_H$ and not wholly dependent on a particular basis representation.

\subsection{VGP as the Exact QMC Sign-Problem-Free Boundary}
\label{subsec:qmc_boundary}

We now state precisely the connection between VGP and
sign-problem-free quantum Monte Carlo established in Ref.~\cite{Hen_2021}, which motivates the present work.

In the PMR-QMC framework, partition function estimators take
the form of sums over closed walks $\gamma$ in $G_H$, each
weighted by a product of off-diagonal matrix elements.
When considering whether $H$ has a sign problem, the quantity of interest is the sign of the weight of each walk.
A simulation is sign-problem-free if and only if all walk
weights are nonnegative, which holds if and only if the
holonomy $\Phi(\gamma) = 1$ for every closed cycle.

\begin{theorem}[VGP is the exact QMC boundary~\cite{Hen_2021}]
\label{thm:qmc_boundary}
A Hamiltonian $H$ is sign-problem-free under PMR-QMC
\emph{if and only if} $H$ has VGP. \qed
\end{theorem}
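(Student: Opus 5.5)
The plan is to make precise what ``sign-problem-free under PMR-QMC'' means and then reduce both directions to the holonomy condition of \Cref{def:vgp}. I will use the PMR expansion of the partition function~\cite{Gupta_2020}: with $H = D_0 + \sum_i D_iP_i$, the Dyson-type expansion of $\Tr[e^{-\beta H}]$ is a sum over basis states $z$ and sequences $S=(P_{i_1},\dots,P_{i_q})$ with $\prod P_{i_j}=I$. The weight of each configuration factorizes as a divided difference of $e^{-\beta x}$ over the diagonal energies visited, times the product of off-diagonal elements $\prod_j (D_{i_j})_{z_j}$ along the closed walk $z \to P_{i_1}z \to \cdots \to z$. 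I will take ``sign-problem-free'' to mean that every configuration weight is real and nonnegative, possibly after absorbing the sign $(-1)^q$ from $e^{-\beta H}$. The first thing to record is that the divided-difference factor $(-1)^q e^{-\beta[E_{z_0},\dots,E_{z_q}]}$ is always nonnegative, because the $q$-th divided difference of $e^{-\beta x}$ has sign $(-\beta)^q$ by the mean value theorem. So the sign of a weight is exactly the phase of $(-1)^q\prod_t H_{z_t,z_{t-1}}$, which is $\Phi(\gamma)$ times a positive modulus.

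For ($\Leftarrow$), assume VGP. By \Cref{prop:vgp_equiv_def}(i) every closed walk has $\Phi(\gamma)=1$, so each configuration weight equals a nonnegative divided difference times $\prod|H_{z_t z_{t-1}}|>0$ and is nonnegative. Alternatively, \Cref{prop:vgp_equiv_def}(vi) gives $H = D^\dagger\mathcal{S}(H)D$. The diagonal phases telescope around any closed walk, so the PMR weights of $H$ coincide with those of the stoquastic $\mathcal{S}(H)$, which are nonnegative.

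For ($\Rightarrow$), I argue by contrapositive: suppose some closed walk $\gamma$ has $\Phi(\gamma)\neq 1$. By \Cref{prop:vgp_equiv_def}(iii) I may take $\gamma$ to be an induced cycle realized by a fundamental cycle $\prod D'_iP'_i$. This cycle appears as a configuration in the expansion, with its ordered sequence of PMR terms and starting state $z$. Its weight is a nonzero divided difference times a modulus times a nontrivial phase, hence not a nonnegative real.

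The main obstacle is showing that the divided difference does not vanish and that the nonpositive or complex weight is genuinely present rather than cancelled by regrouping. Nonvanishing is handled by strict positivity of derivatives of $e^{-\beta x}$ for $\beta>0$. Presence holds because sign-problem-freeness is defined configuration-wise, so no regrouping is allowed. If instead one used summed weights over sequences that give the same walk, I would note that all such sequences produce the same walk product, so their summed weight carries the same phase $\Phi(\gamma)$.
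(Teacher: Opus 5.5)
Your argument is correct and follows the route the paper relies on by citing Ref.~\cite{Hen_2021} and sketching it in the introduction. The $q$-th divided difference of $e^{-\beta x}$ is nonzero with sign exactly $(-1)^q$, so each PMR configuration weight is a positive number times $\Phi(\gamma)$, and both directions follow. The detour through induced and fundamental cycles in your forward direction is harmless but not needed: entry-wise disjointness of the canonical PMR terms already makes every closed walk of $G_H$ a unique configuration of the expansion, which also settles your closing concern about regrouping.
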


The necessity distinguishes VGP from stoquasticity.
A stoquastic Hamiltonian trivially has VGP, so stoquasticity implies sign-problem-free simulation.
But the converse fails: there exist non-stoquastic Hamiltonians
that are VGP---they are sign-problem-free in every
diagonal-unitary-equivalent representation, even though they appear to
have positive off-diagonal elements in the computational basis.
Stoquasticity is a sufficient but not necessary condition for sign-problem-free PMR-QMC; VGP is the exact, diagonal-unitary-invariant condition.

The main complexity result of the present paper, $\VGPMA = \StoqMA$,  shows that the VGP framework is not specific to understanding the sign problem for classical simulations.
It is also a complexity boundary: diagonal unitary transformations do not enlarge $\StoqMA$ (\Cref{thm:vgp_stoqma}).

Now that we have motivated further study of the VGP framework, we present examples and properties of VGP Hamiltonians.

\subsection{Explicit VGP-but-not-stoquastic Hamiltonians}
\label{subsec:separating}

Statement $(v)$ of \Cref{prop:vgp_equiv_def} implies that there exist Hamiltonians that are VGP yet fail to be stoquastic in the computational basis.
We will provide examples of such Hamiltonians to make the separation between VGP and stoquastic Hamiltonians concrete.
These examples will also provide a minimal testbed for the stoquastic proxy construction, which will be used in the proof of one of our main results (\Cref{cor:vgpma_collapse}).

\begin{example}
    \label{ex:twoqubit_vgp}
    Consider the stoquastic Hamiltonian
    $H_0 = - X_1 - X_2$
    on two qubits, with matrix representation
    \begin{align*}
        H_0 =
        \begin{pmatrix}
            0 & -1 & -1 & 0 \\
            -1 & 0 & 0 & -1 \\
            -1 & 0 & 0 & -1 \\
            0 & -1 & -1 & 0
        \end{pmatrix}.
    \end{align*}
    The transition graph $G_{H_0}$ is the $4$-cycle
    \begin{align*}
        \ket{00} \leftrightarrow \ket{10} \leftrightarrow \ket{11}
    \leftrightarrow \ket{01} \leftrightarrow \ket{00}.
    \end{align*}
    
    Define the unitary $D = \diag(1, 1, 1, e^{-i\theta})$ for a parameter $\theta \in (0, 2\pi)$, and set $H_\theta := DH_0D^{\dagger}$.
    The matrix elements transform as $(H_\theta)_{xy} = D_{xx} \,(H_0)_{xy}\, D^\dagger_{yy}$, yielding:
    \begin{align*}
        H_\theta =
        \begin{pmatrix}
            0 & -1 & -1 & 0 \\
            -1 & 0 & 0 & -e^{i\theta} \\
            -1 & 0 & 0 & -e^{i\theta} \\
            0 & -e^{-i\theta} & -e^{-i\theta} & 0
        \end{pmatrix}.
    \end{align*}
    For $\theta \neq 0 \pmod{2\pi}$, the off-diagonal elements $(H_\theta)_{01,11} = -e^{i\theta}$ and $(H_\theta)_{10,11} = -e^{i\theta}$ are complex.
    So, $H_\theta$ is not stoquastic in the computational basis but $H_\theta$ has VGP by~\Cref{prop:vgp_equiv_def}.
    
    To see this explicitly, we compute the holonomy of the 4-cycle in $G_{H_0}$. Putting
    $\gamma = (\ket{00}, \ket{10}, \ket{11}, \ket{01}, \ket{00})$, we compute
    the holonomy 
    \begin{align}
      \Phi(\gamma)
      &= (-1)^4 \cdot \frac{(H_\theta)_{10,00}}{|(H_\theta)_{10,00}|}
      \cdot \frac{(H_\theta)_{11,10}}{|(H_\theta)_{11,10}|}\\
      &\qquad\qquad\cdot \frac{(H_\theta)_{01,11}}{|(H_\theta)_{01,11}|}
      \cdot \frac{(H_\theta)_{00,01}}{|(H_\theta)_{00,01}|} \nonumber\\
      &= (-1)^4 \cdot (-1)(-e^{-i\theta})(-e^{i\theta})(-1) = 1.
    \end{align}
    The holonomy is 1, confirming that $H_{\theta}$ indeed has VGP (\Cref{def:vgp}).
\qed \end{example}

\begin{example}[Twisted quantum doubles]
    A physically motivated source of VGP Hamiltonians comes from Ref.~\cite{Shackleton_2026}'s construction of sign-problem-free Hamiltonians for twisted quantum doubles.
    For a finite group $G$, these Hamiltonians are lattice gauge theory Hamiltonians on group-valued degrees of freedom.
    Explicitly, they take the form
    \begin{align*}
        H=-\sum_p B_p-\sum_v A_v, && A_v=\frac{1}{|G|}\sum_{g\in G} A_v^g.
    \end{align*}
    Here, $B_p$ is the diagonal projector onto configurations with no gauge flux through the plaquette $p$, while $A_v^g$ multiplies the group elements on the edges adjacent to $v$ by $g$ and attaches a phase determined by the chosen $3$-cocycle. The vertex terms are restricted to configurations with no flux on the neighboring plaquettes.
    These Hamiltonians include familiar topologically ordered models such as the double semion model, and more generally realize twisted quantum double (or Dijkgraaf--Witten) phases.
    
    We observe that these Hamiltonians have VGP.
    The diagonal terms considered in Ref. \cite{Shackleton_2026}, do not affect the transition graph (so they do not affect whether $H$ has VGP or not).
    The transition graph's off-diagonal edges come from local permutations multiplied by a phase. Let
    \begin{align*}
    \gamma=(z_0,z_1,\ldots,z_\ell=z_0)
    \end{align*}
    be any closed walk in the transition graph. Such a walk corresponds to exactly one of the nonzero operator strings appearing in the PMR closed-walk expansion---the permutation operators of the canonical PMR decomposition are matrix entry-wise disjoint, so each transition along the walk is generated by a unique PMR term---so the sign analysis of Ref.~\cite{Shackleton_2026} is precisely a computation of holonomies in the sense of \Cref{def:vgp}.
    Ref.~\cite{Shackleton_2026}'s analysis shows that every such string has vanishing geometric phase: configurations with gauge flux have no non-trivial cycles, while in the flat sector the only non-trivial cycles are global gauge transformations, whose constituent phases multiply to $1$ on a closed manifold.
    Hence
    \begin{align*}
        (-1)^\ell \prod_{j=1}^{\ell} \frac{\braket{ z_j|H|z_{j-1}}} {|\braket{ z_j|H|z_{j-1}}|} = 1.
    \end{align*}
    Thus every closed walk in the transition graph has trivial holonomy, so $H$ has VGP.
    Ref.~\cite{Shackleton_2026}'s construction therefore gives a natural, physically motivated family of local, generally non-stoquastic, VGP Hamiltonians. \qed
\end{example}

We will now present a scalable family of VGP Hamiltonians for which no efficiently realizable stoquastizing diagonal exists.

\begin{example}
\label{ex:n_qubit_family_non_local_stoquastizing _diagonal}
We provide an example of a VGP 2-local Hamiltonian
\begin{align*}
    H = Z_1 -(n-3)X_1 - X_1\sum_{i=2}^n Z_i
\end{align*}
where we take $n > 3$.
We will demonstrate that $H$ has VGP and find a stoquastizing diagonal $D$.
We will further show that the stoquastic $DHD^\dagger$ has $\operatorname{rank}_P(DHD^\dagger) = 2^{\Theta(n)}$, so $D$ is not efficiently realizable.
We will then show no other choices of $D$ can be efficiently realizable either.

Notice that $G_H$ is a partial matching of computational basis states; in particular, it contains no cycles. So $H$ trivially has VGP, but it is not stoquastic. 

The controlled-$Z$ gate $D = C^{(\otimes n)}Z_{1\ldots n}$ guarded on qubits $1, \ldots, n$ is a stoquastizing  diagonal for $H$:
\begin{align*}
    D&HD^\dagger = Z_1 -(n-3)X_1 - X_1 \sum_{i=2}^n Z_i - 4X_1 \prod_{i=2}^n \frac{I - Z_i}{2}\\
    &= Z_1 -(n-3)X_1 - X_1 \sum_{i=2}^n Z_i - 2^{3-n}X_1 \prod_{i=2}^n (I - Z_i).
\end{align*}
The Hamiltonian $DHD^\dagger$ is indeed stoquastic, but it is non-local.
In fact, the term
\begin{align*}
    -2^{3-n}X_1\prod_{i=2}^n (I - Z_i)
\end{align*}
introduces $2^{n-1}$ non-zero basis elements in the Pauli operator basis expansion of $DHD^\dagger$.
In fact, \emph{no} stoquastizing  diagonal for $H$ is efficiently realizable.
To see this, recall that conjugation by a diagonal unitary applies relative phases to existing off-diagonal matrix elements and can neither create nor destroy non-zero entries.
Every non-zero off-diagonal entry of $H$ must be mapped to the negation of its absolute value, so \emph{every} stoquastizing  diagonal $D'$ satisfies $D'H(D')^\dagger = \mathcal{S}(H) = H - 4X_1 \otimes \ket{1^{n-1}}\bra{1^{n-1}}$ exactly.
Since the Pauli-basis expansion of $\mathcal{S}(H)$ is unique and contains $2^{n-1}$ terms, no stoquastizing  diagonal maps the local description of $H$ to a polynomial-size local description; hence none is efficiently realizable.
We note that $C^{(\otimes n)}Z_{1\ldots n}$ itself has a polynomial-size quantum circuit and a very simple phase function ($\theta_z = \pi$ exactly on the all-ones string); what fails is the possibility of having a compact representation of $\mathcal{S}(H)$.
\qed \end{example}

\subsection{VGP for Weighted Hypercube with Disjoint PMR Supports}

\begin{proposition}
    \label{prop:disjoint_eff_vgp_recognition}
    Suppose $H = H_0 + \sum_i D_iX_i$ is a local real Hamiltonian decomposed into PMR form, with $H_0$ diagonal.
    If
    \begin{align}
        \label{eq:disjoint_pmr}
        \left[\bigcup_i \supp(D_i)\right] \, \cap \, \left[\bigcup_i \supp(X_i)\right] = \emptyset,
    \end{align}
    then $H$ has VGP.
\end{proposition}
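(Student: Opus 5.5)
We read $X_i$ as the single-qubit Pauli $X$ on qubit $i$, so that $\supp(X_i)=\{i\}$ and $G_H$ is a weighted subgraph of the hypercube; this matches the title of the subsection. The reading matters. If the $X_i$ were allowed to be arbitrary $X$-strings, then $-X_1-X_2+X_1X_2$ would satisfy \Cref{eq:disjoint_pmr} with constant $D_i$ and empty diagonal support. Yet it contains the triangle $00\to10\to11\to00$, whose holonomy is $(-1)^3(-1)(-1)(+1)=-1$, so it is not VGP.

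The plan is to exhibit a stoquastizing diagonal directly, which gives VGP by $(v)\Rightarrow(i)$ of \Cref{prop:vgp_equiv_def}. Let $F=\bigcup_i\supp(X_i)$ be the set of ``flippable'' qubits and $S$ its complement. By \Cref{eq:disjoint_pmr}, every $D_i$ acts only on qubits in $S$.

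First I observe that every edge of $G_H$ flips a single qubit in $F$. Hence the restriction $z|_S$ of a basis string is invariant along any walk in $G_H$, and $G_H$ splits into sectors labelled by $c\in\{0,1\}^{|S|}$. Within sector $c$, each $D_i$ acts as a fixed real scalar $d_i(c)$, because it depends only on $z|_S=c$. So the edge flipping qubit $i$ carries the same real weight $d_i(c)$ at every vertex of the sector; if $d_i(c)=0$, that edge is simply absent.

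Next I define the phase function
\begin{align*}
\theta_z=\pi\sum_{i\in F:\,d_i(z|_S)>0} z_i ,
\end{align*}
and set $D=\diag(e^{i\theta_z})$. For an edge $z\to z\oplus e_i$ with $i\in F$, the two endpoints share the same $c$, and their phases differ by exactly $\pi$ when $d_i(c)>0$ and by $0$ otherwise. Therefore $(DHD^\dagger)_{z\oplus e_i,z}=e^{\pm i\pi[d_i(c)>0]}\,d_i(c)=-|d_i(c)|\le 0$, while the diagonal $H_0$ is untouched. Thus $DHD^\dagger$ is stoquastic.

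As a cross-check via $(i)$, take any closed walk inside a sector. It must flip each qubit $i$ an even number of times, since it returns to its start. So $L$ is even and each sign $\operatorname{sgn} d_i(c)$ appears an even number of times, which gives $\Phi(\gamma)=1$.

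The only delicate point is the interpretation step: one has to make sure that the coefficients really are constant along walks and that the permutations are single-qubit flips. The former follows from the disjointness hypothesis together with diagonality of the $D_i$. The latter is what rules out odd cycles, as the counterexample above shows. Once both are in place, the remaining verification is routine.
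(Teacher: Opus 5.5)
Your proof is correct, but its main line differs from the paper's. The paper constructs nothing. It uses the fundamental-cycle characterization (ii) of \Cref{prop:vgp_equiv_def} and makes two observations. First, every edge is a single bit flip, so each PMR term $D_pX_p$ occurs an even number of times in any fundamental cycle. Second, by \Cref{eq:disjoint_pmr} the real weight of a $D_pX_p$ edge is constant on each connected component, so the repeated factors contribute no phase. Your closing cross-check is exactly this argument, phrased on closed walks directly from \Cref{def:vgp}. If anything it is more self-contained, since it needs no imported characterization.

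Your primary argument instead exhibits an explicit stoquastizing diagonal: apply $Z_i$ to each flippable qubit $i$ exactly on those spectator configurations $c$ with $d_i(c)>0$. You then use (v)$\Rightarrow$(i), which is elementary because the phases telescope around closed walks. The payoff is a concrete witness that ties into the rest of the paper. On the $n$-qubit example of \Cref{subsec:separating}, and on $H_\omega$ in \Cref{thm:vgp_hard_to_stoquastize}, your formula reproduces exactly the multi-controlled-$Z$ the paper uses. It also shows why such a $D$ need not be efficiently realizable even though its phases $\theta_z$ are easy to evaluate pointwise: the predicate $[d_i(c)>0]$ is in general a nonlocal function of the spectators, so $DHD^\dagger=\mathcal{S}(H)$ can have exponentially many Pauli terms. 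The paper's route is shorter and stays in the PMR-term language used elsewhere, but it yields no witness.

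Your interpretive remark is also sound. The paper's proof likewise assumes single-qubit flips. The construction $-(H_+X_1+H_-X_2+X_1X_2)$ of \Cref{app:proof_of_twolocalvgp_conphard} satisfies \Cref{eq:disjoint_pmr} with a two-qubit permutation, yet fails VGP on YES instances of $\PARTITION$. This confirms that the restriction to single-qubit $X_i$ is needed.
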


\begin{remark}
    One can efficiently recognize when a local Hamiltonian satisfies the assumptions of~\Cref{prop:disjoint_eff_vgp_recognition} and certify that it has VGP. \qed
\end{remark}
\begin{example}
    A Hamiltonian of this form need not be stoquastic.
    On $n+m$ qubits, write
    \begin{align*}
        H = \sum_{i=1}^n h_{i} X_i, && \supp(h_i) \cap [n] = \emptyset, && h_i \geq 0,
    \end{align*}
    where each $h_i$ is a diagonal operator acting on the last $m$ (spectator) qubits, entry-wise nonnegative, and with at least one strictly positive entry.
    The strictly positive entries of $h_i$ make $h_i X_i$ non-stoquastic, but $H$ meets the assumptions of~\Cref{prop:disjoint_eff_vgp_recognition}. \qed
\end{example}
\begin{proof}[Proof (of~\Cref{prop:disjoint_eff_vgp_recognition})]
    Two key facts following from the assumptions:
\begin{itemize}
    \item[$(i)$] The transition graph only consists of single-qubit $X$ (single bit flip) transitions, hence to form a fundamental cycle $F$, any PMR term of $F$ must appear as a factor of $F$ an even number of times.
    \item[$(ii)$] The weight of each $D_pX_p$ transition is completely independent of all qubits $q$ such that there is a PMR term $D_qX_q$ of $H$.
    Consequently, in a connected component $C \subseteq G_H$, the weight of the edge induced by the transition $D_pX_p\ket{z}$ is independent of choice of $\ket{z} \in C$. 
\end{itemize}
By the first fact, if a transition $D_pX_p$ is included in a fundamental cycle $F$, it must repeat an even number of times in $F$.
By the second fact, this means geometric phase contributed by the repeated factors of $D_pX_p$ in $F$ vanishes (i.e. $\equiv 0$), since $H$ is real.
Hence $H$ has VGP by the fundamental cycle characterization of VGP from \Cref{prop:vgp_equiv_def}.
\end{proof}

This provides an efficient way to screen for VGP in local Hamiltonians (with one-sided error).
Remarkably, as soon as one $D_i$ is allowed to be an arbitrary Ising Hamiltonian, recognizing stoquasticity in $H$ becomes $\coNP$-hard.
To see this, first recall that $D_iX_i$ is matrix entry-wise disjoint from the other PMR terms. 
Secondly, recall that deciding if an Ising Hamiltonian has a diagonal matrix entry $> 0$ is $\NP$-hard~\cite{Lucas_2014}.
The converse problem is $\coNP$-hard, so deciding if all of the matrix entries contributed by $D_iP_i$ are non-positive is $\coNP$-hard.
Thus deciding if $H$ is stoquastic is $\coNP$-hard in this case.

Considering the preceding discussion, a noteworthy consequence of~\Cref{prop:disjoint_eff_vgp_recognition} can be summarized in the following corollary:
\begin{corollary}
    \label{cor:recognition_vgp_easy_stoq_hard}
    There exist local Hamiltonians where recognition of the VGP property is efficient, but recognition of stoquasticity is $\coNP$-hard. \qed
\end{corollary}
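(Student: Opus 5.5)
The plan is to exhibit an explicit family of local Hamiltonians and verify the two halves of the statement separately. The family is the one just discussed: $H = H_0 + \sum_{i} D_i X_i$ on $n+m$ qubits. Here each flip qubit $i \in [n]$ carries a PMR term $D_i X_i$, and every $D_i$, as well as $H_0$, is a real diagonal operator supported only on the spectator qubits $\{n+1,\ldots,n+m\}$. One distinguished $D_1$ is allowed to be an arbitrary real Ising Hamiltonian on the spectators, $D_1 = \sum_{j<k} J_{jk} Z_j Z_k + \sum_j h_j Z_j$. This Hamiltonian is $3$-local, because $Z_jZ_kX_1$ acts on three qubits. By construction the supports of the $D_i$ lie in the spectator set and the supports of the $X_i$ lie in $[n]$, so condition~\eqref{eq:disjoint_pmr} holds.

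\textbf{Efficient VGP recognition.} Given the local description, I compute the canonical PMR decomposition term by term; since each term is local, this is polynomial time. I then check that $H$ is real, that every off-diagonal permutation is a single-qubit $X_i$, and that the union of the diagonal supports is disjoint from the union of the flip supports. Each check is a polynomial-time scan over polynomially many local terms. If all checks pass, \Cref{prop:disjoint_eff_vgp_recognition} certifies VGP. So on the whole class defined by~\eqref{eq:disjoint_pmr}, membership in VGP is decided in polynomial time; in fact every member has VGP, so the recognition question reduces to checking the syntactic conditions.

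\textbf{$\coNP$-hardness of stoquasticity.} I reduce from the $\coNP$-hard problem of deciding whether an Ising Hamiltonian $D_1$ satisfies $\braket{s|D_1|s} \le 0$ for all spectator strings $s$, which is the complement of the $\NP$-hard problem in~\cite{Lucas_2014}. Given such a $D_1$, I output $H = D_1 X_1$, optionally adding other terms with $D_i \le 0$. Because the canonical PMR permutations are entry-wise disjoint, each off-diagonal entry $\braket{x|H|x\oplus e_1}$ equals $\braket{s|D_1|s}$ for the spectator part $s$ of $x$, with no cancellation from other terms. Hence $H$ is stoquastic if and only if $D_1$ has no positive diagonal entry. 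The map is clearly polynomial time, and its image lies in the efficiently VGP-recognizable class, which gives the claim.

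\textbf{Main obstacle.} The one step needing care is pinning down the hardness source, that is, confirming that ``all diagonal entries of an Ising Hamiltonian are $\le 0$'' is genuinely $\coNP$-hard. I would establish this by reducing from MAX-CUT. Take the $\NP$-complete question ``is there a cut of size $\ge K$?'' and encode it as ``does $D_1 = c\,I - \sum_{(j,k)\in E} \tfrac{1}{2}(I - Z_j Z_k)$ have a positive entry?'' with $c = K - \tfrac12$. The identity term keeps the operator within the same locality, so $H$ stays $3$-local and real, and the reduction goes through.
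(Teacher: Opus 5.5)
Your proposal follows the paper's argument. You take the disjoint-support family of \Cref{prop:disjoint_eff_vgp_recognition}, whose membership is syntactically checkable and all of whose members have VGP, and let one $D_i$ be an arbitrary Ising Hamiltonian on the spectators. Entry-wise disjointness of PMR terms then reduces stoquasticity of $H$ to the $\coNP$-hard question of whether that Ising Hamiltonian has no positive diagonal entry. The only difference is that the paper cites~\cite{Lucas_2014} for the underlying $\NP$-hardness instead of proving it.

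The MAX-CUT step you add, however, fails as written because of a sign error. With $D_1 = (K-\tfrac12)I - \sum_{(j,k)\in E}\tfrac12(I - Z_jZ_k)$, you get $\braket{s|D_1|s} = K - \tfrac12 - \mathrm{cut}(s)$. This is positive iff some cut has size at most $K-1$, which is trivially true for $K \geq 1$ (take all spins equal). So your map sends every such instance to a non-stoquastic $H$, regardless of the MAX-CUT answer. Flip the sign and take $D_1 = \sum_{(j,k)\in E}\tfrac12(I - Z_jZ_k) - (K-\tfrac12)I$. Then a positive entry exists iff a cut of size $\geq K$ exists, so $H = D_1X_1$ is stoquastic iff the maximum cut is below $K$, which is $\coNP$-complete. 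With that fix, the rest of your argument goes through unchanged.
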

As a quick remark, off-diagonal Hamiltonians $H$ meeting the assumptions in \Cref{prop:disjoint_eff_vgp_recognition} are easy to diagonalize by applying a Hadamard to every qubit $i$ such that $D_iX_i$ is a PMR term of $H$.
However, as soon as $H_0$ contains diagonal terms such as $D_iZ_i$ then this is no longer necessarily the case.

\subsection{VGP Hamiltonians that are Hard to Stoquastize}

We aim to further solidify the separation between VGP and stoquastic Hamiltonians through a complexity-theoretic proof.
\begin{theorem}
    \label{thm:vgp_hard_to_stoquastize}
    There are VGP 3-local Hamiltonians $H$ that simultaneously satisfy:
    \begin{enumerate}
        \item[(i)] No efficiently realizable stoquastizing  diagonal for $H$ exists.
        \item[(ii)] Deciding if a layer of Hadamard gates can stoquastize $H$ is $\NP$-hard.
        \item[(iii)] $H$ belongs to a family of Hamiltonians that admit efficient recognition of the VGP property.
    \end{enumerate}
\end{theorem}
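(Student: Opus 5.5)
The plan is to build one explicit family that satisfies all three conditions at once, using \Cref{prop:disjoint_eff_vgp_recognition} as the common backbone. Every Hamiltonian in the family will have the form $H = H_0 + \sum_i D_i X_i$. Here $H_0$ is diagonal, each $D_i$ is a real diagonal operator of at most $2$-local $Z$-terms, and every $D_i$ is supported on a ``spectator'' register that is disjoint from the flipped qubits. Each term $D_iX_i$ is then at most $3$-local.

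For (iii), the family consists of all such $H$ satisfying \eqref{eq:disjoint_pmr}. Membership is checked in polynomial time by reading off the supports of the local terms. \Cref{prop:disjoint_eff_vgp_recognition} then certifies that $H$ has VGP, so (iii) is essentially free once the construction is fixed.

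For (i), I will import the mechanism of \Cref{ex:n_qubit_family_non_local_stoquastizing _diagonal}. I include the PMR term $D_1X_1$ with $D_1 = -(n-3) - \sum_{i\in B} Z_i$, where $B$ is a spectator block of $n-1$ qubits. This $D_1$ is negative everywhere except on the all-ones string of $B$, where it is positive.

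The argument then follows the example. A diagonal conjugation cannot create or destroy nonzero entries. So every stoquastizing diagonal $D'$ must satisfy $D'HD'^\dagger = \mathcal{S}(H)$, which is unique by \Cref{prop:vgp_equiv_def}$(vi)$. Therefore $\mathcal{S}(H) - H = -2\,(|D_1| - D_1)X_1$ plus contributions from the other terms. I will check that the $X_1$-sector contribution $-4X_1\otimes\ket{1^{n-1}}\!\bra{1^{n-1}}$ cannot cancel against any other sector, since the PMR terms are entry-wise disjoint. This contribution has Pauli rank $2^{n-1}$. Hence no stoquastizing diagonal is Pauli-tractable, and none is efficiently realizable.

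For (ii), I add a second gadget on a fresh flipped qubit $a$ and a fresh spectator register $C$. This gadget is a term $J\,X_a$, where $J=\sum_{jk}J_{jk}Z_jZ_k+\sum_j h_jZ_j + c$ is an arbitrary Ising form on $C$, giving $3$-local terms $Z_jZ_kX_a$. Its entries are nonpositive iff $J\le 0$ entrywise. Deciding that is equivalent to the complement of Ising ground-state feasibility, the same $\Lucas$-type hardness quoted before \Cref{cor:recognition_vgp_easy_stoq_hard}.

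I also add ``pinning'' terms, for example large fields $Z_j$ on the spectators and a field on $a$. Their role is to make every Hadamard layer that touches a spectator, or touches the qubit $1$ block, immediately produce positive off-diagonal entries. I will verify this by a direct single-term calculation: $H_j Z_j H_j = X_j$ with a positive coefficient, in a sector disjoint from the others. This forces the question ``does some Hadamard layer stoquastize $H$'' to reduce to a small number of candidate layers. On those candidates, stoquasticity holds iff the Ising form satisfies a sign condition.

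I will arrange the sign convention, for example by placing the Ising form on the side that becomes off-diagonal only after $H_a$ is applied, so that the existence question becomes ``does $J$ have a configuration with value of a prescribed sign.'' That makes the problem an $\NP$-hard existence question rather than a $\coNP$ one. If this orientation cannot be arranged, I will state hardness under the complement, that is, $\coNP$-hardness, and note that it implies $\NP$-hardness under Turing reductions.

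The main obstacle is the case analysis in (ii). I must show that the pinning terms really eliminate every other Hadamard subset. I must also show that this does not disturb the disjoint-support condition \eqref{eq:disjoint_pmr} needed for (iii), or the exponential-rank argument for (i). This amounts to checking that the pinning terms are diagonal and do not touch flipped qubits, which I will verify term by term.
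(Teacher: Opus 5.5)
Your treatment of (i) and (iii) is the paper's argument. The rank gadget is the paper's $H_\omega$ mechanism, and (iii) is \Cref{prop:disjoint_eff_vgp_recognition}. There is one small slip: $\mathcal{S}(D_1X_1)-D_1X_1=-(|D_1|+D_1)X_1$, not $-2(|D_1|-D_1)X_1$. Your stated end result $-4X_1\otimes\ket{1^{n-1}}\!\bra{1^{n-1}}$ is nevertheless correct.

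The genuine gap is in (ii), and it has two layers.
\begin{itemize}
\item \emph{Your pinning leaves no working layer.} The positive entry of $D_1X_1$ at $1^{n-1}$ survives unless the layer puts Hadamards on at least two qubits of $B$, or on qubit $1$ (the latter works only if there is no $Z_1$ field). You pin the spectators, including $B$, and the qubit-$1$ block. So the identity is the only candidate, and it always fails. On your family the decision problem is trivially NO.
\item \emph{Repairing this cannot give $\NP$-hardness.} Suppose you leave qubit $1$ free, the analogue of the paper's harmless Hadamard on $s$. You then have $O(1)$ efficiently listable candidate layers. For each fixed layer $U$, stoquasticity of $UHU^\dagger$ is a universal statement (every off-diagonal entry is $\le 0$). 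This holds wherever you place the Ising form relative to $H_a$. A disjunction of $O(1)$ such statements is a $\coNP$ predicate. So the problem restricted to your family lies in $\coNP$, and it cannot be $\NP$-hard under many-one reductions unless $\NP=\coNP$. The ``sign convention'' step therefore cannot be arranged: the existential quantifier of an $\NP$ problem can only come from the \emph{choice of layer}.
\item \emph{The fallback proves the wrong thing.} It gives $\coNP$-hardness, hence only Turing-$\NP$-hardness. What it measures is the hardness of recognizing stoquasticity of one fixed matrix, which is essentially the phenomenon behind \Cref{cor:recognition_vgp_easy_stoq_hard}. It does not measure the hardness of finding a stoquastizing layer, which is the sense of (ii) inherited from Ref.~\cite{Marvian_2019}.
\end{itemize}

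The missing idea is to make the Hadamard layer itself the witness. The paper reduces from 3-SAT as follows.
\begin{itemize}
\item Each variable gets two assignment qubits $p_i,n_i$.
\item Each clause gets a gadget $H_C=Z_{a_C}-2X_{a_C}-X_{a_C}(Z_{q(\ell_i)}+Z_{q(\ell_j)}+Z_{q(\ell_k)})$. It is stoquastized iff at least one literal qubit is Hadamarded, since $-X_{a_C}Z_q\mapsto -X_{a_C}X_q$.
\item Gadgets $H_i^{(\ge1)}$ and $H_i^{(\le1)}$ force exactly one of $p_i,n_i$ to be Hadamarded.
\end{itemize}
The $+Z$ fields pin only the flipped ancillas $a_C,u_i,v_i$. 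The witness Hadamards sit on spectator (diagonal-support) qubits, exactly the kind of qubit your pinning forbids. This is harmless for (iii), because \Cref{prop:disjoint_eff_vgp_recognition} is applied to $H$, not to $UHU^\dagger$.

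The rank gadget must also be made compatible with the witness layers, and this is what lets (i) and (ii) hold simultaneously. The paper controls $H_\omega$ by all $2n$ assignment qubits with offset $-(2n-2)$. Any layer encoding a satisfying assignment has $n\ge 2$ control Hadamards, so it stoquastizes $H_\omega$ automatically. The hardness then sits in the search over layers, as in Ref.~\cite{Marvian_2019}, rather than in checking a single matrix.
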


Ref.~\cite{Marvian_2019} establishes $(ii)$ as a standard for demonstrating the hardness of stoquastizing  a family of local Hamiltonians.
Thus showing $(i)$ and $(ii)$ demonstrates that there are VGP local Hamiltonians that are provably hard to stoquastize, and showing $(iii)$ establishes that the preceding properties do not contradict efficient recognition of VGP.

Before proceeding with the construction, we first review a related result from Ref.~\cite{Marvian_2019}.
Ref.~\cite{Marvian_2019} proves that, given a 3-local Hamiltonian $H$, deciding if a layer of Hadamard gates stoquastizes $H$ is an $\NP$-hard task.
They accomplish this by constructing a bijection between the clauses $C$ of a Boolean formula $\psi$ in conjunctive normal form---with at most 3 literals per clause---and a 3-qubit Hamiltonian gadget $H_C$.
Each $H_C$ is stoquastized after application of a Hadamard to at least one of the qubits of $\supp(H_C)$, and $H = \sum_C H_C$ is constructed in such a way that each $H_C$ can be simultaneously stoquastized by a choice of Hadamards if and only if $\psi$ is satisfiable.
The satisfiability of $\psi$ is $\NP$-hard to decide (this is known as the 3-SAT problem)~\cite{Garey_1979}.
However, the Hamiltonian that Ref.~\cite{Marvian_2019} constructs does not have VGP in general.
We aim to extend their result by constructing a VGP Hamiltonian with the same property, plus the additional properties $(i)$ and $(iii)$.

\begin{proof}[Proof (of~\Cref{thm:vgp_hard_to_stoquastize})]
    Fix a 3-SAT instance with Boolean formula $\psi$ in conjunctive normal form on variables $X = \{x_1, \ldots, x_n\}$ and $n \geq 2$.
    Let $\mathcal{C}$ be the set of clauses of $\psi$.
    For every variable $x_i$ appearing in $\psi$, introduce two ``Boolean assignment'' qubits $p_i$ and $n_i$.
    Let $q(x_i) = p_i$ and $q(\neg x_i) = n_i$.
    For every clause $C \in \mathcal{C}$ with literals $(\ell_i \vee \ell_j \vee \ell_k)$, introduce an ancilla qubit $a_C$.
    Construct the gadget Hamiltonian
    \begin{align*}
        H_C := Z_{a_C} - 2X_{a_C}  - X_{a_C} (Z_{q(\ell_i)} + Z_{q(\ell_j)} + Z_{q(\ell_k)})
    \end{align*}
    so that if at least one qubit $q(\ell_{(\,\cdot\,)})$ has a Hadamard applied to it by some layer of Hadamards $U$, then $U(H_C)U^\dagger$ is stoquastic.
    Notice that if $U$ applies a Hadamard gate to $a_C$, then $+Z_{a_C} \mapsto +X_{a_C}$ introduces a non-stoquastic term under $U$.
    
    Now we introduce a separate gadget $H_i^{(1)}$ that forces exactly one Hadamard to be applied on either $p_i$ or $n_i$ for every $H_i^{(1)}$ to be simultaneously stoquasticizable by Hadamards.
    For every variable $x_i$, introduce two new qubits $u_i$ and $v_i$ and add the following term
    \begin{align*}
        H_i^{(1)} &:= H_i^{(\geq1)} + H_i^{(\leq1)},\\
        H_i^{(\geq1)}&=[Z_{u_i} - X_{u_i} - X_{u_i}(Z_{p_i} + Z_{n_i})] \\
        H_i^{(\leq1)}&= [Z_{v_i} + X_{v_i}(-3 -Z_{p_i} - Z_{n_i} + Z_{p_i}Z_{n_i})].
    \end{align*}
    The term $H_i^{(\geq1)}$ requires that at least one of $p_i$ and $n_i$ have a Hadamard applied to them for $H_{i}^{(1)}$ to be stoquastized, while the term $H_i^{(\leq1)}$ enforces that at most one Hadamard may be applied to $p_i$ and $n_i$ for $H_{\text{i}}^{(1)}$ to be stoquastized.
    Additionally, applying a Hadamard to qubits $u_i$ or $v_i$ forces $H_i^{(1)}$ to be mapped to a non-stoquastic Hamiltonian, since the diagonal terms map as $+Z_{u_i} \mapsto +X_{u_i}$ and $+Z_{v_i} \mapsto +X_{v_i}$.
    So applying a Hadamard to these qubits is not helpful for the decision problem.

    Next, introduce a qubit $s$ and put
    \begin{align*}
        H_\omega &:=  - (2n-2)X_s -  X_s\sum_{x_i \in X} (Z_{p_i} + Z_{n_i})
    \end{align*}
    The Hamiltonian $H_\omega$ is stoquastized by a layer of Hadamards $U$ whenever $U$ applies Hadamards to at least two of the $2n$ qubits from $\{p_i, \, n_i\}_{x_i \in X}$ and to no other qubit in $\supp(H_\omega)$: with Hadamards on $r$ of the Boolean assignment qubits, the worst-case coefficient of a single flip of $s$ is $-(2n-2) + (2n - r) = 2 - r \leq 0$, while each Hadamarded control contributes only a $-X_sX_{q}$ term with non-positive entries.

    Finally, to complete the reduction, construct
    \begin{align*}
        H := H_\omega + \sum_{C \in \mathcal{C}} H_C + \sum_{x_i \in X} H_i^{(1)}.
    \end{align*}
    To recapitulate, $H_C$ requires that a Hadamard is applied to at least one qubit $q(\ell_j)$ where the clause is $C = (\ell_i \vee \ell_j \vee \ell_k)$.
    $H_i^{(1)}$ requires that exactly one of $p_i$ or $n_i$ has a Hadamard applied to it (analogous to the Law of Excluded Middle).
    $H_\omega$ is stoquastized whenever at least two qubits from $\{p_i, \, n_i\}_{x_i \in X}$ have Hadamards applied to them (and none of its other qubits do), which is always the case in a satisfying assignment.
    Thus $\psi$ is satisfiable if and only if $H$ can be simultaneously stoquastized by a layer of Hadamards.
    Construction of $H$ is efficient, so property $(ii)$ holds.
    
    $H_\omega$ seems redundant, but it is what encodes the impossibility of an efficiently realizable stoquastizing diagonal for $H$.
    The argument of \Cref{ex:n_qubit_family_non_local_stoquastizing _diagonal} applies verbatim to $H_\omega$ (now with $2n$ control qubits): $G_{H_\omega}$ is a partial matching, the single positive entry sits at the all-ones assignment of the controls, and every stoquastizing  diagonal is forced to produce $\mathcal{S}(H_\omega) = H_\omega - 4X_s \otimes \ket{1^{2n}}\!\bra{1^{2n}}$, whose unique Pauli expansion contains $2^{2n}$ terms; hence no efficiently realizable stoquastizing  diagonal exists for $H_\omega$.
    Recall that conjugation by a diagonal unitary only applies relative phases to individual off-diagonal matrix elements; it does not create or destroy non-zero matrix entries.
    Further, see that the matrix entries of $H_\omega$ do not overlap with those of any $H_C$ or $H_i^{(1)}$.
    So, a stoquastizing diagonal for $H$ would also be a stoquastizing diagonal for $H_\omega$.
    Hence no efficiently realizable stoquastizing  diagonal can exist for $H$ either.

    We remark that applying a Hadamard to $s$ leaves properties $(i)$ and $(ii)$ intact.
    Such a Hadamard makes $H_\omega$ diagonal (hence trivially stoquastic), but $s$ occurs in no other term, so the gadgets $H_C$ and $H^{(1)}_i$ are unchanged and $H$ is not stoquastized.
    Property $(i)$ concerns diagonal unitaries, and a Hadamard is not one; and in property~$(ii)$, any satisfying assignment already applies Hadamards to at least two of the $\{p_i, n_i\}$---stoquastizing $H_\omega$ as intended---while a Hadamard on $s$ never helps satisfy the clause or variable gadgets.
    Thus a Hadamard on $s$ is an always-available but inconsequential stoquastization of $H_\omega$ alone.

    Lastly, $H$ satisfies the assumptions of~\Cref{prop:disjoint_eff_vgp_recognition}, so it admits efficient recognition of the VGP property.
    This completes the proof.
\end{proof}
This result serves to demonstrate that the perspective of stoquasticity is too narrow.
Indeed, it is easy to verify that $H$ has VGP, but it is hard to stoquastize it---either by searching for an efficiently realizable diagonal unitary or an appropriate layer of Hadamards.
But the VGP promise---as we will find in~\Cref{sec:vgpma}---is all you need for ground-state energy estimation by a $\StoqMA$ verifier.
Therefore, it may be superfluous to attempt to stoquastize a non-stoquastic local Hamiltonian that obviously has VGP. 
This holds true whether your goal is a sign-problem-free classical simulation via PMR-QMC (\Cref{thm:qmc_boundary}) or an analysis of the ground-state on a quantum computer.

However, it can still be useful to consider whether a VGP Hamiltonian $H$ can be stoquastized by some convenient choice of unitary.
For example, if one can prove the existence of a stoquastizing  diagonal for $H$, then they have proven that $H$ has VGP (by \Cref{prop:vgp_equiv_def}).
We will explore the tractability of this task later in~\Cref{sec:diagonal_recovery}.

\subsection{Perron--Frobenius Structure for VGP Hamiltonians}

One of the fundamental structural consequences of stoquasticity is the Perron--Frobenius property: on each connected component of the transition graph, the ground-state may be chosen entry-wise nonnegative in the computational basis.
This subsection establishes that the Perron--Frobenius property extends exactly to VGP Hamiltonians up to diagonal unitary transformation, providing the invariant analog of the Perron--Frobenius theorem and clarifying the structure of ground-states in the VGP class.

\begin{theorem}[VGP Perron--Frobenius Structure]
\label{thm:vgp_pf}
Let $H$ be a VGP Hamiltonian with stoquastizing  diagonal
$D = \sum_z e^{i\theta_z}\ket{z}\!\bra{z}$.
On each connected component $C$ of $G_H$, there exists a ground-state $\ket{\psi_0}$ of the restriction $H|_C$ (a ground-state of $H$ itself whenever $C$ attains $\lambda_{\min}(H)$) whose amplitudes take the form
\begin{align*}
  \braket{z|\psi_0} = e^{-i\theta_z}\, a_z, && a_z \ge 0, \quad z \in C.
\end{align*}
Equivalently, $D\ket{\psi_0}$ can be chosen entry-wise nonnegative in the computational basis.
Moreover the ground energy of $H\vert_C$ is nondegenerate, $a_z > 0$ for all $z \in C$, and the phase pattern $\{e^{-i\theta_z}\}$ is unique up to a global phase; degeneracy of $\lambda_{\min}(H)$ can therefore arise only from distinct connected components attaining the same energy.
\end{theorem}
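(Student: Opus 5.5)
We reduce to the classical Perron--Frobenius theorem for irreducible nonnegative matrices, using the stoquastizing diagonal $D$ to pass between $H$ and the stoquastic matrix $H_{\stoq} = DHD^\dagger$, which coincides with $\mathcal{S}(H)$ by \Cref{prop:vgp_equiv_def}$(vi)$. Conjugation by $D$ fixes the diagonal and only rephases off-diagonal entries, so $G_{H_{\stoq}} = G_H$. Hence $H$ and $H_{\stoq}$ have the same connected components, and both are block diagonal with respect to the partition of $\{0,1\}^n$ into components. Since $D$ is diagonal, it restricts to a unitary $D_C$ on each block, and $H_{\stoq}|_C = D_C\, H|_C\, D_C^\dagger$. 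So it suffices to analyze a single component $C$ and then translate back through $D_C$.

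Fix a component $C$ and choose a shift $c \ge \max_{z\in C} H_{zz}$. The matrix $M := cI - H_{\stoq}|_C$ is Hermitian, entry-wise nonnegative, and its off-diagonal support is exactly the edge set of the connected graph $G_H|_C$. Hence $M$ is irreducible. The Perron--Frobenius theorem for irreducible nonnegative matrices then gives three facts. First, the spectral radius $\rho(M)$ is a simple eigenvalue. Second, its eigenvector $a$ can be chosen with $a_z>0$ for all $z\in C$. Third, because $M$ is Hermitian, $\rho(M)$ is its largest eigenvalue.

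We must check that $\rho(M)$ really is the top eigenvalue, not $-\rho(M)$. The Rayleigh quotient at $a$ equals $\rho(M)$, which settles this. Therefore $\lambda_{\min}(H_{\stoq}|_C) = c-\rho(M)$ is nondegenerate, with eigenvector $a>0$. Cospectrality of $H|_C$ and $H_{\stoq}|_C$ transfers nondegeneracy to $H|_C$, and the corresponding eigenvector is $\ket{\psi_0} = D_C^\dagger a$. Explicitly, $\braket{z|\psi_0} = e^{-i\theta_z} a_z$, and $D\ket{\psi_0} = a$ is entry-wise nonnegative. If $C$ attains $\lambda_{\min}(H)$, then extending $\ket{\psi_0}$ by zero outside $C$ gives a ground-state of $H$, by the block structure.

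For uniqueness of the phase pattern, suppose $D'$ is another stoquastizing diagonal. Repeating the argument with $D'$ produces a ground vector $D'^\dagger_C a'$ with $a'>0$. Since the ground space of $H|_C$ is one-dimensional, $D_C^\dagger a = \lambda D'^\dagger_C a'$ for some $\lambda \in \C$. Comparing moduli and phases entrywise, and using $a,a'>0$, we get $e^{-i\theta_z} = \mu\, e^{-i\theta'_z}$ for all $z\in C$, with $|\mu|=1$. So the pattern is unique up to a global phase on $C$.

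Alternatively, uniqueness follows directly from the edge constraints. The relation $e^{i(\theta_x-\theta_y)}H_{xy} = -|H_{xy}|$ fixes every phase difference along each edge, and connectivity of $C$ then fixes all phases up to one constant. Finally, the spectrum of $H$ is the union of the spectra of its blocks, and each block's minimum is simple. Hence any degeneracy of $\lambda_{\min}(H)$ must come from two or more distinct components attaining the same minimum.

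The main obstacle is only bookkeeping: confirming that $D$ preserves the component decomposition so that restriction commutes with conjugation, and ensuring irreducibility is stated for $H|_C$ as a block rather than for $H$ globally. Everything else is the standard Perron--Frobenius theorem.
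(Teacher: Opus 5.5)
Your proposal is correct and takes the same route as the paper: shift the stoquastic conjugate $DHD^\dagger$ on each connected component to an entrywise nonnegative matrix, apply Perron--Frobenius, and pull the positive eigenvector back through $D^\dagger$. You also make explicit several points the paper leaves implicit: irreducibility from the connectivity of $C$, Hermiticity ensuring that $\rho(M)$ is the top eigenvalue, and the one-dimensional-ground-space (or edge-constraint) argument for uniqueness of the phase pattern.
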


\begin{proof}
Since $H_{\stoq} := DHD^{\dagger}$ is stoquastic, the matrix $cI - H_{\stoq}$ is entry-wise nonnegative on each connected component for $c \geq \max_z \braket{z|H|z}$.
By the Perron--Frobenius theorem for nonnegative matrices~\cite{Horn_Johnson_2013}, its spectral radius is an eigenvalue with a nonnegative eigenvector---a lowest-energy state of $H_{\stoq}$ restricted to the connected component; the eigenvector of any connected component is entry-wise positive and unique up to scaling.
Translating back: on each connected component $C$, there exists a ground-state $\ket{\phi_0}$ of $H_{\stoq}\vert_C$ with $\braket{z|\phi_0} \ge 0$ for all $z \in C$.

Since $H$ and $H_{\stoq} = DHD^{\dagger}$ are unitarily equivalent, defining $\ket{\psi_0} := D^{\dagger}\ket{\phi_0}$ yields a ground-state of $H\vert_C$ with amplitudes
\begin{align*}
  \braket{z|\psi_0} = e^{-i\theta_z}\braket{z|\phi_0} = e^{-i\theta_z} a_z, && a_z \ge 0.
\end{align*}
The nondegeneracy, positivity, and uniqueness claims follow from the simplicity of the Perron root on each connected component, applied to $H_{\stoq}$.
\end{proof}

Because diagonal conjugation by $D$ preserves eigenvalues and maps eigenvectors by a fixed phase pattern, any spectral result for stoquastic Hamiltonians that relies solely on Perron--Frobenius structure extends in principle to VGP Hamiltonians after conjugation by $D$.

However, as we demonstrated in \Cref{ex:n_qubit_family_non_local_stoquastizing _diagonal}, $D$ may not be efficiently realizable, so conjugation by $D$ may not be feasible in general.

\section{PMR Locality \& Efficient stoquastization}
\label{sec:pmr_locality}
\label{sec:efficient_stoquastization}
In this section, we introduce PMR locality, which is a notion of locality rooted in the PMR framework (\Cref{def:pmr}).
We then demonstrate that PMR locality is a weaker restriction than geometric locality but stronger than $k$-locality.
We provide examples of Hamiltonians that are of interest in the literature to make the separation between these notions of locality concrete (\Cref{ex:central_spin_model}, \Cref{ex:pmr-local_not_k-local_yes}).
We then show that PMR-local Hamiltonians that are promised to have VGP can be efficiently transformed into a stoquastic local Hamiltonian (\Cref{cor:efficient_stoquastization_pmr_local}).
We will use this important feature of PMR local Hamiltonians several times throughout this work; particularly when discussing the complexity of decision problems related to VGP and its relationship to stoquasticity.

\subsection{PMR Locality}

\begin{definition}
    \label{def:pmr_locality}
    A Hamiltonian $H = H_0 + \sum_i D_iP_i$ cast in PMR form is \emph{$k$-PMR-local} if $|\supp(D_iP_i)| \leq k$ for every $D_iP_i$.
    If $k = O(1)$, we may say that $H$ is \emph{PMR-local}. \qed
\end{definition}
\noindent We emphasize that the PMR locality condition holds \emph{only} for the purely off-diagonal PMR terms of $H$. In general, $\supp (H_0)$ may scale linearly with $n$ but $H$ can still be PMR-local.
\begin{remark}
    \label{rem:pmr_local_poly_many_entries}
    Every PMR-local $H$ has $\poly(n)$ distinct off-diagonal matrix-element entries.
    So, not only do PMR-local Hamiltonians offer us efficient computability of the matrix entries $H_{xy}$ (as do any local Hamiltonian), but we are guaranteed that there are only polynomially many distinct off-diagonal matrix-element values.
    \qed
\end{remark}

PMR locality is the premise of several of this work's complexity results regarding VGP and its relationship to stoquasticity.
The above definitions will enable us to prove that
\begin{enumerate}
    \item PMR-local Hamiltonians that are promised to have VGP can be efficiently transformed into a cospectral stoquastic Hamiltonian (\Cref{cor:efficient_stoquastization_pmr_local}).
    \item One can efficiently decide if a local block-diagonal stoquastizing  diagonal exists for PMR-local Hamiltonians (certifying VGP in the YES case) (\Cref{thm:dstoq_efficient}).
    \item The frustration-free local Hamiltonian problem is in $\MA$ for PMR-local Hamiltonians that are promised to have VGP (\Cref{thm:ff-pmr-local-lh-ma}).
\end{enumerate}

Before we proceed with proving these results, we would like to provide further motivation for the use of PMR locality.
We will prove that PMR locality sits neatly between two well-established notions of Hamiltonian locality.
\begin{proposition}
    \label{prop:geo_loc_implies_pmr_loc}
    If $H = \sum_{\alpha \in A} h_\alpha$ is geometrically local, then $H$ is PMR-local.
\end{proposition}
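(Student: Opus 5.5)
The plan is to show directly that every off-diagonal PMR term of a geometrically local $H$ has bounded support. First, each local term $h_\alpha$ has support of constant size. By \Cref{def:geometric_locality}, every pair of qubits in $\supp(h_\alpha)$ lies within lattice distance $R$. So $\supp(h_\alpha)$ is contained in a ball of radius $R$ in a $p$-dimensional lattice, which has at most $(2R+1)^p = O(1)$ sites.

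Next, expand each $h_\alpha$ in the canonical PMR form. Each summand is $Z^{\beta}X^{\gamma}$ with $\supp(\beta),\supp(\gamma)\subseteq\supp(h_\alpha)$. Group all summands of $H$ sharing the same off-diagonal pattern $\gamma\neq 0$. The canonical PMR term is then
\begin{align*}
D_\gamma P_\gamma = \Bigl(\sum_{\alpha,\beta} c_{\alpha,\beta}\, Z^{\beta}\Bigr) X^{\gamma}.
\end{align*}
Its support is $\supp(\gamma)$ together with the union of all $\supp(\beta)$ over the terms $\alpha$ contributing to this $\gamma$.

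The key step is to bound this union. Every contributing $h_\alpha$ has $\supp(\gamma)\subseteq\supp(h_\alpha)$, and $\supp(\gamma)$ is nonempty. Fix any qubit $a\in\supp(\gamma)$. Then every contributing $h_\alpha$ contains $a$, so by geometric locality all of its qubits lie within distance $R$ of $a$. Hence the whole union sits in the lattice ball of radius $R$ around $a$, giving
\begin{align*}
|\supp(D_\gamma P_\gamma)| \le (2R+1)^p = O(1).
\end{align*}
This holds uniformly in $\gamma$, so $H$ is $(2R+1)^p$-PMR-local in the sense of \Cref{def:pmr_locality}. Diagonal terms ($\gamma=0$) are absorbed into $H_0$ and impose no constraint, as the remark after \Cref{def:pmr_locality} allows.

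The main obstacle is the bookkeeping between the given decomposition $\{h_\alpha\}$ and the canonical PMR grouping. For $k$-local but non-geometric Hamiltonians, many different terms can share the same $X^\gamma$ with far-apart diagonal controls; the central spin model is an example. It is exactly the anchoring qubit $a\in\supp(\gamma)$ together with the distance bound that rules this out. I would also check a degenerate case: distinct $h_\alpha$ whose off-diagonal parts cancel, or $\gamma$ patterns that arise only after regrouping. These can only shrink supports, so the bound is unaffected. Finally, the constant depends on the lattice ball size, which I take as $(2R+1)^p$ for the hypercubic lattice; a general lattice of dimension $p$ gives a different but still $O(1)$ ball size.
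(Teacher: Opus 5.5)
Your proof is correct and takes essentially the paper's route: write each off-diagonal PMR term as a sum of contributions from the given local terms, then bound the union of their supports using geometric locality. Your anchoring step (every contributing $h_\alpha$ contains a fixed qubit $a\in\supp(X^{\gamma})$, so the whole union lies in one radius-$R$ ball) is exactly the justification the paper's one-line bound $\bigl|\bigcup_{\beta\in B_i}\supp(h_\beta)\bigr| = O(1)$ leaves implicit, and grouping Pauli strings by $X^{\gamma}$ avoids the paper's tacit assumption that each $h_\alpha$ is already a single diagonal-times-permutation term.
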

\begin{proof}
    Each PMR term $D_iP_i$ of $H$ may be decomposed as
    \begin{align*}
        D_iP_i = \sum_{\beta \in B_i} h_\beta
    \end{align*}
    for some $B_i \subset A$. 
    By definition of geometric locality (\Cref{def:geometric_locality}), there exists an $O(1)$-dimensional lattice $\Lambda$ over $\supp(H)$ so that for every $\alpha$,
    \begin{align*}
        \{i,j\} \subseteq \supp(h_\alpha) && \implies &&d_\Lambda(i,\, j) = O(1).
    \end{align*}
    Then
    \begin{align*}
        \left|\supp(D_iP_i)\right| = \left|\bigcup_{\beta \in B_i} \supp(h_\beta)\right| = O(1).
    \end{align*}
    So, $H$ is PMR-local.
\end{proof}
\noindent The converse of \Cref{prop:geo_loc_implies_pmr_loc} is not true. To make this separation explicit, we provide an example of a well-studied model in quantum information theory that fails to be geometrically local, but is indeed PMR-local.
\begin{example}
    \label{ex:central_spin_model}
    The Central Spin Model is fundamental to understanding decoherence in a large class of potential realizations of qubits~\cite{Uhrig_2014}. Its Hamiltonian 
    \begin{align*}
        H = \sum_{k=2}^n J_k \left( X_1 X_k + Y_1Y_k + Z_1Z_k \right)
    \end{align*}
    is not geometrically local. However, if we express $H$ in PMR form:
    \begin{align*}
        H = \sum_{k=2}^n J_k\,Z_1Z_k + \sum_{k=2}^n \underbrace{J_k(I - Z_1Z_k)}_{=:\,D_k} \underbrace{X_1X_k}_{=: \,P_k},
    \end{align*}
    then $|\supp(D_kP_k)| = 2$. So, $H$ is PMR-local.
\qed \end{example}

\noindent All $k$-PMR-local Hamiltonians are trivially $k$-local, but, as the next example demonstrates, not all $k$-local Hamiltonians are PMR-local.
\begin{example}
    \label{ex:pmr-local_not_k-local_yes}
    The Hamiltonian
    \begin{align*}
        H = \sum_{i=2}^n Z_iX_1 
    \end{align*}
    is 2-local, but it has a single PMR term with support on $n$ qubits.
    So, $H$ is not PMR-local.
\qed \end{example}

\noindent Thus the hierarchy between these notions of locality is
\begin{center}
    geometric locality $\subset$ PMR locality $\subset$ $k$-locality.
\end{center}
\begin{remark}
    All results that we prove for PMR-local Hamiltonians extend to geometrically local Hamiltonians.
    \qed
\end{remark}

One useful property of PMR locality is that it admits efficient recognition of stoquasticity, which will be relevant to our discussion of the complexity of recognizing VGP in PMR-local Hamiltonians.
\begin{proposition}
    \label{prop:efficient_stoq_check_pmr_local}
    If $H = \sum_{i=1}^m D_iP_i$ is $k$-PMR-local, then one can decide if $H$ is stoquastic in time $O(2^k m)$.
\end{proposition}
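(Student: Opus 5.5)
The idea is to exploit two facts. The canonical PMR decomposition makes the permutation parts $P_i$ entry-wise disjoint. PMR locality then confines every off-diagonal entry of $D_iP_i$ to a constant-size window of qubits. Stoquasticity of $H$ is a condition only on off-diagonal entries, and diagonal entries are unconstrained. So I would first split off the diagonal PMR term (the one with $P_i = I$, i.e.\ $H_0$) and ignore it entirely; this matters because $\supp(H_0)$ may be large while the remaining terms stay $k$-PMR-local.

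Fix an off-diagonal PMR term $D_iP_i$ with $P_i \neq I$. Entry-wise disjointness of the $P_i$ gives, for every pair $x \neq y$ with $\braket{x|H|y}\neq 0$, a unique $i$ such that $\braket{x|H|y} = \braket{x|D_iP_i|y}$. Hence $H$ is stoquastic if and only if each off-diagonal term $D_iP_i$ has all matrix entries real and nonpositive. This reduces the global problem to $m$ independent local checks.

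Next I would show each local check costs $O(2^k)$. Let $S_i = \supp(D_iP_i)$ with $|S_i| \le k$. Then $D_iP_i = A_i \otimes I_{\overline{S_i}}$ for some operator $A_i$ acting on the at most $2^k$ basis states of the qubits in $S_i$. Because $P_i$ is a tensor product of single-site permutations, $A_i = \tilde D_i \tilde P_i$, where $\tilde P_i$ is a permutation on the $2^{|S_i|}$ local basis states. Each row of $\tilde D_i\tilde P_i$ has exactly one potentially nonzero entry, namely $\braket{z_S|\tilde D_i|z_S}$ at column $\tilde P_i^{-1} z_S$. There are therefore at most $2^k$ relevant entries. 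Each entry of $H$ restricted to this term equals one of these values, so all entries are nonpositive reals if and only if these at most $2^k$ numbers are.

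Two points need care. First, $P_i$ must act nontrivially as a permutation: on qubits where $P_i$ is locally the identity, the pair $(x,y)$ can still be off-diagonal because of flips elsewhere in $S_i$, which is why I check the whole block rather than only the flipped qubits. Second, computing $\tilde D_i$ itself: $D_i$ is given as a sum of $k$-local diagonal terms $B^{(i)}_j$, all supported inside $S_i$. I expect evaluating $\braket{z_S|\tilde D_i|z_S}$ to take $O(\#\{j\})$ work per $z_S$ with the phase arithmetic modulo $m$ from the input model. I would argue the number of such summands per PMR term is $O(1)$, or absorb it into the input size so that the total stays $O(2^k m)$ up to the cost of reading the input.

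Summing over the $m$ terms gives $O(2^k m)$ time, and the algorithm accepts if and only if every check passes.
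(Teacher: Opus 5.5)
Your proof is correct and follows the paper's argument: entry-wise disjointness of the PMR terms reduces stoquasticity of $H$ to independent checks on each off-diagonal term, and $k$-PMR locality leaves at most $2^k$ distinct entries per term to test for being $\le 0$. Setting aside the diagonal part $H_0$ and accounting for the cost of evaluating $\tilde D_i$ (which is $O(1)$ per entry when $k=O(1)$) are refinements the paper leaves implicit.
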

\begin{proof}
    By $k$-PMR locality, each $D_iP_i$ takes on at most $2^k$ non-zero matrix entries. Additionally, for every $i \neq j$, $D_iP_i$ and $D_jP_j$ are matrix entry-wise disjoint.
    Therefore, to check for stoquasticity of $H$, one can check, for every off-diagonal PMR term $D_iP_i$ and every assignment $z \in \{0,1\}^k$ of the qubits in $\supp(D_iP_i)$, whether $\bra{P_i\, z}D_iP_i\ket{z} \leq 0$ (diagonal entries are unconstrained by stoquasticity).
    This gives a total run time of $O(2^k m)$.
\end{proof}

\subsection{Efficient Stoquastization of PMR-local Hamiltonians Under a VGP Promise}

We introduce the following procedure for efficiently constructing the stoquastic proxy $\mathcal{S}(\, \cdot\,)$ (\Cref{def:stoq_proxy_map}) on PMR-local inputs.
\begin{proposition}
    \label{prop:efficient_stoquastization_pmr_local}
    Assume $H = H_0 + \sum_{i=1}^m D_iP_i$ is $k$-PMR-local. Then one can obtain a $k$-PMR-local Pauli-basis expansion of $\mathcal{S}(H)$ in time $O(2^k m)$.
\end{proposition}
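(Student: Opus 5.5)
The plan is to build $\mathcal{S}(H)$ one PMR term at a time, using that $\mathcal{S}$ acts entrywise and that the PMR terms have entrywise disjoint supports. Since $H_0$ is diagonal and $\mathcal{S}$ fixes the diagonal, $\mathcal{S}(H_0)=H_0$. Each off-diagonal matrix entry of $H$ belongs to exactly one term $D_iP_i$ (the $P_i$ are entrywise disjoint). Hence
\begin{align*}
    \mathcal{S}(H) = H_0 + \sum_{i=1}^m \mathcal{S}(D_iP_i),
\end{align*}
where $\mathcal{S}(D_iP_i)$ means negating the modulus of each entry of $D_iP_i$. It remains to show that each $\mathcal{S}(D_iP_i)$ can be written as $\tilde D_iP_i$, with $\tilde D_i$ diagonal and supported on $\supp(D_iP_i)$, and that its Pauli expansion can be computed in time $O(2^k)$.

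Fix $i$ and let $S_i=\supp(D_iP_i)$, so $|S_i|\le k$. Then $D_iP_i = d_i\otimes \pi_i \otimes I_{\overline{S_i}}$, where $\pi_i$ is a permutation on the qubits in $S_i$ and $d_i$ is diagonal on $S_i$. In the canonical qubit PMR, $\pi_i$ is a tensor product of $X$'s and $I$'s on $S_i$. If some qubit of $S_i$ lies only in the diagonal part, I would absorb it into $d_i$ as a local control. For each assignment $z\in\{0,1\}^{|S_i|}$, the entry $\bra{\pi_i z}D_iP_i\ket{z}$ equals $d_i(\pi_i z)$, which is computed in $O(1)$ time from the input description. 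Define
\begin{align*}
    \tilde d_i(y) := -\,|d_i(y)| \quad \text{for all } y\in\{0,1\}^{|S_i|}.
\end{align*}
Then $\mathcal{S}(D_iP_i)=(\tilde d_i\otimes\pi_i)\otimes I_{\overline{S_i}}$. This step uses that the matrix entries of $D_iP_i$ do not depend on the qubits outside $S_i$, which is exactly what the support condition guarantees. I would check this explicitly: identical entries copied across the rest of the tensor product are all mapped identically by $\mathcal{S}$.

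Next I compute the Pauli expansion. Expand $\tilde d_i$ in the $Z$-string basis on $S_i$ using the Walsh--Hadamard transform:
\begin{align*}
    \tilde d_i = \sum_{\beta} c_\beta Z^\beta, \qquad c_\beta = 2^{-|S_i|}\sum_y (-1)^{\beta\cdot y}\tilde d_i(y).
\end{align*}
Multiplying by $X^{\alpha_i}$ gives the expansion $\sum_\beta c_\beta Z^\beta X^{\alpha_i}$, which rewrites into Pauli strings with $i$-phases (since $ZX=iY$). Every resulting string is supported on $S_i$, so $k$-PMR-locality is preserved, and the permutation part $P_i$ is unchanged. Evaluating the $2^{|S_i|}$ entries costs $O(2^k)$ per term. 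The transform costs $O(k2^k)$ if done naively, which is $O(2^k)$ for constant $k$. Over all $m$ terms the total is $O(2^k m)$, with $H_0$ copied unchanged.

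The main obstacle is the bookkeeping that justifies the decomposition $\mathcal{S}(H)=H_0+\sum_i\mathcal{S}(D_iP_i)$. This needs (a) entrywise disjointness of the canonical PMR terms from each other and from the diagonal, and (b) each $P_i\neq I$ having no fixed points, so that $D_iP_i$ contributes only off-diagonal entries. For qubits, $P_i=X^{\alpha_i}$ with $\alpha_i\neq 0$ guarantees (b). With these in place, the remaining steps are a local computation.
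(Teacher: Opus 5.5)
Your proposal is correct and follows essentially the same route as the paper. You split $\mathcal{S}(H)=H_0+\sum_i\mathcal{S}(D_iP_i)$ using entrywise disjointness of the PMR terms, replace the at most $2^k$ local entries of each $D_iP_i$ on $S_i=\supp(D_iP_i)$ by $-|\cdot|$, re-expand in the Pauli basis, and note that the support (and the permutation $P_i$) does not change. Your explicit Walsh--Hadamard step, with its $O(k2^k)$ cost per term (which is $O(2^k)$ for constant $k$), fills in what the paper asserts in one line. One notational fix: $d_i\otimes\pi_i$ should be the product $d_i\pi_i$ on $S_i$, since both factors act on the same qubits.
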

\begin{proof}
    Since the PMR terms of $H$ are entry-wise disjoint, we have
    \begin{align*}
        \mathcal{S}(H) = H_0 + \sum_{i=1}^m \mathcal{S}(D_iP_i).
    \end{align*}
    It is sufficient to show that the Pauli basis expansion of $\mathcal{S}(D_iP_i)$ is computable in time $O(2^k)$.
    
    Put $S_i := \supp(D_iP_i)$. Define the new operator $(D_iP_i)^{(\stoq)}$ on inputs $(x,y) \in \left(\{0,1\}^{|S_i|}\right)^2$ as
    \begin{align*}
        (D_iP_i)^{(\stoq)}_{xy} :=
            -|(D_iP_i)_{xy}|.
    \end{align*}
    Since $|S_i| = k$ and $D_iP_i$ has at most $2^k$ non-zero matrix entries, $(D_iP_i)^{(\stoq)}$ is constructible in $2^k$ steps. Similarly, its Pauli-basis expansion is computable in $O(2^k)$ time. 
    
    Notice that $(D_iP_i)^{(\stoq)} = \mathcal{S}(D_iP_i)$.
    So, a Pauli-basis expansion for $\mathcal{S}(H)$ is computable in time 
    \begin{align*}
        \sum_{i=1}^m 2^{|S_i|} \leq 2^km = O(2^k n^k)
    \end{align*}
    since locality gives $m = O(n^k)$. By the above construction, the support of $D_iP_i$ does not grow under the map $\mathcal{S}$. Hence $\supp (\mathcal{S}(D_iP_i)) \subseteq \supp (D_iP_i)$, so $\mathcal{S}(H)$ is $k$-PMR-local.
\end{proof}
\begin{corollary}
    \label{cor:efficient_stoquastization_pmr_local}
    If a PMR-local Hamiltonian $H$ is promised to have VGP, then one can efficiently construct a stoquastic PMR-local Hamiltonian $\mathcal{S}(H)$ that is cospectral to $H$.
\end{corollary}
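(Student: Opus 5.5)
The plan is to combine \Cref{prop:efficient_stoquastization_pmr_local} with the equivalence $(i)\iff(vi)$ of \Cref{prop:vgp_equiv_def}. Given a PMR-local $H = H_0 + \sum_{i=1}^m D_iP_i$ with the VGP promise, we first bring it into its canonical PMR form. Each local term $h_\alpha$ is split into products of diagonals and Pauli-$X$ strings, and terms sharing the same permutation $P_i$ are grouped. Since $H$ is given as a sum of $\poly(n)$ local terms, this takes polynomial time. By $k$-PMR locality with $k=O(1)$, every group has support of size at most $k$.

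Next we apply \Cref{prop:efficient_stoquastization_pmr_local}, which outputs a Pauli-basis expansion of $\mathcal{S}(H) = H_0 + \sum_i \mathcal{S}(D_iP_i)$ in time $O(2^k m)$. This is polynomial because $m = \poly(n)$ and $k = O(1)$. By the same proposition, the output is $k$-PMR-local. It is also stoquastic by construction: \Cref{def:stoq_proxy_map} replaces every off-diagonal entry with $-|\cdot|$.

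Cospectrality is where the VGP promise is used, and it is the only step needing more than bookkeeping. Under the promise, \Cref{prop:vgp_equiv_def}$(vi)$ supplies a diagonal unitary $D$ with $\mathcal{S}(H) = DHD^\dagger$. Hence $\mathcal{S}(H)$ is unitarily equivalent to $H$, and the two share spectrum with multiplicities. The construction never needs to compute $D$. This matters because $D$ may fail to be efficiently realizable, as in \Cref{ex:n_qubit_family_non_local_stoquastizing _diagonal}.

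I expect the main obstacle to be justifying that the entrywise map commutes with the grouping, that is, $\mathcal{S}(H) = H_0 + \sum_i \mathcal{S}(D_iP_i)$. This follows because the canonical PMR terms are entry-wise disjoint, with each off-diagonal entry coming from exactly one $P_i$, and because $H_0$ contributes only to the diagonal, which $\mathcal{S}$ leaves unchanged. A secondary point is that without the promise the output is still a stoquastic PMR-local operator, but it need not be cospectral to $H$. So the promise is essential only for the spectral claim, not for the efficiency of the construction.
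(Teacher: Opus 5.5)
Your proposal is correct and matches the paper's proof. Efficiency and $k$-PMR-locality come from \Cref{prop:efficient_stoquastization_pmr_local}, and cospectrality comes from the diagonal unitary $D$ with $\mathcal{S}(H)=DHD^\dagger$ that \Cref{prop:vgp_equiv_def} supplies under the VGP promise, without $D$ ever being computed.
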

\begin{proof}
    Efficient construction of the PMR-local Pauli-basis expansion of $\mathcal{S}(H)$ is~\Cref{prop:efficient_stoquastization_pmr_local}. By~\Cref{prop:vgp_equiv_def}, since $H$ has VGP there is a diagonal unitary $D$ with $\mathcal{S}(H) = DHD^\dagger$; hence $\mathcal{S}(H)$ is cospectral to $H$.
\end{proof}
In other words, the above corollary states that a PMR-local $H$ may be efficiently stoquastized so long as it is promised to have VGP.
An immediate consequence of this is that the local Hamiltonian ground-state energy problem for VGP-promised PMR-local Hamiltonians is in $\StoqMA$~\cite{Bravyi_2006}. 
We will touch on this point again in the next section, \Cref{sec:vgpma}, when we prove that the same holds for \emph{all} local Hamiltonians promised to have VGP (\Cref{thm:vgp_stoqma}).

We emphasize that constructing a stoquastic proxy for a PMR-local $H$ is efficient when $H$ is \emph{promised to have} VGP. 
A natural follow-up question is: what complexity classification does VGP recognition fall under? 
In~\Cref{sec:vgp_complexity}, we will address this question for the PMR-local and $k$-local cases.

We now move on to present the main result of the paper.

\section{The Complexity Class $\VGPMA$}
\label{sec:vgpma}

We introduce some preliminary definitions and existing results that are necessary for our discussion of Hamiltonian complexity.
For the precise definition of $\StoqMA$, see Appendix~\ref{app:complexity_class_definitions}.

\begin{definition}
\label{def:local_hamiltonian_problem}
The local Hamiltonian problem is as follows:
    \begin{itemize}[leftmargin=5.5em]
        \item[\textbf{Input}:] $H$ is a local Hamiltonian, thresholds $a < b$ with $b - a \geq 1/\poly(n)$.
        \item[\textbf{Promise}:] Either $\lambda_{\min}(H) \leq a$ or $\lambda_{\min}(H) \geq b$.
        \item[\textbf{Problem}:] Decide whether $\lambda_{\min}(H) \leq a$ or $\lambda_{\min}(H) \geq b$. \qed
    \end{itemize}
\end{definition}
Denote by $\VGPLH$ and $\STOQLH$ the local Hamiltonian problems under the promises that the input Hamiltonians are VGP and stoquastic, respectively.
We will now define the class $\VGPMA$.
\begin{definition} \label{def:vgpma}
Define $\VGPMA$ as the class of decision problems that are polynomial-time many-one reducible to $\VGPLH$. \end{definition}

We immediately get the following containments:

\begin{proposition}
\label{prop:containments}
$\StoqMA \subseteq \VGPMA \subseteq \QMA$. \qed
\end{proposition}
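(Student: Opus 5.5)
The two containments are proved separately. Both rest on the standard facts that $\STOQLH$ is $\StoqMA$-complete and that the local Hamiltonian problem lies in $\QMA$. Since $\VGPMA$ is defined as a reduction closure of $\VGPLH$, we need to show two things: $\STOQLH$ reduces to $\VGPLH$, and $\VGPLH$ lies in $\QMA$. Transitivity of polynomial-time many-one reductions then finishes each side.

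For $\StoqMA \subseteq \VGPMA$, take any $L \in \StoqMA$. By $\StoqMA$-completeness of the stoquastic local Hamiltonian problem~\cite{Bravyi_2006}, there is a polynomial-time many-one reduction from $L$ to $\STOQLH$. Every stoquastic Hamiltonian has VGP: each holonomy $\Phi(\gamma)$ equals $(-1)^L(-1)^L = 1$, as observed after \Cref{def:vgp}. Equivalently, take $D = I$ in statement $(v)$ of \Cref{prop:vgp_equiv_def}. Hence every YES instance of $\STOQLH$ is a YES instance of $\VGPLH$, and likewise for NO instances. The identity map is therefore a valid reduction from $\STOQLH$ to $\VGPLH$. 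Composing the two reductions puts $L$ in $\VGPMA$.

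For $\VGPMA \subseteq \QMA$, take $L \in \VGPMA$, so $L$ reduces to $\VGPLH$. Every instance of $\VGPLH$ is an instance of the general local Hamiltonian problem with the same thresholds $a<b$ and gap $b-a \ge 1/\poly(n)$. Only the promise is stronger. Kitaev's phase-estimation verifier therefore decides it in $\QMA$, and $\QMA$ is closed under polynomial-time many-one reductions. One technical point must be checked: the paper's input encoding represents entries in polar form with $m$th roots of unity, $m \le 2^{\poly(n)}$. Each local term must still have polynomially many bits of description so that the verifier can implement it to inverse-polynomial precision. I expect this precision check to be the only non-immediate step. I would handle it by noting that phases given as $k_{xy}\in\Z_m$ can be written to $\poly(n)$ bits, and that the standard $\QMA$ proof tolerates inverse-polynomial errors in the term coefficients, which shift $\lambda_{\min}$ by at most a fraction of $b-a$.
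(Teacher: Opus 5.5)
Your proof is correct and follows the paper's reasoning. The paper treats both containments as immediate: stoquastic implies VGP, so $\STOQLH$ is a sub-promise of $\VGPLH$, which gives $\StoqMA \subseteq \VGPMA$ via $\StoqMA$-hardness of $\STOQLH$; and $\VGPLH$ is a special case of the general local Hamiltonian problem, which lies in $\QMA$, while your encoding-precision remark is a harmless check that the paper leaves implicit.
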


The goal of this section is to prove that $\StoqMA = \VGPMA$.
In particular, we will prove that $\VGPLH \in \StoqMA$, which immediately yields the desired collapse.

\subsection{A proof that $\VGPLH \in \StoqMA$}
\label{sec:collapse}

The main goal of this section is to prove 
\begin{theorem}
    \label{thm:vgp_stoqma}
    $\VGPLH \in \StoqMA$. \qed
\end{theorem}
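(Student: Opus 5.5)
The plan is to build a $\StoqMA$ verifier directly for the cospectral stoquastic proxy $\mathcal{S}(H)$, without ever writing $\mathcal{S}(H)$ as a local Hamiltonian. By \Cref{prop:vgp_equiv_def}(vi), the VGP promise gives a diagonal unitary $D$ with $\mathcal{S}(H)=DHD^\dagger$, so $\lambda_{\min}(\mathcal{S}(H))=\lambda_{\min}(H)$ and the YES/NO thresholds $a<b$ carry over unchanged.

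The difficulty is the one shown in \Cref{ex:n_qubit_family_non_local_stoquastizing _diagonal}: $\mathcal{S}(H)$ can have exponentially many Pauli terms, so the standard result of Ref.~\cite{Bravyi_2006} for stoquastic $k$-local Hamiltonians cannot be applied as a black box. The way around it is to use the canonical PMR decomposition $H=H_0+\sum_{i=1}^m D_iP_i$. Since the PMR terms are entry-wise disjoint,
\begin{align*}
\mathcal{S}(H)=H_0-\sum_{i=1}^m |D_i|\,P_i .
\end{align*}
Here $H_0$ is $k$-local and diagonal, each $P_i$ is a $k$-local bit-flip permutation, and $|D_i|$ is diagonal but possibly non-local. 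Still, $|D_i(z)|=\bigl|\sum_j B^{(i)}_j(z)\bigr|$ is computable classically in $\poly(n)$ time from the local description, using the $\Z_m$ phase encoding of \Cref{sec:preliminaries}. One checks that pairing $i$ with the term for $P_i^{-1}$ keeps the expression Hermitian; for qubits $P_i=P_i^{-1}$.

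Next I would truncate $|D_i(z)|/\Lambda$, for a normalization $\Lambda=\poly(n)$, to $r=O(\log n)$ bits, writing it as $\sum_{j=1}^r 2^{-j}b_{ij}(z)$ with Boolean functions $b_{ij}$ computable by classical reversible circuits. The truncation error is $m\Lambda 2^{-r}\ll b-a$. The verifier samples one of the following tests with suitable probabilities.

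\begin{itemize}
\item For a $P_i$-test, it prepares an ancilla $\ket{+}$ and computes $b_{ij}(z)$ reversibly into a workspace bit. It applies $P_i$, a product of $X$'s hence a reversible permutation, controlled on the ancilla and the workspace bit via Toffoli gates. It then uncomputes $b_{ij}$ and measures the ancilla in the $\ket{\pm}$ basis. Acceptance on $\ket{+}$ happens with probability $\tfrac12\bigl(1+\braket{\psi|\Pi_{b_{ij}}P_i|\psi}\bigr)$, which is linear in $\braket{\psi|\,b_{ij}P_i\,|\psi}$.
\item For an $H_0$-test, it uses the usual classical evaluation of local diagonal terms, encoded through $\ket{+}$-ancilla measurements as in Ref.~\cite{Bravyi_2006}.
\end{itemize}

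All operations are classical reversible gates, $\ket{0}$ and $\ket{+}$ ancillas, and a final $\ket{+}$-basis measurement, which is exactly the $\StoqMA$ verifier format. With the sampling weights proportional to $2^{-j}$ and to the diagonal term norms, the acceptance probability equals $\tfrac12-\tfrac{1}{\poly}\braket{\psi|\mathcal{S}(H)|\psi}$ up to the truncation error.

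Completeness follows from \Cref{thm:vgp_pf}: the honest prover sends the nonnegative Perron vector of $\mathcal{S}(H)$, which exists and achieves energy $\le a$. Soundness holds because the acceptance probability is an affine function of the energy, bounded using $\lambda_{\min}(\mathcal{S}(H))\ge b$. The $\StoqMA$ gap between thresholds $\alpha,\beta$ of the form $\tfrac12-\tfrac{1}{\poly}$ separated by $1/\poly$ follows from $b-a\ge 1/\poly(n)$.

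I expect the main obstacle to be the $P_i$-test itself. The point to check is that coherently computing and uncomputing the non-local bit $b_{ij}(z)$ around the controlled permutation gives exactly the matrix $\Pi_{b_{ij}}P_i$, and not, for example, $\Pi_{b_{ij}}$ evaluated on the flipped string. This requires $|D_i(z)|=|D_i(P_iz)|$, which is Hermiticity of the PMR term. I would first verify this carefully, and if it fails I would instead symmetrize by computing $b_{ij}$ on both $z$ and $P_iz$. A secondary point is confirming that the controlled permutation and the garbage uncomputation fit the precise verifier definition in Appendix~\ref{app:complexity_class_definitions}.
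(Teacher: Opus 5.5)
Your route is essentially the paper's. It too runs a $\StoqMA$ verifier on the cospectral proxy $\mathcal{S}(H)$, whose off-diagonal data are the non-local but efficiently computable nonnegative diagonals $\widetilde H_{S,x}=\sum_y|\braket{xy|H|\bar x y}|\,\ket{y}\!\bra{y}$. These are exactly your $|D_i|$ for $P_i=X^{S}$, restricted to the assignment $x$ on $S$. The paper handles them by binary expansion into reversibly computable projectors (\Cref{lem:oracle-measurement}). The only difference is that it plugs these into the decomposition and verifier of Ref.~\cite{Ioannou_2020} instead of building the tests by hand.

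Built by hand, however, your $P_i$-test does not do what you claim. When $b_{ij}(z)=0$ nothing is applied, so the ancilla stays in $\ket{+}$ and is accepted with certainty. The acceptance probability is therefore
\begin{align*}
\tfrac12\bigl(1+\braket{\psi|\Pi_{b_{ij}}P_i+(I-\Pi_{b_{ij}})|\psi}\bigr),
\end{align*}
not $\tfrac12\bigl(1+\braket{\psi|\Pi_{b_{ij}}P_i|\psi}\bigr)$. With your weights $\propto 2^{-j}$, the spurious diagonals $I-\Pi_{b_{ij}}$ sum to a constant minus $|D_i|/\Lambda$ (up to truncation). This carries exactly the same coefficient as the off-diagonal part $|D_i|P_i/\Lambda$. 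So your verifier is in fact estimating the energy of $H_0+\sum_i|D_i|(I-P_i)=\mathcal{S}(H)+\sum_i|D_i|$. That is a non-constant and generally non-local diagonal shift. It changes $\lambda_{\min}$, so the thresholds $a,b$ no longer transfer, and it cannot be absorbed into your local $H_0$-test.

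The repair is routine, and either of two options works:
\begin{itemize}
\item In the branch $b_{ij}(z)=0$, swap a fresh $\ket{0}$ ancilla into the output position with a controlled swap (CNOT--Toffoli--CNOT, with the control negated by $X$ gates). That branch is then accepted with probability exactly $1/2$, which gives the formula you wanted.
\item Keep the test and add a compensating diagonal test for $\sum_{i,j}2^{-j}\Pi_{b_{ij}}$. This is the ``$F$ in place of $X\otimes F$'' case of \Cref{lem:oracle-measurement}.
\end{itemize}

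The point you flagged as the main obstacle is fine. Hermiticity of $D_iP_i$ with $P_i^2=I$ gives $D_i(z)=\overline{D_i(P_iz)}$, hence $|D_i(z)|=|D_i(P_iz)|$. Uncomputation needs the truncated bits to agree exactly, so evaluate $b_{ij}$ on a canonical representative, such as the lexicographically smaller of $z$ and $P_iz$. With these changes the argument goes through.
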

This will immediately yield
\begin{corollary}
\label{cor:stoqma_complete}
    $\VGPLH$ is $\StoqMA$-complete.
\end{corollary}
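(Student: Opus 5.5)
The plan is to combine membership, given by \Cref{thm:vgp_stoqma}, with a hardness direction that is essentially free: every stoquastic Hamiltonian is VGP.

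For membership, we invoke \Cref{thm:vgp_stoqma} directly, which gives $\VGPLH \in \StoqMA$. Since $\StoqMA$ is closed under polynomial-time many-one (Karp) reductions, this is all that is needed on the upper-bound side.

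For hardness, we will reduce $\STOQLH$ to $\VGPLH$ by the identity map.
\begin{enumerate}[label=(\arabic*),leftmargin=2em]
    \item We recall the result of Ref.~\cite{Bravyi_2006}: the stoquastic local Hamiltonian problem, for constant locality (6-local suffices) and inverse-polynomial promise gap $b-a \geq 1/\poly(n)$, is $\StoqMA$-hard. Hence every problem in $\StoqMA$ Karp-reduces to $\STOQLH$.
    \item Given a $\STOQLH$ instance $(H,a,b)$, we output the same triple as a $\VGPLH$ instance. We check that it lies in the promise set of $\VGPLH$. First, $H$ is local. Second, the gap promise $\lambda_{\min}(H)\le a$ or $\lambda_{\min}(H)\ge b$ is unchanged. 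Third, $H$ has VGP, because for any closed walk $\gamma$ of length $L$ every factor $H_{z_t,z_{t-1}}/|H_{z_t,z_{t-1}}|$ equals $-1$, so $\Phi(\gamma)=(-1)^L(-1)^L=1$, as remarked after \Cref{def:vgp}. Equivalently, (v) of \Cref{prop:vgp_equiv_def} holds with $D=I$.
    \item We note that the encoding conventions of \Cref{sec:preliminaries} are met. Stoquastic real off-diagonal entries have phase $-1$, which is a root of unity of order $m=2$; this $m$ is even, and the exponents $k_{xy}\in\Z_2$ are trivially computable. So the identity map is a legitimate polynomial-time reduction, and YES and NO instances are preserved verbatim.
\end{enumerate}
Composing these steps, every $\StoqMA$ problem reduces to $\VGPLH$, and together with membership this yields $\StoqMA$-completeness.

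The only step requiring care is the bookkeeping of promise problems. We must ensure that the promise of $\VGPLH$ (the VGP promise together with the gap promise) contains the image of every instance in the promise of $\STOQLH$; this is exactly what step (2) checks. We must also ensure that the locality and gap parameters for which $\StoqMA$-hardness is known are admissible inputs under \Cref{def:local_hamiltonian_problem}. I expect no genuine obstacle here, since all of the substantive difficulty is contained in \Cref{thm:vgp_stoqma}. This also immediately gives $\VGPMA=\StoqMA$ via \Cref{prop:containments}.
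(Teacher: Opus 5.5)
Your proposal is correct and follows essentially the same route as the paper: membership is \Cref{thm:vgp_stoqma}, and hardness comes from the $\StoqMA$-hardness of $\STOQLH$ via the identity reduction, since every stoquastic Hamiltonian has VGP. Your additional checks are bookkeeping that the paper's two-line proof leaves implicit: that the gap promise is preserved, and that the real entries fit the root-of-unity encoding with $m=2$.
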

\begin{proof}
    The stoquastic local Hamiltonian problem is $\StoqMA$-hard and stoquastic $\implies$ VGP, so $\VGPLH$ is $\StoqMA$-hard.
    Thus showing $\VGPLH \in \StoqMA$ yields the corollary.
\end{proof}

To prove Theorem~\ref{thm:vgp_stoqma}, we model our approach on the proof of Theorem~1 from Ref.~\cite{Ioannou_2020} (Ioannou et al.), which establishes that the local Hamiltonian problem for stoquastic local Hamiltonians is contained in $\StoqMA$\footnote{Ref.~\cite{Bravyi_2006} originally proved this for \emph{termwise stoquastic} Hamiltonians (see \Cref{def:termwise_stoquastic}), while Ref.~\cite{Ioannou_2020} was the first to prove it for \emph{globally stoquastic} Hamiltonians. Our definition of stoquastic is equivalent to being globally stoquastic.}.

\subsubsection{The Stoquastic Local Hamiltonian Problem is $\StoqMA$-complete}
\label{subsec:glob_stoq}

We review known results for $\STOQLH$.
\begin{theorem}[Theorem 1 of~\cite{Ioannou_2020}]
    \label{thm:glob_stoq_stoqma}
    $\STOQLH \in \StoqMA$. \qed
\end{theorem}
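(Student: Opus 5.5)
The plan is to reproduce the strategy of Ref.~\cite{Ioannou_2020}. The obstruction to reusing Ref.~\cite{Bravyi_2006} directly is that a globally stoquastic $H=\sum_j h_j$ need not be termwise stoquastic: positive off-diagonal entries of one $h_j$ may be cancelled by another. I would therefore regroup $H$ into its canonical PMR form
\begin{align*}
  H = H_0 + \sum_{\alpha} D_\alpha X^\alpha ,
\end{align*}
where $\alpha$ ranges over the $X$-patterns occurring in the Pauli expansion. Since $H$ is $k$-local, there are only $\poly(n)$ such $\alpha$. The permutation parts are entry-wise disjoint, so global stoquasticity forces every PMR term $D_\alpha X^\alpha$ to be stoquastic on its own, i.e.\ $-D_\alpha(z)\ge 0$ for every basis state $z$. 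The term $D_\alpha$ may have large support (compare \Cref{ex:pmr-local_not_k-local_yes}), so it is not local. However, each value $D_\alpha(z)$, and each diagonal value $\braket{z|H_0|z}$, is a sum of $\poly(n)$ Pauli coefficients and can be computed classically in polynomial time from $z$. This replaces locality as the property the verifier needs.

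Next I would build a $\StoqMA$ verifier, i.e.\ a classical reversible circuit with ancillas in $\ket{0}$ and $\ket{+}$ and a final $X$-basis measurement of one output qubit. Its acceptance probability must take the form $\tfrac12+\epsilon\,\braket{\psi|G|\psi}$ with $G=cI-H$ entry-wise nonnegative. First rescale $H$ by $M=\poly(n)$, a bound on all $|D_\alpha(z)|$ and $|\braket{z|H_0|z}|$, and shift it so that every entry of $G$ lies in $[0,1]$. The verifier then selects either the diagonal part or one PMR term $\alpha$ uniformly at random, using $\ket{+}$ ancillas to prepare the random index.

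For a chosen $\alpha$, the verifier reversibly computes $w(z)=-D_\alpha(z)/M$ into a register, prepares $r=O(\log n)$ further $\ket{+}$ ancillas as a uniform random threshold $t$, and conditionally applies $X^\alpha$ to the witness on the event $t<2^r w(z)$. Finally it uncomputes garbage, which is possible because $w(z)=w(X^\alpha z)$ by Hermiticity for these real nonnegative weights. The effective operator on the witness then has matrix elements $\approx w(z)$ between $z$ and $X^\alpha z$, up to rounding error $2^{-r}$. The diagonal branch is handled the same way, with weight $(c-\braket{z|H_0|z})/M$ and the identity in place of $X^\alpha$.

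Averaging over branches gives acceptance probability
\begin{align*}
  p_{\mathrm{acc}}(\psi)=\tfrac12+\tfrac{1}{\poly(n)}\braket{\psi|G|\psi}\pm 2^{-r}.
\end{align*}
This converts the promise gap $b-a\ge 1/\poly(n)$ into the completeness/soundness separation $\tfrac12+\Omega(1/\poly)$ versus $\tfrac12+\Omega(1/\poly)-1/\poly$ required by \Cref{def:stoqma}. A nonnegative optimal witness exists by Perron--Frobenius, which matches the structure of $\StoqMA$.

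I expect the main obstacle to be the coherent implementation step. The weighted, conditional $X^\alpha$ has to be realized using only classical reversible gates, so that the interference in the final $X$-measurement picks out exactly the symmetric nonnegative entries $w(z)$. The register holding $D_\alpha(z)$ must be uncomputed consistently on both branches $z$ and $X^\alpha z$, and the rounding must be fine enough to preserve an inverse-polynomial gap. I would handle this with the standard controlled-swap/comparator gadget of Ref.~\cite{Bravyi_2006}, choosing $r=O(\log n)$ large enough that the rounding error is dominated by the gap.
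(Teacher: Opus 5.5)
Your proposal is correct and is essentially the argument of Ref.~\cite{Ioannou_2020} that the paper invokes. Your PMR terms $D_\alpha X^\alpha$ carry exactly the data of the operators $H_{S,x}$ behind \Cref{lem:glob_stoq_decomp} ($S$ being the qubits flipped by $X^\alpha$), entry-wise disjointness gives their nonnegativity, and the verifier measures each weighted flip with a classical reversible circuit that coherently evaluates the weight. The differences are cosmetic: you keep $X^\alpha$ and use $w(z)=w(X^\alpha z)$ instead of conjugating to a single-qubit $X$ with the circuits $U_j$, and you realize the weight by a random-threshold comparison instead of the binary expansion of \Cref{lem:oracle-measurement}; both give an acceptance probability affine in $\braket{\psi|cI-H|\psi}$ with $O(\log n)$ bits of precision.
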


\begin{corollary}[Corollary 1.1 of~\cite{Ioannou_2020}]
    \label{cor:glob_stoq_stoqma}
    $\STOQLH$ is $\StoqMA$-complete. \qed
\end{corollary}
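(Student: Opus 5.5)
The plan is to obtain the two halves of completeness separately, with containment in $\StoqMA$ supplied directly by \Cref{thm:glob_stoq_stoqma} and hardness inherited from the termwise-stoquastic setting of Ref.~\cite{Bravyi_2006}. Since \Cref{thm:glob_stoq_stoqma} already gives $\STOQLH \in \StoqMA$, what remains is to show that every problem in $\StoqMA$ many-one reduces to $\STOQLH$ in the sense of \Cref{def:local_hamiltonian_problem}.

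For hardness I will invoke the result of Ref.~\cite{Bravyi_2006}: the local Hamiltonian problem restricted to \emph{termwise} stoquastic $k$-local Hamiltonians (\Cref{def:termwise_stoquastic}), with constant $k$ and inverse-polynomial gap $b-a$, is $\StoqMA$-hard. Recall the mechanism. A $\StoqMA$ verifier uses only classical reversible gates (Toffoli, CNOT, NOT), ancillas in $\ket{0}$ and $\ket{+}$, and a final single-qubit $X$-basis measurement. Feeding this into Kitaev's circuit-to-Hamiltonian construction gives a Hamiltonian whose propagation terms are $-\tfrac12(\ket{t}\!\bra{t-1}\otimes U_t + \mathrm{h.c.})$. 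Because each $U_t$ is a permutation matrix, every off-diagonal entry of these terms is nonpositive. The input, clock, and output penalties are either diagonal or, for the $\ket{+}$-ancilla and $X$-measurement projectors, of the form $\tfrac12(I - X)$ on a qubit, which again has nonpositive off-diagonals. The acceptance gap of the verifier then translates into a $1/\poly(n)$ gap between the thresholds $a$ and $b$ for the ground energy.

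The key step is then an elementary observation: termwise stoquastic implies globally stoquastic, that is, stoquastic in the sense of \Cref{def:stoq}. If $H=\sum_i h_i$ and each $h_i$ (tensored with identity) has $\braket{x|h_i|y}\le 0$ for all $x\neq y$, then $\braket{x|H|y}=\sum_i\braket{x|h_i|y}\le 0$. So the identity map takes instances of the termwise-stoquastic local Hamiltonian problem to instances of $\STOQLH$. It preserves the thresholds $a,b$, the promise on $\lambda_{\min}(H)$, and the locality. Composing this identity map with the reduction of Ref.~\cite{Bravyi_2006} gives a polynomial-time many-one reduction from any $\StoqMA$ problem to $\STOQLH$, so $\STOQLH$ is $\StoqMA$-hard. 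Combined with \Cref{thm:glob_stoq_stoqma}, this proves completeness.

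The only real subtlety is not in the reduction but in the containment direction, which is exactly what \Cref{thm:glob_stoq_stoqma} handles. A globally stoquastic $H$ need not be termwise stoquastic, so the original verifier of Ref.~\cite{Bravyi_2006} does not apply verbatim. That theorem works around this by regrouping $H$ into entry-wise-disjoint off-diagonal pieces, each of which is termwise stoquastic, and building the random-walk verifier from those pieces. Since this is assumed, the main thing to check in the hardness direction is bookkeeping: the hard instances of Ref.~\cite{Bravyi_2006} have constant locality and an inverse-polynomial gap matching \Cref{def:local_hamiltonian_problem}, and the stated thresholds need no rescaling.
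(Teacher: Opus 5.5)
Your proposal is correct and follows the route the paper relies on. The paper only cites the result, but the same reasoning appears in its proof of \Cref{cor:stoqma_complete}: membership is \Cref{thm:glob_stoq_stoqma}, and hardness comes from the $\StoqMA$-hardness of the termwise-stoquastic local Hamiltonian problem of Ref.~\cite{Bravyi_2006}, because termwise-stoquastic instances are stoquastic in the sense of \Cref{def:stoq}, so the identity map is a valid reduction. Your recaps of the cited proofs are loose in two places: the containment verifier samples and measures terms of the decomposition in \Cref{lem:glob_stoq_decomp} rather than running a random walk, and since $\StoqMA$ has no known error reduction, the circuit-to-Hamiltonian step cannot lean on the amplified analysis of Kitaev's original construction; since you use both results as black boxes, neither point affects the argument.
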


The proof of Theorem~1 in Ref.~\cite{Ioannou_2020} uses a lemma which decomposes a stoquastic $k$-local Hamiltonian $H$ into a form that lends itself to ground-state energy estimation by a $\StoqMA$ verifier. 

\begin{lemma}[Lemma 2 of~\cite{Ioannou_2020}]
\label{lem:glob_stoq_decomp}
Let 
$
H = \sum_{i=1}^m h_i
$
be a stoquastic $k$-local Hamiltonian. Then there exists $\beta < 0$ such that
\begin{equation}
\label{eq:stoq_decomposition}
H + \beta I
=
- H_0
+
\sum_{j=1}^{m'}
U_j (-X \otimes H_j) U_j^\dagger
\end{equation}
where $U_j$ is a quantum circuit of $X$ and CNOT gates and $H_j$ is a classical (diagonal) Hamiltonian with $H_j \ge 0$ for $j \ge 0$. The number of terms in the sum $m' \le m 2^{2k}.$ One can (classically) efficiently find this decomposition, i.e.\ determine $\beta$, $U_j$ and the description of $H_j$ for $j \ge 0$.
\end{lemma}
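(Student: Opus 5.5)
The plan is to group the off-diagonal part of $H$ by bit-flip pattern (essentially its canonical PMR decomposition, \Cref{def:pmr}). I then rotate each flip pattern to a single-qubit $X$ with a CNOT circuit, and absorb the diagonal part into $-H_0$ through a constant shift $\beta$. The key point is that global stoquasticity is used only for the signs of the diagonal multipliers. Those signs never have to be checked, which is why the construction is efficient even though $H$ need not be termwise stoquastic.

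\textbf{Step 1 (flip-pattern grouping).} Expand each $k$-local $h_i$ in the Pauli basis. Write each Pauli string as (phase)$\,Z^{\beta}X^{\alpha}$, absorbing $Y=iXZ$. Collect all terms with the same $\alpha\neq 0$ into $D_\alpha X^\alpha$, and all $\alpha=0$ terms into a diagonal $H_{\mathrm{diag}}$. Each $D_\alpha$ is a sum of at most $m2^{k}$ local diagonal terms. The total number of distinct $\alpha$ is at most $m2^{k}$, which is within the claimed bound $m' \le m2^{2k}$.

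Distinct $\alpha$ give entry-wise disjoint matrices. Hence for every $z$ the number $\langle z\oplus\alpha|H|z\rangle = D_\alpha(z\oplus\alpha)$ is an off-diagonal entry of $H$, so it is real and $\le 0$ by stoquasticity. This means I may replace $D_\alpha$ by its real part: the imaginary part must vanish identically as a function, and taking the real part of a local expansion is coefficient-wise. Hermiticity further gives $D_\alpha(z)=D_\alpha(z\oplus\alpha)$.

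\textbf{Step 2 (rotation to a single $X$).} Pick a qubit $t\in\supp(\alpha)$ and let $U_\alpha$ be the product of CNOTs with control $t$ and targets $\supp(\alpha)\setminus\{t\}$. Then $U_\alpha^\dagger X^\alpha U_\alpha = X_t$, and $D'_\alpha := U_\alpha^\dagger D_\alpha U_\alpha$ is diagonal. Moreover $D'_\alpha$ is still local, since CNOTs map $Z$-strings to $Z$-strings within $\supp(\alpha)\cup\supp(\beta)$.

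The symmetry $D_\alpha(z)=D_\alpha(z\oplus\alpha)$ becomes invariance of $D'_\alpha$ under flipping $t$, so $D'_\alpha$ does not act on qubit $t$. Thus $D_\alpha X^\alpha = U_\alpha\,(X_t\otimes D'_\alpha)\,U_\alpha^\dagger$. Setting $H_\alpha := -D'_\alpha$, Step 1 gives $H_\alpha\ge 0$ as an operator, and so this term equals $U_\alpha(-X\otimes H_\alpha)U_\alpha^\dagger$. X gates are only needed if one insists on a normal form; I would note that they can be included trivially.

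\textbf{Step 3 (diagonal part and $\beta$).} Set $\beta := -\sum_i \|h_i\| - 1 <0$, which is efficiently computable from the local terms. Define $H_0 := -(H_{\mathrm{diag}}+\beta I)$. Since $\|H_{\mathrm{diag}}\|\le\sum_i\|h_i\|$, we get $H_0\ge 0$.

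\textbf{Main obstacle.} The point needing care is the $H_j\ge 0$ requirement together with efficiency. Each $H_\alpha$ is a sum of local diagonal terms that may individually be negative, so its nonnegativity cannot be certified locally; deciding it could even be hard. My resolution is that the lemma only asserts existence, plus an efficient \emph{description} of $H_j$, and nonnegativity follows from the stoquastic promise rather than from computation. I would also double-check the counting for $m'$ and confirm that discarding the imaginary parts of $D_\alpha$ is consistent with the given local description of $H$.
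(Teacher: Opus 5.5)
Your proof is correct, and it is a coarser version of the decomposition the paper attributes to Ioannou et al. The paper does not reprove this lemma. In the proof of Lemma~\ref{lem:stoq_map_decomp} it describes the original construction as driven by $H_{S,x}=-\sum_y\langle xy|H|\bar x y\rangle\,|y\rangle\langle y|$, indexed by a flip set $S$ \emph{and} a bitstring $x$ on $S$. So there the block of each flip set is further split according to the configuration of the flipped qubits. That is the source of the $m2^{2k}$ count ($2^k$ flip sets times $2^k$ bitstrings per $h_i$), and presumably also what the X gates are for, namely moving $x$ to a fixed reference string.

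You keep one term per flip pattern $\alpha$. You replace the splitting over $x$ by the symmetry $D_\alpha(z)=D_\alpha(z\oplus\alpha)$, which makes $D'_\alpha$ independent of the CNOT control $t$. This gives $m'\le m(2^k-1)$, CNOT-only circuits, and $H_j$ that are directly local $Z$-string Hamiltonians.

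What the $(S,x)$ indexing buys is that each $H_j$ is literally a table of matrix elements of $H$ over the spectator bits. That is exactly what the paper modifies entrywise in Lemma~\ref{lem:stoq_map_decomp}. Your form supports the same move: replace $D_\alpha$ pointwise by $-|D_\alpha|$. The $t$-symmetry survives because Hermiticity alone gives $|D_\alpha(z)|=|D_\alpha(z\oplus\alpha)|$. You then incur the same loss of a local Pauli description that the paper notes for $\widetilde H_{S,x}$.

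On the points you flagged for double-checking:
\begin{itemize}
\item By uniqueness of the $Z$-expansion, the imaginary coefficients of $D_\alpha$ already vanish, so nothing is discarded.
\item For $\beta$, any efficiently computable upper bound on $\max_z\langle z|H|z\rangle$ suffices (e.g.\ the sum of absolute Pauli coefficients), so exact norms are unnecessary.
\item Strictly, $H_\alpha$ should be $-\tilde D$, where $D'_\alpha=I_t\otimes\tilde D$.
\end{itemize}
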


Once this decomposition is acquired, the property of stoquasticity is no longer used in the proof of Theorem 1 in Ref.~\cite{Ioannou_2020}. 
Their proof only requires that $H$ is efficiently decomposable into the form in~\Cref{eq:stoq_decomposition} and that each diagonal $H_j$ has efficiently computable diagonal entries, not necessarily that $H$ is stoquastic. 
Therefore, to prove Theorem~\ref{thm:vgp_stoqma}, it is sufficient to show that for any VGP local Hamiltonian $H$, we can efficiently perform the same decomposition for some $H'$ that is spectrally equivalent to $H$.

\subsubsection{Decomposing the Stoquastic Proxy}

Recall the stoquastic proxy map $\mathcal{S}( \, \cdot \,)$ (\Cref{def:stoq_proxy_map}), which provides a unitarily-similar, stoquastic Hamiltonian on a VGP Hamiltonian input (\Cref{prop:vgp_equiv_def}).
Hence to prove Theorem~\ref{thm:vgp_stoqma}, it is sufficient to show that we can efficiently perform the same decomposition as in Lemma~\ref{lem:glob_stoq_decomp} for $\mathcal{S}(H)$ given a local Hamiltonian $H$.

\begin{lemma}[Extension of \Cref{lem:glob_stoq_decomp}]
\label{lem:stoq_map_decomp}
Let 
$
H = \sum_{i=1}^m h_i
$
be a $k$-local Hamiltonian. Then there exists $\beta < 0$ such that
\begin{equation}
\label{eq:S_decomposition}
\mathcal{S}(H) + \beta I
=
- H_0
+
\sum_{j=1}^{m'}
U_j (-X \otimes H_j) U_j^\dagger,
\end{equation}
where $U_j$, $H_j$ and $m'$ are as defined in Lemma~\ref{lem:glob_stoq_decomp}, and one can efficiently find this decomposition classically.
\end{lemma}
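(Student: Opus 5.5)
The plan is to avoid locality of $\mathcal{S}(H)$ altogether, since by \Cref{ex:n_qubit_family_non_local_stoquastizing _diagonal} it can fail. I will instead produce the decomposition of \Cref{eq:S_decomposition} directly from the canonical PMR form of $H$. As noted after \Cref{lem:glob_stoq_decomp}, the rest of the argument of Ref.~\cite{Ioannou_2020} only needs two things: the decomposition is efficiently obtainable, and each diagonal $H_j$ has efficiently computable diagonal entries. It does not need the $H_j$ to be local.

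\textbf{Step 1 (PMR form).} Write $H = H_0' + \sum_{i} D_i X^{\alpha_i}$ in the canonical PMR decomposition with $\alpha_i \neq 0$. Each local term $h_\ell$ expands into at most $4^k$ Paulis $Z^{\beta}X^{\alpha}$, each with $\supp(\alpha)\subseteq\supp(h_\ell)$. Hence the number of distinct nonzero $\alpha_i$ is at most $m2^k \le m2^{2k}$. Each $D_i$ is a sum of at most $m$ local diagonal terms, so the value $D_i(z)$ can be evaluated in $\poly(n)$ time for any $z$.

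Because the permutations $X^{\alpha_i}$ are entry-wise disjoint, we get
\begin{align*}
\mathcal{S}(H) = H_0' - \sum_i |D_i|\, X^{\alpha_i},
\end{align*}
where $|D_i|$ is the diagonal matrix with entries $|D_i(z)|$. These entries are again efficiently computable. Hermiticity of $H$, together with $X^{\alpha_i}$ being an involution, gives $|D_i(z)| = |D_i(z\oplus\alpha_i)|$. This says that $|D_i|$ commutes with $X^{\alpha_i}$.

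\textbf{Step 2 (circuits $U_j$).} For each $\alpha_i$, pick a pivot qubit $q\in\supp(\alpha_i)$. Let $U_i$ be the CNOT circuit with control $q$ targeting every other qubit in $\supp(\alpha_i)$. Then $U_i X_q U_i^\dagger = X^{\alpha_i}$. Since $U_i$ permutes the computational basis, $\tilde H_i := U_i^\dagger |D_i| U_i$ is diagonal and nonnegative, and its entries are computable by first applying the linear map to $z$ and then evaluating $|D_i|$.

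The key check is that $\tilde H_i$ does not depend on qubit $q$. This follows because $|D_i|$ is invariant under $z\mapsto z\oplus\alpha_i$, and conjugating by $U_i$ turns that shift into a flip of $q$ alone. So $\tilde H_i = I_q\otimes H_i$ with $H_i\ge 0$ diagonal, and
\begin{align*}
-|D_i|X^{\alpha_i} = U_i(-X_q\otimes H_i)U_i^\dagger .
\end{align*}

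\textbf{Step 3 (shift $\beta$).} The diagonal entries of $H_0'$ are bounded in absolute value by $B := \sum_\ell \|h_\ell\|$, which is efficiently computable from the local terms. I set $\beta := -B-1 < 0$ and define $H_0 := -(H_0' + \beta I)$. Then $H_0 \geq 0$ is diagonal with efficiently computable entries, and \Cref{eq:S_decomposition} holds with $m' \le m2^{2k}$.

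\textbf{Main obstacle.} The delicate point is Step 2's claim that $H_i$ is independent of the pivot qubit. This needs the pairing of $(z, z\oplus\alpha)$ entries, which comes from Hermiticity and the involutive canonical permutations. A second point to handle carefully is that $H_i$ is non-local. For this I will verify that every use of $H_j$ in the proof of Theorem~1 of Ref.~\cite{Ioannou_2020} is a classical evaluation of a diagonal entry on a given basis state. If some step there instead relies on locality of $H_j$ (for example, a bound on its norm), I will substitute the explicit bound $\|H_i\|\le B$.
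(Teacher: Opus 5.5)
Your proposal is correct, and it takes a different route from the paper. The paper never builds the decomposition itself. It takes the operators $H_{S,x}$ through which Ioannou et al.'s proof of Lemma~2 accesses $H$ and replaces them by $\widetilde H_{S,x}=\sum_y |\bra{xy}H\ket{\bar x y}|\,\ket{y}\!\bra{y}$. It then checks that these are diagonal, nonnegative and efficiently evaluable, and asserts that Ref.~\cite{Ioannou_2020}'s construction, which uses only those properties, goes through unchanged.

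You instead construct the decomposition explicitly from the canonical PMR form:
\begin{itemize}
\item You group off-diagonal entries by flip pattern $\alpha$.
\item You use Hermiticity together with the involutivity of $X^{\alpha}$ to get $|D_\alpha(z)|=|D_\alpha(z\oplus\alpha)|$.
\item You use the CNOT fan-out, for which $Ue_q=\alpha$, so that $U^\dagger|D_\alpha|U$ does not depend on the pivot qubit.
\end{itemize}
Each of these steps holds as stated.

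\textbf{What your route gives.} It is self-contained and does not lean on the internals of Ioannou et al.'s proof. It makes explicit the symmetric pairing of the $(z,z\oplus\alpha)$ entries, which any decomposition into terms $U(-X\otimes H_j)U^\dagger$ requires and which the paper's checklist (a)--(c) leaves tacit. Because you do not split by the bit string on $S$, it also gives the slightly better count $m'\le m2^k$.

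\textbf{What the paper's route gives.} Its $\widetilde H_{S,x}$ have exactly the indexing and structure of Lemma~2, so reusing the rest of the verifier is a shorter step.

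\textbf{Shared residual point.} Both routes produce nonlocal $H_j$. Both therefore rest on the point you flag: the verifier needs only an efficient reversible evaluator for the entries of $H_j$, plus a $\poly(n)$ norm bound. The paper discharges this with \Cref{lem:oracle-measurement}, and your bound $\|H_i\|\le B$ supplies the required norm bound.
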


\begin{proof}
The decomposition of Lemma~\ref{lem:glob_stoq_decomp} (Lemma~2
of~\cite{Ioannou_2020}) uses the Hamiltonian only through the family
of diagonal operators $H_{S,x}$, indexed by qubit subsets
$S \subseteq [n]$ with $|S| \le k$ and bitstrings
$x \in \{0,1\}^{|S|}$.
Ref.~\cite{Ioannou_2020} defines
\begin{align}
\label{eq:HSx_original}
   H_{S,x} = -\sum_{y \in \{0,1\}^{n - |S|}}  \bra{xy}\, H\, \ket{\bar{x}y}\,\ket{y}\bra{y}.
\end{align}
We define the corresponding family for $\mathcal{S}(H)$:
\begin{align}
\label{eq:HSx_modified}
   \widetilde{H}_{S,x} &= -\sum_{y \in \{0,1\}^{n - |S|}}  \bra{xy}\, \mathcal{S}(H)\, \ket{\bar{x}y}\,\ket{y}\bra{y} \nonumber\\
   &= \sum_{y \in \{0,1\}^{n - |S|}} \bigl|\bra{xy}\, H\, \ket{\bar{x}y}\bigr| \,\ket{y}\bra{y},
\end{align}
where the step \eqref{eq:HSx_modified} uses
$\bra{xy}\,\mathcal{S}(H)\,\ket{\bar{x}y} = -|\bra{xy}\,H\,\ket{\bar{x}y}|$
for $x \ne \bar{x}$.

We verify the three properties required by the proof
in Ref.~\cite{Ioannou_2020}:
\begin{enumerate}[label=(\alph*),leftmargin=2em]
\item \emph{Diagonal and nonnegative.}
Each $\widetilde{H}_{S,x}$ is diagonal in the
computational basis and has nonnegative coefficients, since
each summand in~\eqref{eq:HSx_modified} is an absolute value.

\item \emph{Efficient computability.}
For each $(S, x, y)$, the matrix element
$\bra{xy}\,H\,\ket{\bar{x}y}$ is a sum of at most $\poly(n)$
terms from those local terms $h_i$ whose support intersects $S$.
The map $a \mapsto |a|$ is efficiently computable, so individual diagonal matrix entries of $\widetilde{H}_{S,x}$ are computable in $\poly(n)$ time from the local description of $H$.

\item \emph{Same decomposition structure.}
Since the proof of Lemma~2 in Ref.~\cite{Ioannou_2020} constructs $\beta$, $U_j$, and $H_j$ entirely from the operators $H_{S,x}$---using only their diagonal structure, nonnegativity, and the ability to compute their entries efficiently---the same construction applied to $\widetilde{H}_{S,x}$ yields the decomposition~\eqref{eq:S_decomposition} for $\mathcal{S}(H)$.
\end{enumerate}
The total number of terms satisfies $m' \le m\, 2^{2k}$, and all components are efficiently computable, classically.
We stress that $\widetilde H_{S,x}$ need not decompose into a sum of $O(1)$-local diagonal terms---the absolute value of a sum of local functions is in general a nonlocal function---but the construction of Ref.~\cite{Ioannou_2020} uses only the properties (a)--(c) above, together with the norm bound $\|\widetilde H_{S,x}\| \leq \|H\| = \poly(n)$; none of these requires locality of the $\widetilde H_{S,x}$ themselves.
\end{proof}

Before assembling the reduction we isolate the one step at which we leave the setting of Ref.~\cite{Ioannou_2020}.
Their verifier accesses the Hamiltonian only by measuring the diagonal terms $H_j$ of the decomposition, and the measurement lemma they prove for this step assumes that each $H_j$ is classical \emph{and local}.
Our terms $\widetilde H_{S,x}$ are classical but need not be local, since an absolute value of a sum of local functions is in general a nonlocal function of $y$.
What their construction actually uses, however, is not locality itself but the ability to evaluate $H_j$ reversibly and efficiently with a polynomial-size classical circuit.
Their assumption of locality is only \emph{sufficient}, but not necessary, to guarantee that such a circuit exists for the input Hamiltonian.
The following generalizing lemma isolates that weaker property, so that the containment goes through for any efficiently evaluable nonnegative diagonal terms $H_j$ in the decomposition from~\Cref{lem:glob_stoq_decomp}, local or not.

\begin{lemma}[$\StoqMA$ measurement from an efficient, reversible Hamiltonian matrix entry oracle]
\label{lem:oracle-measurement}
Let $F$ be a nonnegative diagonal operator on $n$ qubits with $\|F\|\le M$.
Suppose there is a uniform, $\poly(n,b)$-size reversible circuit $\mathcal{O}_F$ that, on input $\ket{y}\ket{0}$, outputs $\ket{y}\ket{\widehat F(y)}$, where $\widehat F(y)$ is a $b$-bit approximation of $F(y)=\braket{y|F|y}$ with $|\widehat F(y)-F(y)|\le M\,2^{-b}$.
Then for every $\delta\ge 1/\poly(n)$ there is a valid $\StoqMA$ verifier---a circuit over $\{X,\mathrm{CNOT},\mathrm{Toffoli}\}$ acting on the witness together with $\poly(n)$ ancillas in $\ket{0}$ and $\ket{+}$, with a designated output qubit measured in the Hadamard basis---whose acceptance probability on any witness $\ket{\psi}$ is
\begin{align*}
p(\psi) = \alpha+ \beta\,\frac{\braket{\psi|X\otimes F|\psi}}{M} \pm \delta ,
\end{align*}
for fixed, efficiently computable constants $\alpha,\beta$ independent of $\psi$, taking
precision $b=\big\lceil\log_2(M/\delta)\big\rceil+O(1)$. The same holds with $X\otimes F$
replaced by $F$.
\end{lemma}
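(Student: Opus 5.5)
The plan is to reduce the measurement of $X\otimes F$ (or $F$) to a classical reversible computation of a single Bernoulli bit whose bias equals $F(y)/M$, followed by the standard Hadamard-basis output measurement allowed in $\StoqMA$. First, run $\mathcal{O}_F$ on the witness register $\ket{y}$ to write $\widehat F(y)$ into a $b$-bit ancilla register. Since $\mathcal{O}_F$ is a reversible classical circuit, it compiles into $\{X,\mathrm{CNOT},\mathrm{Toffoli}\}$ with $\poly(n,b)$ ancillas initialized to $\ket{0}$, and it acts as a permutation of computational basis states. Next, prepare $b$ fresh ancillas in $\ket{+}^{\otimes b}$. These act as a uniformly random $b$-bit string $r$ in the computational basis, in the sense that tracing them out after they serve only as controls yields the uniform mixture. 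Then apply a reversible comparator circuit, built from Toffolis and CNOTs, that computes the bit $c=[\,r< \widehat F(y)\,2^{b}/M\,]$ into a fresh $\ket{0}$ ancilla. After this step, for each basis state $y$, $\Pr[c=1]=\widehat F(y)/M \pm 2^{-b}$, which is $F(y)/M\pm O(2^{-b})$.

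The second step turns $c$ into the required observable. For the $F$ version, let $c$ control a CNOT from a $\ket{+}$ ancilla onto the output qubit, or use the standard trick in which the output qubit starts in $\ket{+}$ and $c$ flips it into a state that is rejected with a fixed probability. Writing the acceptance probability gives $\alpha+\beta F(y)/M$ up to an error of $O(2^{-b})$. For the $X\otimes F$ version, follow Ioannou et al. and Bravyi et al.: let $c$ control a CNOT between the designated witness qubit and the output qubit, which is prepared in $\ket{+}$ and measured in the Hadamard basis. When $c=1$, the Hadamard-basis measurement of the output qubit picks up $\braket{X}$ on the witness qubit. When $c=0$, it accepts with a constant probability. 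Averaging over $r$ and linearity in the witness density matrix give $p(\psi)=\alpha+\beta\braket{\psi|X\otimes F|\psi}/M\pm O(2^{-b})$.

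Before this averaging step, all garbage must be uncomputed. That means running the comparator and $\mathcal{O}_F^{-1}$ in reverse after copying $c$, so that the witness register's coherences between $y$ and $X$-flipped $y$ are not destroyed by entangled workspace. This is the main obstacle. The $X$ on the first qubit changes the qubit that $\mathcal{O}_F$ reads only if that qubit lies in the support of $F$. In the tensor structure $X\otimes F$, however, $F$ acts on the remaining qubits, so $\mathcal{O}_F$ reads only those qubits, and the workspace is a function of $y$ alone and identical on both branches. Uncomputation then restores the ancillas exactly, and off-diagonal terms survive as $\braket{\psi|X\otimes F|\psi}$.

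Finally, choose $b=\lceil\log_2(M/\delta)\rceil+O(1)$. The approximation error $M2^{-b}$ contributes at most $\delta/2$ after normalization by $M$, and the comparator's discretization contributes another $2^{-b}\le\delta/2$. Together these give the stated $\pm\delta$. The constants $\alpha$ and $\beta$ depend only on the fixed gadget and not on $\psi$, and the whole circuit has size $\poly(n,b)=\poly(n)$.
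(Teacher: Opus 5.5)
Your route differs from the paper's, and in outline it is sound. The comparator with a uniform threshold $r$, drawn from $\ket{+}^{\otimes b}$ ancillas, realizes $\widehat G:=\widehat F/M$ as the uniform mixture $2^{-b}\sum_r\Pi_r$ of threshold projectors, where $\Pi_r$ projects onto $\{y: r<\widehat F(y)2^b/M\}$. The paper instead expands $\widehat G=\sum_{\ell}2^{-\ell}\Pi_\ell$ over the bit projectors of $\widehat G$, samples $\ell$ with probability $\propto 2^{-\ell}$, and invokes the projector-measurement primitive of Ioannou et al.\ for each $\Pi_\ell$. Your version has the mild advantage that the randomness it needs is exactly uniform, so no separate sampler is required. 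Both versions, however, rest on a primitive that measures $X\otimes\Pi$ for a single reversibly computed projector, with acceptance probability affine in $\braket{\psi|X\otimes\Pi|\psi}$ alone. The gadget you write down for that step does not have this property.

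Since your output qubit is prepared in $\ket{+}$, your $c$-controlled CNOT measures something only if the output is the control and the witness qubit is the target; with the output as target, the gate acts trivially on $\ket{+}$. In that reading, the $c=1$ branch accepts with probability $\tfrac12(1+\langle X\rangle)$. The $c=0$ branch, however, leaves the output in $\ket{+}$ and accepts with probability $1$, not $\tfrac12$. Working it out and averaging over $r$ gives $p(\psi)=1-\tfrac12\braket{\psi|I\otimes\widehat G|\psi}+\tfrac12\braket{\psi|X\otimes\widehat G|\psi}$. This has a $\psi$-dependent diagonal term, so the claimed form $\alpha+\beta\braket{\psi|X\otimes F|\psi}/M\pm\delta$ fails; in the Hamiltonian application you would be measuring $(X-I)\otimes H_j$ rather than $X\otimes H_j$.

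The fix is to give both branches the same baseline $\tfrac12$. For instance, make the output a $\ket{0}$ ancilla and apply a $c$-controlled SWAP between it and the witness qubit (a Fredkin gate, i.e.\ $\mathrm{CNOT}\cdot\mathrm{Toffoli}\cdot\mathrm{CNOT}$). This gives exactly $p(\psi)=\tfrac12+\tfrac12\braket{\psi|X\otimes\widehat G|\psi}$. Alternatively, keep your Hadamard test and swap the output with a fresh $\ket{0}$ ancilla when $c=0$.

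Your first suggestion for the $F$ case fails in the same way. A $c$-controlled CNOT from a $\ket{+}$ ancilla onto the output either acts trivially (output in $\ket{+}$) or leaves the output maximally mixed (output in $\ket{0}$). Either way both branches accept with the same probability, and nothing about $F$ is measured. What works is a $c$-controlled SWAP of a $\ket{0}$ output with a $\ket{+}$ ancilla, which gives $\tfrac12+\tfrac12\braket{\psi|\widehat G|\psi}$; such a swap is also the only way to realize your second, ``flip'' suggestion with classical reversible gates.

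With these repairs, the rest of your argument goes through: the uncomputation, the observation that $\mathcal{O}_F$ never reads the $X$-flipped qubit, and the $O(2^{-b})$ error budget.
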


\begin{proof}
    See~\Cref{app:proof_of_oracle_measurement}.
\end{proof}
We emphasize that the above lemma is not necessary for the proof of~\Cref{thm:vgp_stoqma}.
Instead, it is a generalization of the key mechanism behind our proof (which is specific to VGP local Hamiltonians).
The insight that it provides potentially opens the door to proving $\StoqMA$ membership for other classes of Hamiltonians beyond VGP.

\subsubsection{Completing the Proof of \Cref{thm:vgp_stoqma}}

We now assemble the complete reduction.

\begin{proof}[Proof (of~\Cref{thm:vgp_stoqma})]
Let $H$ be a VGP local Hamiltonian.
The reduction proceeds in three steps.
\begin{enumerate}[label=(\roman*),leftmargin=2em]
    \item[$(i)$] Ioannou et al.~\cite{Ioannou_2020} prove that the local
    Hamiltonian problem for stoquastic local
    Hamiltonians is in $\StoqMA$
    (\Cref{thm:glob_stoq_stoqma}).
    Their proof requires only the decomposition from \Cref{lem:glob_stoq_decomp} and that the classical $H_j$ terms in the decomposition have efficiently computable classical energies.

    \item[$(ii)$] \Cref{prop:vgp_equiv_def} shows that $\mathcal{S}(H)$ is stoquastic and cospectral to $H$, when $H$ is promised to have VGP.

    \item[$(iii)$] Lemma~\ref{lem:stoq_map_decomp} shows that
    $\mathcal{S}(H)$ admits the required, efficiently computable decomposition and that the classical $H_j$ terms in the decomposition have efficiently computable energies
\end{enumerate}
Therefore the $\StoqMA$ verifier in Ref.~\cite{Ioannou_2020}, applied to the decomposition of $\mathcal{S}(H)$, estimates the ground-state energy of $H$.
\end{proof}

The mechanism behind the proof is now precise. The verifier of Ref.~\cite{Ioannou_2020} for a local stoquastic Hamiltonian accesses that Hamiltonian only through the decomposition of \Cref{lem:stoq_map_decomp} and the measurement of its diagonal terms $H_j$; by \Cref{lem:oracle-measurement}, each such measurement can be carried out to accuracy $\delta/m'$ given only a $\poly(n)$-size reversible oracle for the entries of $H_j$ together with the bound $\|H_j\| \leq \poly(n)$.
Applied to $\mathcal{S}(H)$: by \Cref{lem:stoq_map_decomp} its decomposition has $m' = \poly(n)$ terms; each entry $\braket{xy|\mathcal{S}(H)|\bar x y}= -\left|\braket{xy|H|\bar x y}\right|$ has magnitude equal to the absolute value of a sum of at most $\poly(n)$ local matrix elements of $H$, and so is computable by a uniform reversible circuit; and $\|\mathcal{S}(H)\|=\|H\|=\poly(n)$.
Choosing $\delta$ below the promise gap---which requires only $b=O(\log n)$ bits of precision, since $M = \poly(n)$ and $\delta = 1/\poly(n)$---the StoqMA verifier of Ref.~\cite{Ioannou_2020}, run on this oracle, decides the ground-state energy of $\mathcal{S}(H)$, and hence of $H$, since $\mathcal{S}(H)$ is cospectral to $H$ under the
VGP promise (\Cref{prop:vgp_equiv_def}).
Crucially, $\mathcal{S}(H)$ need not be local: the construction uses only the efficient reversible entry oracle and the $\poly(n)$ norm bound, both of which hold whether or not $\mathcal{S}(H)$ is local.

\subsection{Collapse: $\VGPMA = \StoqMA$}

\begin{corollary}
    \label{cor:vgpma_collapse}
    $\VGPMA = \StoqMA$.
\end{corollary}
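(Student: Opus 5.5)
The plan is to prove the two inclusions separately, since both follow quickly from results already established. For $\StoqMA \subseteq \VGPMA$, I would invoke \Cref{prop:containments}, or directly \Cref{cor:glob_stoq_stoqma}. Every problem in $\StoqMA$ is polynomial-time many-one reducible to $\STOQLH$, because $\STOQLH$ is $\StoqMA$-complete. Every stoquastic Hamiltonian has VGP: each off-diagonal entry has phase $-1$, so every holonomy equals $1$. Hence the identity map is a valid reduction from $\STOQLH$ to $\VGPLH$, in that a YES or NO instance of $\STOQLH$ is a YES or NO instance of $\VGPLH$ satisfying the VGP promise. Composing the two reductions puts every $\StoqMA$ problem into $\VGPMA$ by \Cref{def:vgpma}.

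For $\VGPMA \subseteq \StoqMA$, I would take any problem $\Pi \in \VGPMA$ together with its polynomial-time many-one reduction $f$ to $\VGPLH$. By \Cref{thm:vgp_stoqma} there is a $\StoqMA$ verifier $V$ for $\VGPLH$. The verifier for $\Pi$ on input $x$ first computes the instance $f(x) = (H,a,b)$ classically in polynomial time. It then runs $V$ on $f(x)$ with the same witness register.

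The point to check is that $\StoqMA$ is closed under polynomial-time many-one reductions that respect promises. Since $f$ maps YES instances of $\Pi$ to YES instances of $\VGPLH$ and NO instances to NO instances, the VGP and energy-gap promises hold on $f(x)$ whenever $x$ is in the promise of $\Pi$. The completeness and soundness thresholds of $V$ then transfer unchanged. Moreover, the composed verifier is still a uniform circuit over $\{X,\mathrm{CNOT},\mathrm{Toffoli}\}$ with $\ket{0}$ and $\ket{+}$ ancillas and a Hadamard-basis output measurement. This is because the classical description $f(x)$ is hardwired into the uniform circuit family, so it introduces no new gate types.

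I expect no genuine obstacle, since all the substance lies in \Cref{thm:vgp_stoqma}. The only subtle point is the promise handling: NO instances of $\Pi$ must map to Hamiltonians that still satisfy the VGP promise, and this holds by the definition of reduction to a promise problem. I would also remark that the thresholds of $V$ are those fixed in \Cref{def:stoqma}, so no amplification is needed beyond what the definition of $\StoqMA$ already provides.
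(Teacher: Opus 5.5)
Your proposal is correct and follows the paper's proof: $\StoqMA \subseteq \VGPMA$ comes from \Cref{prop:containments} (stoquastic implies VGP, together with $\StoqMA$-completeness of $\STOQLH$), and $\VGPMA \subseteq \StoqMA$ comes from \Cref{thm:vgp_stoqma} together with closure of $\StoqMA$ under promise-respecting Karp reductions, which the paper uses implicitly and you spell out. One minor technicality: the composed verifier's thresholds are $a(|f(x)|), b(|f(x)|)$, which are not functions of $|x|$ alone. To fit \Cref{def:stoqma} literally, pad $f(x)$ to a fixed length $p(|x|)$ (no amplification is needed) rather than asserting that the thresholds transfer unchanged.
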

\begin{proof}
    $\VGPLH \in \StoqMA$ by \Cref{thm:vgp_stoqma}.
    \Cref{prop:containments} gives $\StoqMA \subseteq \VGPMA$, yielding the collapse $\VGPMA = \StoqMA$.
\end{proof}

\begin{corollary}
\label{cor:conceptual}
Within PMR-QMC and Hamiltonian complexity, VGP characterizes the sign-problem-free boundary.
\end{corollary}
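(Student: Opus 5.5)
The statement to prove is \Cref{cor:conceptual}: within PMR-QMC and Hamiltonian complexity, VGP characterizes the sign-problem-free boundary. Since it is a conceptual corollary, I will make precise what ``characterizes the boundary'' means in each of the two settings. Then I will show that the earlier results already give matching characterizations in both.

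\emph{PMR-QMC.} The sign-problem-free condition is nonnegativity of all closed-walk weights in the PMR expansion. \Cref{thm:qmc_boundary} (from Ref.~\cite{Hen_2021}) says this holds if and only if $H$ has VGP, so the boundary coincides exactly with the VGP/non-VGP boundary. I will add that this characterization is invariant under diagonal unitaries (\Cref{cor:vgp_diagonal_invariant}), whereas stoquasticity is not. Hence VGP, not stoquasticity, is the canonical invariant description: any stoquastic Hamiltonian lies in the VGP class, and \Cref{ex:twoqubit_vgp} exhibits VGP Hamiltonians outside the stoquastic class that are nonetheless sign-problem-free.

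\emph{Hamiltonian complexity.} I interpret ``the sign-problem-free boundary'' as the $\StoqMA$ level of the hierarchy $\MA\subseteq\StoqMA\subseteq\QMA$, i.e.\ the largest natural Hamiltonian class whose local Hamiltonian problem stays $\StoqMA$-complete. I will argue three points in turn.
\begin{enumerate}[label=(\alph*),leftmargin=2em]
\item \Cref{cor:stoqma_complete} and \Cref{cor:vgpma_collapse} show that enlarging the promise from stoquastic to VGP does not leave $\StoqMA$.
\item The VGP class is closed under the natural symmetry group, diagonal unitaries (\Cref{cor:vgp_diagonal_invariant}). Moreover, every Hamiltonian reachable from a stoquastic one by a diagonal unitary is VGP by \Cref{prop:vgp_equiv_def}(v). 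So VGP is the diagonal-unitary orbit closure of the stoquastic class.
\item Dropping the VGP promise entirely yields general local Hamiltonians, whose problem is $\QMA$-complete.
\end{enumerate}
Hence VGP is the maximal diagonal-unitary-invariant extension of stoquasticity preserving $\StoqMA$. \Cref{thm:vgp_hard_to_stoquastize} moreover shows this extension is not merely cosmetic, since some VGP instances are hard to stoquastize efficiently.

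\emph{Main obstacle.} The main obstacle is the word ``characterizes'' in the complexity setting. It cannot mean that every non-VGP family is $\QMA$-hard, because individual non-VGP families may be easy. I would therefore state the claim at the level of classes and invariance, following the a priori trichotomy from \Cref{subsec:this_work}. I would first try to phrase the corollary as the conjunction of \Cref{thm:qmc_boundary}, \Cref{cor:vgpma_collapse}, and \Cref{cor:vgp_diagonal_invariant}, and check that each clause follows verbatim from those results.
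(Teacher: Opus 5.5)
Your backbone matches the paper's proof. That proof simply cites \Cref{thm:qmc_boundary} for the PMR-QMC half, and \Cref{cor:stoqma_complete}, \Cref{thm:vgp_stoqma} and \Cref{cor:vgpma_collapse} for the complexity half; your fallback in the last sentence is that proof. The genuine problem is the stronger reading you adopt for the complexity half. You call VGP ``the largest natural class'' staying $\StoqMA$-complete and conclude it is ``the maximal diagonal-unitary-invariant extension of stoquasticity preserving $\StoqMA$.'' This does not follow from (a)--(c). Point (b) identifies VGP as the diagonal-unitary orbit of the stoquastic matrices, i.e.\ the \emph{smallest} invariant class containing them, not the largest. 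Point (c) concerns only the class of all local Hamiltonians, so it says nothing about invariant classes strictly between VGP and everything. Also, (b) holds only at the matrix level: by the $n$-qubit example of \Cref{subsec:separating}, a local VGP Hamiltonian need not be a diagonal conjugate of any \emph{local} stoquastic one.

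In fact the maximality claim is false. For every $H$, $\lambda_{\min}(\mathcal{S}(H)) \le \lambda_{\min}(H)$: evaluate $\mathcal{S}(H)$ on the entrywise modulus of a ground state of $H$. Moreover \Cref{lem:stoq_map_decomp} assumes no VGP. So the proof of \Cref{thm:vgp_stoqma} uses the promise only to secure $\lambda_{\min}(\mathcal{S}(H)) = \lambda_{\min}(H)$. The same verifier therefore puts the local Hamiltonian problem on $W := \{H : \lambda_{\min}(\mathcal{S}(H)) = \lambda_{\min}(H)\}$ in $\StoqMA$. $W$ is invariant under diagonal conjugation, which fixes both $|H_{xy}|$ and the spectrum, and it strictly contains VGP. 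For instance, take $H = \ket{0}\!\bra{0}_1\otimes(-X_2-X_3) + \ket{1}\!\bra{1}_1\otimes(10I + X_2 + X_3 + X_2X_3)$. It has a triangle of holonomy $-1$ in the sector $z_1 = 1$. Yet $H$ and $\mathcal{S}(H)$ both have ground energy $-2$, attained in the VGP sector $z_1 = 0$; the other sector sits at $9$ for $H$ and $7$ for $\mathcal{S}(H)$. This is why the paper, in \Cref{sec:discussion}, explicitly disclaims any instance-wise $\StoqMA$/$\QMA$ dividing line, as your own ``main obstacle'' paragraph anticipates. Drop the maximality conclusion and state the complexity half as ``$\StoqMA$-completeness is retained on the entire VGP class, so $\VGPMA = \StoqMA$.'' Then your argument is the paper's.
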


\begin{proof}
Theorem~\ref{thm:qmc_boundary} identifies VGP as the exact
sign-problem-free boundary for PMR-QMC.
\Cref{cor:stoqma_complete} gives $\StoqMA$-completeness of
$\VGPLH$, and Theorem~\ref{thm:vgp_stoqma} together with
Corollary~\ref{cor:vgpma_collapse} gives
$\VGPMA = \StoqMA$.
\end{proof}

The VGP promise asserts only the \emph{existence} of a stoquastizing  diagonal $D$, not that $D$ is
efficiently computable or has a succinct circuit description.
These are distinct properties.
The collapse $\VGPMA = \StoqMA$ shows that a stoquastizing  diagonal is
\emph{computationally irrelevant} for the ground-state energy decision problem---the VGP promise allows one to efficiently query entries of a cospectral stoquastic Hamiltonian without recovering $D$.
But a stoquastizing  diagonal may remain \emph{representationally inaccessible}: as established in \Cref{ex:n_qubit_family_non_local_stoquastizing _diagonal} and \Cref{thm:vgp_hard_to_stoquastize}, a stoquastizing  diagonal may map local Hamiltonians only to descriptions with exponentially many terms, admitting no efficiently realizable representation (\Cref{def:eff_realizable})---even when the stoquastizing diagonal itself has a simple circuit.

A natural follow-up question is: what is the complexity of recognizing VGP in $k$-local Hamiltonians?
We explore this in the next section in detail.
But first, we would like to classify the complexity of another VGP local Hamiltonian problem.

\subsection{A VGP Frustration-Free Local Hamiltonian Problem in $\MA$}

Suppose $H_0 = \sum_i D_iP_i$ is a $k$-PMR-local Hamiltonian that is promised to have VGP.
\Cref{cor:efficient_stoquastization_pmr_local} states that one may efficiently construct a decomposition of the cospectral stoquastic $\mathcal{S}(H_0)$ as a linear dependence relation in the Pauli operator basis.
Additionally, $\mathcal{S}(H_0)$ remains $k$-PMR-local.
Now decompose $\mathcal{S}(H_0) = \sum_i D'_iP'_i$ into PMR form.
For every PMR term, put $h_i := D'_iP'_i$ as a $k$-local Hamiltonian term.
Then $H := \sum_i h_i$ is a $k$-local, $k$-\emph{termwise stoquastic} Hamiltonian.
\begin{definition}[from Ref.~\cite{Ioannou_2020}]
    \label{def:termwise_stoquastic}
    A local Hamiltonian $H$ is considered $m$-\emph{termwise stoquastic} if $H$ admits a decomposition into $m$-local terms $H = \sum_i h_i$ such that each $h_i$ is Hermitian, real, and obeys $\bra{x}h_i\ket{y} \leq 0$ for all pairs $x \neq y$.
    \qed
\end{definition}
Additionally, by \Cref{prop:vgp_equiv_def}, $H_0$ is spectrally equivalent to $H$ under the VGP promise.
We use this convenient transformation as motivation for investigating whether the \emph{frustration-free local Hamiltonian problem} is in $\MA$ for all such $H_0$.
\begin{definition}
    \label{def:frustration_free_local_hamiltonian_prob}
    The \emph{frustration-free local Hamiltonian problem} is as follows:
    \begin{itemize}[leftmargin=5.5em]
        \item[\textbf{Input}:] A local Hamiltonian decomposed in PMR form $H=\sum_i h_i = \sum_i D_iP_i$, and a parameter $\varepsilon \geq 1/\poly(n)$.
        \item[\textbf{Promise}:] Either there exists a normalized state $\ket{\psi}$ such that $\bra{\psi}h_i\ket{\psi}=\lambda_{\min}(h_i)$ for every $i$, or for every normalized $\ket{\psi}$,
        \begin{align*}
            \sum_i \left(\bra{\psi}h_i\ket{\psi}-\lambda_{\min}(h_i)\right) \geq \varepsilon.
        \end{align*}
        \item[\textbf{Problem}:] Decide which is the case. \hspace{3cm} \qed
    \end{itemize}
\end{definition}
Ref.~\cite{Bravyi_ff_2009} proves that the frustration-free $k$-termwise stoquastic local Hamiltonian problem is contained in $\MA$ when the local terms in the decomposition are promised to be positive semidefinite (each term has all nonnegative eigenvalues).
They prove $\MA$-hardness for $k \geq 6$.
A common ``standardization'' for frustration-free Hamiltonians (see Ref.~\cite{Sattath_2016}) shows that the local terms need not be positive semidefinite for $\MA$-completeness to survive, so we do not need this extra assumption for the problem statement above.
We claim an analogous result holds for VGP Hamiltonians.
Because frustration-freeness is a property of the term decomposition rather than of the spectrum, we first record that a stoquastizing  diagonal acts termwise on a PMR decomposition:
\begin{lemma}[Termwise stoquastization]
    \label{lem:termwise_stoq}
    Let $H_0 = D_0 + \sum_i D_iP_i$ be cast in PMR form and suppose $D$ is a stoquastizing  diagonal for $H_0$.
    Then $D(D_iP_i)D^\dagger = \mathcal{S}(D_iP_i)$ for every $i$.
    Consequently, each term's spectrum is preserved, $\bra{\psi}D_iP_i\ket{\psi} = \bra{D\psi}\mathcal{S}(D_iP_i)\ket{D\psi}$ for every state $\ket{\psi}$, and $\ket{\psi} \mapsto D\ket{\psi}$ is a bijection between simultaneous minimizers of the terms $\{D_iP_i\}$ and of $\{\mathcal{S}(D_iP_i)\}$.
\end{lemma}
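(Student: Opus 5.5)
The plan is to work entry by entry, using two facts: the PMR terms are entry-wise disjoint, and conjugation by a diagonal unitary acts on each matrix entry separately. Write $D=\sum_z e^{i\theta_z}\ket{z}\!\bra{z}$. For any operator $A$ we have $\braket{x|DAD^\dagger|y}=e^{i(\theta_x-\theta_y)}A_{xy}$. Applying this to $A=D_iP_i$ and summing over $i$ gives
\begin{align*}
DH_0D^\dagger = D_0 + \sum_i D(D_iP_i)D^\dagger,
\end{align*}
since $D_0$ commutes with $D$. By \Cref{def:stoq_diagonal} the left side is stoquastic, and by \Cref{prop:vgp_equiv_def}~(vi) it equals $\mathcal{S}(H_0)$.

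First I compare the two sides entry by entry. Fix an off-diagonal pair $(x,y)$. Because the permutation matrices $P_i$ are entry-wise disjoint, at most one term $D_iP_i$ has a nonzero $(x,y)$ entry. For that term, $(D_iP_i)_{xy}=(H_0)_{xy}$, while every other term vanishes there. Stoquasticity of $DH_0D^\dagger$ forces $e^{i(\theta_x-\theta_y)}(H_0)_{xy}=-|(H_0)_{xy}|$. It follows that $\braket{x|D(D_iP_i)D^\dagger|y}=-|(D_iP_i)_{xy}|=\mathcal{S}(D_iP_i)_{xy}$. At positions outside the support of $D_iP_i$, both sides are zero.

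Next I check the diagonal entries, which I expect to be the only subtle point. A PMR term with $P_i\neq I$ has zero diagonal: for the canonical decomposition, $P_i$ is a nontrivial tensor product of cyclic shifts and therefore fixes no basis state. Both $D(D_iP_i)D^\dagger$ and $\mathcal{S}(D_iP_i)$ therefore have zero diagonal, and $\mathcal{S}$ preserves diagonals in any case. The term $D_0$ is fixed by both maps. This establishes $D(D_iP_i)D^\dagger=\mathcal{S}(D_iP_i)$ for every $i$.

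The consequences follow by unitary conjugation. Each term is unitarily equivalent to its proxy, so its spectrum is preserved, and in particular $\lambda_{\min}(D_iP_i)=\lambda_{\min}(\mathcal{S}(D_iP_i))$. The expectation identity is $\bra{\psi}D_iP_i\ket{\psi}=\bra{\psi}D^\dagger\mathcal{S}(D_iP_i)D\ket{\psi}$. Hence $\ket{\psi}$ attains $\lambda_{\min}$ on every term $D_iP_i$ if and only if $D\ket{\psi}$ does so on every term $\mathcal{S}(D_iP_i)$. Since $D$ is unitary and invertible, the map $\ket{\psi}\mapsto D\ket{\psi}$ is a bijection between the two sets of simultaneous minimizers. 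The same argument covers the diagonal term $D_0$, which $D$ fixes.
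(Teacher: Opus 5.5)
Your proof is correct and follows the paper's argument. Entry-wise disjointness of the PMR terms assigns each per-entry constraint $e^{i(\theta_x-\theta_y)}(H_0)_{xy}=-|(H_0)_{xy}|$ to a single term, which gives $D(D_iP_i)D^\dagger=\mathcal{S}(D_iP_i)$, and the remaining claims follow from unitarity; your explicit diagonal check (both conjugation by $D$ and $\mathcal{S}$ fix diagonal entries) fills in a step the paper leaves implicit and makes the cyclic-shift remark unnecessary.
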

\begin{proof}
    PMR terms are matrix entry-wise disjoint, so every non-zero off-diagonal entry of $H_0$ belongs to exactly one term.
    The stoquastization constraint $e^{i(\theta_x - \theta_y)}(H_0)_{xy} = -|(H_0)_{xy}|$ therefore acts on each term separately, giving $D(D_iP_i)D^\dagger = \mathcal{S}(D_iP_i)$; the remaining claims follow from unitarity.
\end{proof}

\begin{theorem}
    \label{thm:ff-pmr-local-lh-ma}
    The frustration-free local Hamiltonian problem is in $\MA$ for PMR-local Hamiltonians $H_0$ that are promised to have VGP.
\end{theorem}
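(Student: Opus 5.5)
The plan is a reduction to the result of Ref.~\cite{Bravyi_ff_2009}: that the frustration-free $k$-termwise stoquastic local Hamiltonian problem lies in $\MA$. Given a $k$-PMR-local input $H_0 = D_0 + \sum_i D_iP_i$ promised to have VGP, together with $\varepsilon$, I first apply \Cref{prop:efficient_stoquastization_pmr_local}. This computes in time $O(2^k m)$ the $k$-local Pauli expansions of $\mathcal{S}(D_iP_i)$. I then set
\begin{align*}
    H := D_0 + \sum_i \mathcal{S}(D_iP_i),
\end{align*}
with terms $h'_i := \mathcal{S}(D_iP_i)$. Each $h'_i$ is Hermitian, since $\mathcal{S}$ preserves Hermiticity entrywise, and is real with nonpositive off-diagonal entries. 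Each is $k$-local because $\supp(\mathcal{S}(D_iP_i)) \subseteq \supp(D_iP_i)$. The diagonal part $D_0$ is split into its local diagonal pieces, which are trivially stoquastic. So $H$ is $k$-termwise stoquastic in the sense of \Cref{def:termwise_stoquastic}, and the map $H_0 \mapsto H$ is classical polynomial time.

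Next I show the map preserves both promise cases. The VGP promise gives a stoquastizing diagonal $D$ for $H_0$ (\Cref{prop:vgp_equiv_def}), which we never need to construct. By \Cref{lem:termwise_stoq}, $D(D_iP_i)D^\dagger = \mathcal{S}(D_iP_i)$ for every $i$, and $D D_0 D^\dagger = D_0$. Hence $\lambda_{\min}(h'_i) = \lambda_{\min}(D_iP_i)$ and $\bra{\psi}D_iP_i\ket{\psi} = \bra{D\psi}h'_i\ket{D\psi}$ for every term, including the diagonal ones.
\begin{itemize}
    \item \emph{YES case.} A simultaneous minimizer $\ket{\psi}$ for $H_0$ yields the simultaneous minimizer $D\ket{\psi}$ for $H$.
    \item \emph{NO case.} Every $\ket{\phi}$ equals $D\ket{\psi}$ for $\ket{\psi} = D^\dagger\ket{\phi}$, so the summed excess energy for $H$ at $\ket{\phi}$ equals that for $H_0$ at $\ket{\psi}$, which is $\geq \varepsilon$.
\end{itemize}
Thus $(H_0,\varepsilon)$ maps to a YES/NO instance of the termwise stoquastic frustration-free problem with the same gap.

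Finally I need the instance in the form Ref.~\cite{Bravyi_ff_2009} requires: positive semidefinite local terms with frustration-freeness meaning zero ground energy. I replace each $h'_i$ by $h'_i - \lambda_{\min}(h'_i) I$. The eigenvalue is computable exactly or to $1/\poly$ precision, since the term is a $2^k\times 2^k$ matrix. The shift only changes the diagonal, so termwise stoquasticity is kept. The subtlety is that Bravyi--Terhal use projector terms, or terms whose null space is what matters. For this I would invoke the standardization of Ref.~\cite{Sattath_2016}: replace each shifted term by the projector onto the complement of its kernel, which must be shown still stoquastic, or argue directly that their random-walk verifier only needs nonnegative-matrix structure of $I - \beta h'_i$.

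I expect this last step to be the main obstacle. Projectors onto the excited space of a stoquastic term need not be stoquastic in general, so I would first try to use the Bravyi--Terhal verifier directly with $G_i = I - \beta (h'_i - \lambda_{\min}(h'_i))$ for small $\beta$. These matrices are entrywise nonnegative, and the $\varepsilon$ gap translates to a $1/\poly$ spectral gap after rescaling by $\|h'_i\| \leq \poly(n)$. This gives membership in $\MA$.
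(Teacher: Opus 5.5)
Your proposal is correct and is essentially the paper's proof: compute the termwise stoquastic proxies $\mathcal{S}(D_iP_i)$ via \Cref{prop:efficient_stoquastization_pmr_local}, use \Cref{lem:termwise_stoq} to carry simultaneous minimizers and the total violation across with the same gap, and then invoke the verifier of Ref.~\cite{Bravyi_ff_2009} (the paper simply cites the standardization of Ref.~\cite{Sattath_2016} for the positive-semidefinite form). The step you call the main obstacle is not one: after shifting a stoquastic term to $\lambda_{\min}=0$, Perron--Frobenius on each irreducible block (as in \Cref{thm:vgp_pf}) shows its kernel is spanned by entrywise nonnegative vectors with disjoint supports, so the kernel projector is entrywise nonnegative and its complement is stoquastic; the one real point of care is that splitting $D_0$ into local diagonal pieces must match the term list of the input instance, since a state minimizing $D_0$ as a whole need not minimize each piece separately.
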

\begin{proof}
    Mapping $H_0$ to the $k$-termwise stoquastic Hamiltonian $\mathcal{S}(H_0) = \sum_i \mathcal{S}(D_iP_i)$ can be done efficiently, as we argued above.
    By \Cref{lem:termwise_stoq}, $\ket{\psi} \mapsto D\ket{\psi}$ matches simultaneous minimizers of the two term sets and preserves the total violation $\sum_i (\bra{\psi}h_i\ket{\psi} - \lambda_{\min}(h_i))$, so $H_0$ is a YES (resp.\ NO) instance if and only if $\mathcal{S}(H_0)$ is, with the same promise gap.
    Thus the decision problem is in $\MA$ by the proof in Ref.~\cite{Bravyi_ff_2009}.
\end{proof}
We remark that $\MA$-completeness of the above decision problem remains open; the $\MA$-hardness construction of \cite{Bravyi_ff_2009} concerns whether a normalized state can simultaneously minimize the local terms of a constructed clock Hamiltonian, but not necessarily its PMR terms.
So, their proof of $\MA$-hardness cannot immediately be ported here.

This gives a frustration-free local Hamiltonian problem in $\MA$ for a natural subclass of VGP Hamiltonians that need not be stoquastic in the input basis, and for which constructing a stoquastizing diagonal may be computationally intractable: discussion of \Cref{thm:loc_vgp_hardness} demonstrates that no efficiently realizable stoquastizing diagonal may exist in the PMR-local case, and we will prove in the next section how difficult it can be to decide the existence of a stoquastizing diagonal for PMR-local $H$.

\section{Complexity of VGP Recognition}
\label{sec:vgp_complexity}

Since VGP characterizes the structural boundaries around $\StoqMA$, we pose the next follow-up question: how hard is it to recognize where a given local Hamiltonian problem sits relative to these boundaries?
That is, what is the complexity of recognizing VGP in a local Hamiltonian?

It may be helpful to review some preliminary definitions of some relevant complexity classes before we proceed. For these definitions, see Appendix~\ref{app:complexity_class_definitions}.

Now, we will proceed with our discussion of the complexity of recognizing VGP.
We will compare to the following problem:
\begin{definition}The \emph{stoquasticity recognition problem} is:
    \begin{itemize}[leftmargin=5.5em]
        \item[\textbf{Input}:] $H \gets$ $k$-local Hamiltonian on $n$ qubits
        \item[\textbf{Problem}:] Decide if $H$ is stoquastic.\qed
    \end{itemize}
\end{definition}
The stoquasticity recognition problem for $k$-local Hamiltonians is $\coNP$-complete when $k > 2$~\cite{Ioannou_2020} and can be done efficiently when $k = 2$~\cite{Klassen_2019}. 
We study the same problem for VGP:
\begin{definition}
    The \emph{VGP recognition problem} is:
    \begin{itemize}[leftmargin=5.5em]
        \item[\textbf{Input}:] $H \gets$ $k$-local Hamiltonian on $n$ qubits
        \item[\textbf{Problem}:] Decide if $H$ has VGP.\qed    
    \end{itemize}
\end{definition}
We find that VGP recognition for 2-local Hamiltonians is $\coNP$-hard~(\Cref{thm:2loc_vgp_hardness}), and conjecture that it is $\coNP$-complete.
For local Hamiltonians and geometrically local Hamiltonians in general, we find that VGP recognition is $\PSPACE$-complete~(\Cref{thm:loc_vgp_hardness}).
Nonetheless, in the next section, we will identify a broad, natural class of Hamiltonians for which the VGP property can be efficiently certified.
What's more, in these cases, a locality-preserving stoquastizing diagonal can also be obtained in polynomial time (\Cref{thm:dstoq_efficient},\Cref{cor:dstoq_ell}).

\subsection{Recognizing VGP in 2-local Hamiltonians}
\begin{theorem}
    \label{thm:2loc_vgp_hardness}
    The VGP recognition problem is $\coNP$-hard for 2-local Hamiltonians.
\end{theorem}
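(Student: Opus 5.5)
We plan a polynomial-time reduction from the complement of $\PARTITION$, which is $\coNP$-complete. Given positive integers $a_1,\ldots,a_n$ (in binary), we build a 2-local Hamiltonian $H$ on $n+2$ qubits: spectators $1,\ldots,n$ and two qubits $a,b$. We will show that $H$ has VGP if and only if there is no $z \in \{\pm1\}^n$ with $\sum_j a_j z_j = 0$. Since PARTITION asks for such a balanced sign assignment, this makes VGP recognition $\coNP$-hard.

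\textbf{Construction.} Set
\begin{align*}
H := X_a\Big(\sum_{j=1}^n a_j Z_j + \tfrac12 Z_b\Big) - X_b .
\end{align*}
Every term acts on at most two qubits, and the coefficients have polynomially many bits. In PMR form (\Cref{def:pmr}), $H = D_aX_a + D_bX_b$ with $D_a = f(Z) + \tfrac12 Z_b$, where $f = \sum_j a_j Z_j$, and $D_b = -I$.

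\textbf{Transition graph.} Spectator bits are never flipped, so $G_H$ splits into components indexed by spectator configurations $y$. Write $f(y)\in\Z$ for the value of $f$ on $y$. Each component lives on the four states of $(a,b)$:
\begin{enumerate}[label=(\roman*),leftmargin=2em]
\item The two $X_b$ edges always carry weight $-1$.
\item The $X_a$ edge at $b=0$ carries $w_0 = f(y)+\tfrac12$, and the one at $b=1$ carries $w_1 = f(y)-\tfrac12$.
\end{enumerate}
Because $f(y)$ is an integer, $w_0$ and $w_1$ are never zero. Hence every component is exactly a 4-cycle, and these 4-cycles (one per $y$) form a cycle basis of $G_H$.

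\textbf{Holonomy.} By the cycle-basis characterization, item $(iv)$ of \Cref{prop:vgp_equiv_def}, it suffices to check these 4-cycles. From \Cref{def:vgp}, since $H$ is real, the holonomy of the cycle at $y$ is
\begin{align*}
\Phi = (-1)^4\,(-1)^2\,\mathrm{sgn}(w_0)\,\mathrm{sgn}(w_1) = \mathrm{sgn}\big(f(y)^2 - \tfrac14\big).
\end{align*}
This equals $1$ iff $|f(y)| \geq 1$, i.e. iff $f(y) \neq 0$. Therefore $H$ has VGP iff no spectator configuration satisfies $\sum_j a_j z_j = 0$, i.e. iff the PARTITION instance is a NO instance.

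\textbf{Main obstacle.} The delicate step is making sure the 4-cycles are the only cycles and that no edge weight vanishes. A vanishing weight would delete an edge and could hide the negative holonomy. The half-integer shift $\tfrac12 Z_b$ against integer $f$ handles this, and the fiber structure (only $a,b$ ever flip) rules out other cycles. The remaining checks are routine: the reduction runs in polynomial time, and it fits the paper's input model, since all phases are $\pm1$, so $m=2$.

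If one prefers strong $\coNP$-hardness, i.e. unary-bounded coefficients, the same template would instead need the sign of an Ising-type function to encode an NP-hard question. We expect that to require extra gadgets, because 2-locality restricts $D_a$ to be affine in the spectator $Z$'s. For the theorem as stated, the weak hardness via PARTITION suffices.
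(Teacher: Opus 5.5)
Your proof is correct and takes essentially the paper's route: a reduction from $\PARTITION$ through the spectator encoding $\sum_j a_j Z_j$, with a $\pm\tfrac12$ shift so that each spectator fiber's cycle holonomy is $\mathrm{sgn}\big(f(y)^2-\tfrac14\big)$. The only difference is the gadget shape. The paper uses three PMR terms $X_1$, $X_2$, $X_1X_2$, so each fiber is a $K_4$ whose triangles carry weight $H_+H_-$. Your $\tfrac12 Z_b$ shift instead makes each fiber a single 4-cycle, which slightly simplifies the cycle-basis bookkeeping.
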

\begin{proof}
    See Appendix~\ref{app:proof_of_twolocalvgp_conphard}.
\end{proof}

$\coNP$ membership of the VGP recognition problem for 2-local Hamiltonians is currently unknown. 
One natural approach to proving $\coNP$ membership is as follows:
\begin{enumerate}
    \item For any $\ket{x}$ and $\ket{y}$ in the same connected component $C$ of $G_H$, there is a walk from $\ket{x}$ to $\ket{y}$ of size $O(\poly(n))$. This proves that the \emph{diameter}, or the longest shortest-path distance between any two vertices, of $C$ is polynomial in size.
    \item We can always choose a \emph{cycle basis} of an undirected graph $G$ that only consists of \emph{isometric cycles}\footnote{Meaning the distance between $u$ and $v$ in the cycle is the same as the shortest distance in the ambient graph}.
    The size of an isometric cycle of $C$ is at most linear in the diameter of $C$. 
    \item $H$ does not have VGP if and only if for every choice of cycle basis $\mathcal{C}$ for $G_H$, there exists some $\gamma \in \mathcal{C}$ with non-trivial holonomy (\Cref{prop:vgp_equiv_def}). If we choose $\mathcal{C}$ to only consist of isometric cycles, such a $\gamma$ would be a polynomial-size witness that certifies $H$ does not have VGP.
    This would place VGP recognition in $\coNP$.
\end{enumerate}
Once Step 1 is shown, the rest of the proof follows. Step 1 is currently a conjecture (Problem \ref{prob:2_local_poly_diam}).

If the VGP recognition problem is indeed $\coNP$-complete for 2-local Hamiltonians, then the problem of VGP recognition is efficiently reducible to a Boolean formula $\psi$ of an UNSAT instance (by the Cook-Levin theorem~\cite{Cook_1971,Levin_1973}).
Such a reduction could hypothetically allow one to use a modern SAT solver to decide if a given Hamiltonian has VGP, often in near-linear time (in practice)~\cite{Biere_2021}.
However, this is not guaranteed, especially if the reduction is not very straightforward.

\subsection{Recognizing VGP in $k$-local \& Geometrically Local Hamiltonians}
Recognition of VGP experiences a sharp transition in difficulty once 5-local interactions are allowed; even in a geometrically local setting.
\begin{theorem}
    \label{thm:loc_vgp_hardness}
    The VGP recognition problem is $\PSPACE$-complete for local Hamiltonians, geometrically local, and PMR-local Hamiltonians.
\end{theorem}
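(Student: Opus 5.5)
Membership in $\PSPACE$ comes first, via $\mathsf{NPSPACE}=\PSPACE$ (Savitch) and closure of $\PSPACE$ under complement. A certificate that $H$ does \emph{not} have VGP is a closed walk in $G_H$ with non-trivial holonomy. By \Cref{prop:vgp_equiv_def}(iii) it suffices to find one such walk, possibly of exponential length. A nondeterministic polynomial-space machine can guess a start vertex $z_0$ and then walk step by step. At each step it stores only the current vertex, the start vertex, and the accumulated phase as an integer modulo $m$, which takes $O(\log m)=\poly(n)$ bits by the input convention in \Cref{sec:preliminaries}. It accepts upon returning to $z_0$ with phase $\neq 0$, after accounting for the factor $(-1)^L$, which is tracked by adding $m/2$ at each step. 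Enumerating the nonzero neighbours of a vertex uses locality: each of the $\poly(n)$ local terms flips at most $k$ bits, so candidate neighbours can be listed and their matrix entries computed in polynomial time. Thus non-VGP is in $\mathsf{NPSPACE}=\PSPACE$, and VGP recognition is in $\mathsf{coPSPACE}=\PSPACE$. This argument covers the $k$-local case and hence the PMR-local and geometrically local cases as well.

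For hardness, I would reduce from a $\PSPACE$-complete reachability problem on configurations with a local, reversible transition rule. The natural choice is reachability for a reversible, geometrically local system, such as a reversible cellular automaton, reversible Turing machine tape dynamics, or the $\FREENCLREV$ (reversible nondeterministic constraint logic) problem suggested by the paper's macros. Reversibility is essential, since $G_H$ is undirected. The instance asks whether a target configuration $t$ is reachable from a source $s$ under local reversible moves.

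I would encode each configuration as a computational basis state and each local move as a PMR term $D_iP_i$. Here $P_i$ flips the $O(1)$ bits changed by the move and $D_i$ is the projector guarding the move's precondition, which makes the Hamiltonian PMR-local, and geometrically local when the moves are. The connected component of $s$ in $G_H$ is then exactly the set of configurations reachable from $s$. To detect $s\leadsto t$, I would add a "witness" gadget that closes a cycle with non-trivial phase exactly when $t$ is reachable. One option is to add one extra qubit $c$ and to duplicate the configuration space into the sectors $c=0$ and $c=1$, with identical real stoquastic dynamics on each sector. Two links between the sectors are then added: a term $-X_c\otimes\ket{s}\!\bra{s}$ and a term $+X_c\otimes\ket{t}\!\bra{t}$, the latter carrying phase $-1$. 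If $t$ is reachable from $s$, the walk $s\to t$ in sector 0, the link at $t$, the reverse walk $t\to s$ in sector 1, and the link at $s$ together form a closed walk with holonomy $-1$, so $H$ is not VGP. If $t$ is not reachable from $s$, the two links lie in different components. Every cycle then either lives within one sector with stoquastic signs, or crosses between sectors an even number of times through the same link. The paired traversals of a single link cancel, because the link is real, so VGP holds.

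The problem with this gadget is that the projectors $\ket{s}\!\bra{s}$ and $\ket{t}\!\bra{t}$ are $n$-local. I would handle this by making $s$ and $t$ locally recognizable. For example, the reachability instance would be designed so that the target is signalled by a single designated "accept" bit being set, and the source by a designated "start" bit, with the dynamics arranged so that these bits identify the relevant configurations uniquely. Equivalently, the links would be triggered only by $O(1)$-local conditions at a fixed location, as in standard reversible TM or NCL constructions where acceptance is a local event. I expect this to be the main obstacle, and it presumably explains why the paper states the bound at 5-locality. Configurations other than $s$ that also satisfy the local link condition must not create spurious phase cycles. To ensure this, I would restrict the source gadget so that its link fires only on the canonical start configuration, checked locally via a start marker that the dynamics can never recreate, and I would verify that every other configuration carrying the accept bit is either disconnected from the start marker or pairs up harmlessly. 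Once these locality checks are done, the reduction is polynomial-time and geometrically local, which gives $\PSPACE$-hardness for all three locality notions simultaneously.
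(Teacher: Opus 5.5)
\paragraph{Membership and the gadget idea.} Your $\PSPACE$ membership argument matches the paper's: a nondeterministic walk that stores the current vertex and the accumulated exponent modulo $m$, followed by Savitch's theorem and closure under complement. Your sector-doubling gadget is also sound as literally stated. With full projectors $\ket{s}\!\bra{s}$, $\ket{t}\!\bra{t}$ and $s \neq t$, the Hamiltonian is non-VGP exactly when $t$ is reachable from $s$.

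\paragraph{The gap.} The hardness argument breaks at the point you flag and then leave open. Once the two projectors are replaced by $O(1)$-local guards (a start marker, an accept bit), each guard is satisfied by exponentially many bitstrings. The guarded move terms $D_iP_i$ act on all of $\{0,1\}^n$, including garbage configurations whose dynamics you do not control: a start marker next to garbage tape, several heads, or illegal NCL orientations. Any marker-bearing garbage configuration connected in $G_H$ to an accept-bearing one gives a closed walk with holonomy $-1$. So does a single configuration satisfying both guards, for example two heads, one in the start state and one in the accept state. Either way, NO instances can map to non-VGP Hamiltonians. A start marker that ``the dynamics can never recreate'' does not help, because the trouble comes from configurations that already carry it with arbitrary surroundings. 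The claim that every such configuration ``is either disconnected or pairs up harmlessly'' is exactly what has to be proved.

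There is also a collision problem. Both links use the same permutation $X_c$, so they form one PMR term. On a configuration satisfying both local guards, the entries $-1$ and $+1$ add and can cancel, and your sign bookkeeping no longer applies.

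\paragraph{How the paper closes it.} The paper uses two ingredients your proposal lacks.

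First, it changes the source problem instead of trying to recognize a start configuration locally. $\FREENCLREV$ asks whether there are mutually reachable configurations $x,y$ at which red edges $e_A$ and $e_B$, respectively, can legally be reversed. Both endpoint conditions are then the local legality indicators $h_{e_A}, h_{e_B}$, and the starting configuration is existentially quantified.

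Second, it neutralizes illegal configurations structurally. In place of a single link qubit, the gadget is an odd-length ring of ancilla $XX$ transitions. These are distinct permutations, so no entries collide. The ring is closed by $h_{e_A}X_aX_b$ and $h_{e_B}X_bX_{p_0}$, and every other ring edge is guarded by an inflow indicator of an NCL vertex (one ring edge per $\OR$ vertex, two per $\AND$ vertex). All gadget entries are nonnegative and all NCL move entries are negative, so the holonomy of a closed walk is $(-1)^{\#\text{gadget edges}}$. Because the ring has odd length, this equals $-1$ exactly when each ring edge is used an odd number of times.

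The locked-edge lemma then finishes the argument. Edges incident to a deficient vertex cannot be reversed, so the deficiency persists throughout the whole component. Hence some guarded ring edge has zero weight on every configuration of any component containing an illegal configuration, and a full traversal of the ring is impossible there. Without analogues of these two ingredients your reduction does not go through; with them, your two-sector picture becomes close to the paper's construction.
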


\begin{proof}
    We prove $\PSPACE$-hardness for real 5-local Hamiltonians that are also geometrically local.
    See Appendix~\ref{app:proof_of_threelocalvgp_pspacecomplete}.
\end{proof}
\begin{remark}
    Recall that the local Hamiltonian problem restricted to VGP Hamiltonians is $\StoqMA$-complete (\Cref{cor:stoqma_complete}).
    The above result shows that determining membership in the VGP class is intractable---even harder than the local Hamiltonian problem itself (which is $\QMA$-complete, and $\QMA \subseteq \PSPACE$).
    \qed
\end{remark}

By \Cref{prop:vgp_equiv_def}, a Hamiltonian $H$ has VGP if and only if $H$ has a stoquastizing  diagonal.
Thus we get an equivalent statement as a corollary:
\begin{corollary}
    \label{cor:pspace_stoquastizing_diagonal}
    Deciding if there exists a diagonal unitary $D$ such that $DHD^\dagger$ is stoquastic is $\PSPACE$-complete for local Hamiltonians, geometrically local, and PMR-local Hamiltonians $H$. \qed
\end{corollary}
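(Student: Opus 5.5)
The corollary follows directly from \Cref{thm:loc_vgp_hardness} together with the equivalence $(i)\iff(v)$ of \Cref{prop:vgp_equiv_def}. For each of the three input classes (local, geometrically local, PMR-local), I will define the map sending an instance $H$ of the VGP recognition problem to the same $H$, viewed as an instance of the problem ``does a diagonal unitary $D$ exist with $DHD^\dagger$ stoquastic?''. By \Cref{prop:vgp_equiv_def}, $H$ has VGP exactly when such a $D$ exists. So the identity map is a polynomial-time many-one reduction in both directions, and the two languages coincide on every input.

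\textbf{Hardness.} $\PSPACE$-hardness transfers verbatim from \Cref{thm:loc_vgp_hardness}. There, hardness is established for real $5$-local geometrically local Hamiltonians, which are also PMR-local by \Cref{prop:geo_loc_implies_pmr_loc}. A single hard family therefore covers all three classes at once. The only bookkeeping point is that the input encoding (polar form with phases in $\Z_m$) is identical for both problems, so no re-encoding cost arises.

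\textbf{Membership.} Membership in $\PSPACE$ likewise transfers through the identity reduction from the membership half of \Cref{thm:loc_vgp_hardness}. As a sanity check, one could also argue directly. Non-VGP is witnessed by an induced cycle of nontrivial holonomy (\Cref{prop:vgp_equiv_def}(iii)), and this can be checked in nondeterministic polynomial space. The algorithm guesses the cycle vertex by vertex, storing only the start vertex, the current vertex, and the accumulated phase as an integer mod $m$; each matrix entry is computable in polynomial time from the local description. This places non-VGP in $\mathsf{NPSPACE}=\PSPACE$ by Savitch's theorem, and $\PSPACE$ is closed under complement.

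\textbf{Main obstacle.} The only genuine subtlety is this direct membership check. One must ensure that it suffices to find some closed walk with nontrivial holonomy, rather than verifying an induced cycle. This holds by \Cref{def:vgp}, which quantifies over all closed walks, so no inducedness check is needed. With that settled, everything else is immediate.
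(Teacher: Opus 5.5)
Your proposal is correct and follows the paper's own argument: the corollary is \Cref{thm:loc_vgp_hardness} restated through the equivalence $(i)\iff(v)$ of \Cref{prop:vgp_equiv_def}. Hardness comes from the real $5$-local geometrically local (hence PMR-local) family, and membership comes from the same $\mathsf{NPSPACE}$ walk-guessing argument plus Savitch's theorem. In your direct membership check, the paper also stores a step counter (walks of length at most $2^n$ suffice) and folds the $(-1)^L$ factor into the exponent via $-1=\omega^{m/2}$; neither affects your conclusion.
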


\begin{table*}[t]
\label{tab:vgp_vs_stoq_complexity}
\begin{tabular}{c|c|c|c|c}
 Decision problem & 2-local& $k$-local, $k = O(1)$ & geometrically local & PMR-local 
\\\hline
 VGP recognition& $\coNP$-hard & $\PSPACE$-complete& $\PSPACE$-complete& $\PSPACE$-complete\\
 $\DSTOQ$& $\P$ & $\in \Sigma_2^p$ & $\P$ & $\P$ \\\hline
 Stoquasticity recognition& $\P$~\cite{Klassen_2019}& $\coNP$-complete~\cite{Ioannou_2020}& $\P$& $\P$\\
 $\USTOQ$& $\NP$-hard~\cite{Klassen_2020} & $\Sigma_2^p$-complete~\cite{Ioannou_2020}& $\in \Sigma_2^p$~\cite{Ioannou_2020} & $\in \Sigma_2^p$~\cite{Ioannou_2020} \\
\end{tabular}
\caption{Computational complexity of decision problems related to VGP versus stoquasticity. Entries in the table without a citation are proven in this paper. Entries marked ``$\in$'' are membership upper bounds without matching hardness results; in particular, there are no known hardness results for $\DSTOQ$ in the $k$-local case or for $\USTOQ$ restricted to geometrically local or PMR-local Hamiltonians.}
\end{table*}

This result contrasts quite strikingly with our findings in~\Cref{sec:efficient_stoquastization}.
PMR-local Hamiltonians $H$ that are promised to have VGP can be efficiently transformed into a local, cospectral Hamiltonian (\Cref{cor:efficient_stoquastization_pmr_local}), but the diagonal $D_H$ that sends $H \mapsto \mathcal{S}(H)$ need not be efficiently realizable.
We will argue this.
Assume that---in full generality---$D_H$ can always be efficiently realized when $H$ is $k$-PMR-local.
To verify that $D_H$ indeed stoquastizes $H$, one must check
\begin{enumerate}
    \item[$(i)$] that $D_H$ preserves $k$-PMR locality of $H$ (see proof of~\Cref{prop:efficient_stoquastization_pmr_local} for an explanation)
    \item[$(ii)$] and that $(D_H)H(D_H)^\dagger$ is stoquastic.
\end{enumerate}
$(i)$ may be checked efficiently, and if $(i)$ holds then $(ii)$ may also be checked efficiently (see \Cref{prop:efficient_stoq_check_pmr_local}).
Hence the VGP recognition problem for PMR-local Hamiltonians \emph{would be} in $\NP$, leading to the unlikely collapse $\PSPACE = \NP$~\cite{Arora_2009}.
Thus---assuming $\PSPACE \neq \NP$---we have demonstrated that there exist VGP local Hamiltonians $H$ for which the stoquastic proxy $\mathcal{S}$(H) is also local, but there is no efficiently realizable unitary $D_H$ sending $H \mapsto \mathcal{S}(H)$.

In the next section, we will investigate how the VGP recognition problem relates to recovering a stoquastizing  diagonal.
We will show that it is efficient to decide the existence of and find a local block-diagonal stoquastizing diagonal for $H$ under certain natural conditions (\Cref{thm:dstoq_efficient}).

\section{Complexity of Recovering Stoquastizing  Diagonal Unitaries}
\label{sec:diagonal_recovery}

The collapse $\VGPMA = \StoqMA$ shows that a stoquastizing  diagonal is unnecessary for ground-state energy estimation of VGP Hamiltonians.
These results clarify why VGP is the right structural language even though explicit stoquastization is difficult.
For QMC simulation, however, the diagonal is not always dispensable.
A sign-problem-free simulation of a VGP Hamiltonian $H$ is, in effect, a simulation of its stoquastic representative $\mathcal{S}(H) = DHD^\dagger$: expectation values are basis-independent, but QMC \emph{estimators} are not---they are built from matrix elements in the simulated representation, and a physical observable $O$ of $H$ corresponds in that representation to $DOD^\dagger$.
For diagonal observables the distinction is invisible, since $[D, O] = 0$; this is the diagonal-unitary-invariant sector identified in \Cref{sec:operational}.
An off-diagonal observable, by contrast, acquires the relative phases of the diagonal unitary on precisely the matrix elements where it is supported, $(DOD^\dagger)_{xy} = e^{i(\theta_x - \theta_y)}\,O_{xy}$, so the same observable may admit a directly constructible estimator in one representation and not in another.
When the flip pattern of $O$ coincides with an edge of $G_H$---as it does for the off-diagonal terms of $H$ itself, and hence for kinetic-energy-type estimators~\cite{Gupta_2020, Babakhani_2026}---the phase is locally computable from the Hamiltonian, since stoquastization forces $e^{i(\theta_x-\theta_y)} = -|H_{xy}|/H_{xy}$ on every edge.
But when $O$ connects basis states only through long walks in $G_H$, the required phase is a product of edge phases along a connecting path---well-defined and path-independent under the VGP promise, yet potentially expensive to obtain---and knowledge of an explicit $D$ obviates the search altogether: with $D$ in hand, $DOD^\dagger$ is constructed directly, and it remains local whenever $O$ is local and $D$ is block-diagonal.
This motivates our next question: how hard is it to recover a stoquastizing  diagonal unitary explicitly? But first, it is helpful to review established results for a related problem in the literature surrounding stoquasticity:

\begin{definition}[$\USTOQ$]
    For every $n$,
    fix a partition $\{B_\alpha\}$ of $[n]$ so that every $|B_\alpha| = O(1)$.
    Let
    \begin{align*}
        \mathcal{U}_n = \{ U : U = \bigotimes_\alpha U_{\alpha}, \,\, \supp(U_{\alpha}) = B_\alpha \}
    \end{align*}
    be a family of unitaries.
    \begin{itemize}[leftmargin=5.5em]
        \item[\textbf{Input}:] $H \gets$ $k$-local Hamiltonian on $n$ qubits

        \item[\textbf{Promise}:] In a YES instance, the witness $U$ is described in such a way that allows one to efficiently compute matrix entries of $UHU^\dagger$.
        \item[\textbf{Problem}:] Decide if there exists $U \in \mathcal{U}_n = \mathcal{U}$ such that $UHU^\dagger$ is stoquastic. \qed
    \end{itemize}
\end{definition}
Ref.~\cite{Ioannou_2020} proves that $\USTOQ$ is $\Sigma_2^p$-complete.
This holds even when $\mathcal{U}$ consists of single-qubit unitaries and $H$ is 3-local~\cite{Ioannou_2020}.
$\USTOQ$ is also $\NP$-hard for 2-local Hamiltonians~\cite{Klassen_2020}.

For this section, we examine the same problem, but restricted to diagonal unitaries:
\begin{definition}
    Define $\DSTOQ$ to be $\USTOQ$ with the promise that $\mathcal{D} = \mathcal{U}$ only consists of diagonal unitaries. \qed
\end{definition}

When $k = O(1)$, we immediately obtain the upper bound $\DSTOQ \in \Sigma_2^p$.
We argue $\Sigma_2^p$ membership as follows.
Recall that deciding if a Hamiltonian $H$ with efficiently computable matrix entries is stoquastic is in $\coNP$~\cite{Ioannou_2020}.
Hence if $D \in \mathcal{D}$ is provided as a (polynomial-sized) witness, one may efficiently compute $DHD^\dagger$ and query a $\coNP$ oracle to determine if $DHD^\dagger$ is stoquastic.
Thus $\DSTOQ \in \NP^{\coNP} = \Sigma_2^p$.

If we allowed $\mathcal{D}$ to be the set of all diagonal unitaries, then the problem reduces to the VGP recognition problem---the existence of any stoquastizing  diagonal implies that $H$ has VGP (\Cref{prop:vgp_equiv_def}).
The block-diagonal restriction makes the problem weaker and does not guarantee non-VGP certification in every NO case, but solving the problem can be significantly easier.
Of course, every YES instance of $\DSTOQ$ certifies that $H$ has VGP.

\begin{theorem}
    \label{thm:dstoq_efficient}
    $\DSTOQ$ is decidable in polynomial time when $H$ is at least one of the following:
    \begin{itemize}
        \item[(i)] 2-local
        \item[(ii)] geometrically local
        \item[(iii)] PMR-local
    \end{itemize}
    Additionally, in every YES instance one can recover a stoquastizing  $D \in \mathcal{D}$ for $H$ in polynomial time.
\end{theorem}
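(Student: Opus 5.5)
The plan is to turn the existence of a block-diagonal stoquastizing $D=\bigotimes_\alpha D_\alpha$, with $D_\alpha=\sum_{z_\alpha} e^{i\phi_\alpha(z_\alpha)}\ket{z_\alpha}\!\bra{z_\alpha}$ on the block $B_\alpha$, into a system of linear equations over $\Z_m$. The relevant structure is as follows. The phase is $\theta_z=\sum_\alpha \phi_\alpha(z|_{B_\alpha})$. For an off-diagonal entry $H_{xy}$ with $y=P_i x$, the relative phase $\theta_x-\theta_y$ only involves blocks that intersect $\supp(P_i)$. By \Cref{def:stoq_diagonal}, stoquastization of the entry $(x,y)$ is equivalent to
\begin{align*}
\sum_{\alpha:\,B_\alpha\cap\supp(P_i)\neq\emptyset}\bigl[\phi_\alpha(x|_{B_\alpha})-\phi_\alpha(y|_{B_\alpha})\bigr]\equiv \pi-\theta(x,y)\pmod{2\pi}.
\end{align*}
Since all phases of $H$ are $m$th roots of unity, I would first argue that one may restrict without loss of generality to $\phi_\alpha$ valued in $(2\pi/m)\Z$. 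Any real solution projects to a solution modulo $2\pi/m$ on the differences, because the right-hand sides all lie in $(2\pi/m)\Z$. The unknowns are then the integers $k_\alpha(z_\alpha)\in\Z_m$, and there are $\sum_\alpha 2^{|B_\alpha|}=O(n)$ of them.

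The key step is to show that, in each of the three cases, the number of distinct constraints is polynomial and each constraint can be written down efficiently. The same PMR disjointness used in \Cref{lem:termwise_stoq} means constraints can be imposed term by term.

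\begin{itemize}
\item In case (iii), each PMR term $D_iP_i$ has $|\supp(D_iP_i)|\le k$. The constraint for entry $(x,P_ix)$ therefore depends only on $x$ restricted to $\supp(D_iP_i)$ together with the blocks meeting $\supp(P_i)$. This set has size $O(k\cdot\max_\alpha|B_\alpha|)=O(1)$, so there are $O(1)$ local configurations per term. For configurations where $(D_iP_i)_{x,P_ix}=0$ there is no constraint. Note that where $D_i$ depends on qubits outside its support the entry cannot vary, by the definition of support.
\item Case (ii) reduces to (iii) via \Cref{prop:geo_loc_implies_pmr_loc}.
\item In case (i), a 2-local $H$ may fail to be PMR-local, as in \Cref{ex:pmr-local_not_k-local_yes}. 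Here I would argue differently. Every $P_i$ flips at most 2 qubits, so $\theta_x-\theta_y$ depends only on the $\le 2$ blocks containing $\supp(P_i)$, restricted to the bits of $x$ on those blocks. The entry phase $\theta(x,y)$, however, may depend on many other qubits through $D_i$. The constraint must therefore hold for every value taken by the phase of $D_i$ over the remaining qubits, with the $O(1)$ local bits fixed. If two different nonzero phases occur for the same local configuration, the instance is immediately NO. Otherwise the unique phase must be found.
\end{itemize}

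For 2-local $D_i$, $D_i$ is a sum of 1- and 2-local diagonal terms (with $\le2$-qubit $P_i$, the diagonal factor of a 2-local term is at most 1-local on non-flipped qubits). The question ``does the phase of $D_i$ vary, or vanish, over the spectator qubits'' therefore concerns a function of the form $c+\sum_j a_jZ_j$, where the $a_j$ may depend on the local bits. Deciding whether this is nonzero with a constant phase is a linear-form question that is solvable in polynomial time: the achievable values form a Minkowski sum of 2-point sets, and a constant phase forces all $a_j$ to be real multiples of a common direction. I expect this to be the main technical obstacle, in particular the handling of zeros of $D_i$, which remove edges. The analysis would follow the approach of Ref.~\cite{Klassen_2019} for 2-local stoquasticity.

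With polynomially many constraints in hand, I would solve the linear system over $\Z_m$ with integer matrix and $\log m=\poly(n)$-bit entries using Smith or Hermite normal form, which runs in polynomial time. The system is solvable iff a YES instance, and any solution gives the block phases $\phi_\alpha$, hence an explicit $D\in\mathcal D$. This recovery shows the additional claim.
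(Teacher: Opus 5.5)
Your overall plan is the paper's own. You use block phase variables and one congruence per nonzero entry of each PMR term, and only the $O(1)$ blocks meeting $\supp(P_i)$ enter each congruence. You reduce (ii) to (iii) via \Cref{prop:geo_loc_implies_pmr_loc}. For 2-local single-flip terms, you require $a$ and all $b_j$ to be real multiples of a common $v$.

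\textbf{The gap: the restriction to $(2\pi/m)\Z$ is not without loss of generality.} Take $H=Y_1Y_2$ with single-qubit blocks. It is real, so $m=2$ is a legal input. Its nonzero entries are $-1$ on $00\leftrightarrow 11$ and $+1$ on $01\leftrightarrow 10$. With $\theta_z=u_1z_1+u_2z_2$, the constraints are $u_1+u_2\equiv 0$ and $u_2-u_1\equiv \pi \pmod{2\pi}$. These are solved by $u_1=-\pi/2$, $u_2=\pi/2$, i.e.\ $D=S_1^\dagger S_2$ with $S=\diag(1,i)$, and this $D$ maps $Y_1Y_2\mapsto -X_1X_2$. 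They have no solution with block phases in $\pi\Z$: such phases give $D_\alpha\in\{\pm I,\pm Z\}$, which only send $Y_1Y_2$ to $\pm Y_1Y_2$. So your $\Z_m$ system would reject a YES instance.

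The underlying reason is that the $\{0,\pm1\}$ constraint matrix can have invariant factors larger than $1$ (here $1$ and $2$). Real solvability modulo $2\pi$ may therefore need denominators beyond $m$. Integer right-hand sides do not let you project a real solution onto the lattice.

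\textbf{The repair.} The paper's proof solves the congruences modulo $2\pi$ over the reals rather than on a lattice, and you should do the same. Your Smith-normal-form tool already handles this:
\begin{enumerate}
\item Write the system as $A\phi\equiv c$, factor $A=USV$ with $U,V$ unimodular, and set $c'=U^{-1}c$.
\item The system is solvable iff $c'_i\equiv 0 \pmod{2\pi}$ for every $i$ with $s_i=0$.
\item In that case set $\phi'_i=c'_i/s_i$ for $s_i\neq 0$ and $\phi'_i=0$ otherwise. Then $\phi=V^{-1}\phi'$ gives an explicit stoquastizing $D\in\mathcal{D}$ in polynomial time.
\end{enumerate}

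\textbf{The obstacle you flag in case (i)} closes exactly as in the paper. Once $a=vA$ and $b_j=vB_j$ with $A,B_j$ real, write $F(z)=vR(z)$ with $R(z)=A+\sum_j B_j\sigma_j$. The nonzero values of $F$ share a single phase iff $R$ never changes sign. That is, iff $A-\sum_j|B_j|\ge 0$ or $A+\sum_j|B_j|\le 0$, which is checkable in $O(n)$ time. Zeros of $R$ only delete edges and impose no constraint, so they need no separate handling. The block-phase difference $e^{i(\phi_\alpha(s)-\phi_\alpha(t))}$ is then forced to equal $-\bar v/|v|$ or $\bar v/|v|$, respectively. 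This gives one congruence per pair $(s,t)$ to feed into the system above.
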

\begin{proof}
    Fix an instance of $\DSTOQ$ with verifier $V$.
    We will create a system of constraints for the 2-local case and the PMR-local case such that the system has a solution if and only if some $D \in \mathcal{D}$ stoquastizes $H$.
    We will then show that these constraints can be converted into polynomially-many linear equations on $O(1)$ variables.

    We prove necessary and sufficient conditions in the 2-local case first.
    Assume $H = \sum_{i=1}^m D_iP_i$ is 2-local and cast in PMR form.
    We will construct a system of linear equations that has a solution if and only if $V(H)$ outputs YES.
    The system will be solvable in polynomial time, and each PMR term will contribute $O(1)$ equations and variables.
    
    For every $i \in [m]$, we will do a case-by-case analysis for $D_iP_i$:
    \begin{enumerate}
        \item Assume $|\supp(P_i)| = \{p\}$.
        Then we may write $P_i = X_p$.
        We know that $H_{xy} =\bra{x}D_iP_i\ket{y}$ can only be non-zero when $x$ and $y$ differ on qubit $p$.
        Fix the two states $s,t \in \{0,1\}^{|B_{\alpha}|}$, the restrictions of $x$ and $y$ to the block $B_{\alpha}$ containing $p$ (i.e., $p \in B_{\alpha}$); since $x$ and $y$ differ only at qubit $p$, $t$ equals $s$ with bit $p$ flipped.
        Let $z$ be the restriction of $x$ (or $y$) to the qubits outside of $B_{\alpha}$.
        Since $H$ is 2-local, we can decompose
        \begin{align*}
            H_{xy} = F_{\alpha,s,t}(z) := a + \sum_{j \notin B_\alpha} b_j \sigma_j, && \sigma_j := 2z_j - 1.
        \end{align*}
        With $s$ and $t$ fixed, the only constant is $a$ (which absorbed the contributions from the assignments of $s$ and $t$).
        Notice that when $D \in \mathcal{D}$ conjugates $H$,
        \begin{align*}
            H_{xy} \mapsto e^{i\phi_{s,t}}H_{xy}
        \end{align*}
        a mapping independent of $z$ since $x$ and $y$ agree on $z$.
        Letting $z$ vary, $D$ being a stoquastizing  diagonal for $H$ can only be possible if
        \begin{align*}
            e^{i\phi_{s,t}}F_{\alpha,s,t}(z) \in \mathbb{R}_{\leq 0} && z \in \{0,1\}^{n - |B_{\alpha}|}.
        \end{align*}
        Consequently, for every choice of $z$ with $F_{\alpha,s,t}(z) \neq 0$, $F_{\alpha,s,t}(z)$ has the same complex phase (namely, $\pi - \phi_{s,t}$ modulo $2\pi$).
        In other words, there is a common complex number $v$ so that for every $z$,
        \begin{align*}
            F_{\alpha,s,t}(z) = v\underbrace{\left( A + \sum_{j} B_j\sigma_j \right)}_{=:R(z)}
        \end{align*}
        with $a = vA$, $b_j = vB_j$, and $A,B_j$ real.
        This requirement is equivalent to $a$ and every coefficient $b_j$ being a real multiple of the same complex number.
        For each choice of $s$ and $t$, one can check whether this is the case efficiently for all $O(n)$ coefficients and recover $v$.
        
        Now we focus on the real part $R(z)$.
        The only way for $D$ to stoquastize $H$ is if, when $R$ is minimized, it is nonnegative, or when it is maximized it is non-positive.
        Otherwise, $R(\,\cdot\,)$ is positive on some inputs and negative on others---no $D \in \mathcal{D}$ can stoquastize $H$ if that is the case.
        Equivalently, the requirement is
        \begin{align*}
            A - \sum_j |B_j| \geq 0 && \text{or} && A + \sum_j |B_j| \leq 0.
        \end{align*}
        In the first case, stoquasticity imposes the constraint $e^{i\phi_{s,t}} = -\frac{\overline{v}}{|v|}$.
        In the second case, it imposes the constraint
        $e^{i\phi_{s,t}} = \frac{\overline{v}}{|v|}$.
        Iterating over all pairs $s$ and $t$ contributes $O(1)$ constraints.
        Each constraint may be expressed as linear equations on $O(1)$ variables (complex phases) modulo $2\pi$ since each $\phi_{s,t}$ depends only on $O(1)$ qubits.
        \item Assume $\supp(P_i) = \{p,q\}$.
        Stoquastization of $D_iP_i$ under some $D \in \mathcal{D}$ requires  $\bra{x}D_iP_i\ket{y} = H_{xy} \mapsto -|H_{xy}|$.
        (Since $H$ is $2$-local, $D_i$ is supported on $\{p,q\}$ as well, so these entries do not depend on spectator qubits; the linear constraints for this case are derived jointly with the PMR-local case below.)
    \end{enumerate}

    The above cases are the two possibilities for a PMR term of $H$ when $H$ is 2-local.
    For the case that $H$ is PMR-local, we have the same constraint: a YES instance of $\DSTOQ$ requires $H_{xy} \mapsto -|H_{xy}|$ for every $x,y \in \{0,1\}^n$ under some $D \in \mathcal{D}$.

    We argue that every such     constraint may be reduced to polynomially-many linear equations modulo $2\pi$.
    In case 1 of the 2-local case, this is already done.
    So, we fixate on the PMR-local case and case 2 of the 2-local case.
    Write 
    \begin{align*}
        D = \diag(e^{i\theta_z}) = \bigotimes_\alpha D_\alpha = \prod_\alpha \diag( \{e^{i\theta^{(\alpha)}_z} \})
    \end{align*}
    as block-diagonal over $\{B_\alpha\}$ so that
    \begin{align*}
        \theta_z \equiv \sum_{\alpha} \theta_z^{(\alpha)} \quad \text{(mod $2\pi$)}.
    \end{align*}
    Then $D$ being a stoquastizing  diagonal requires
    \begin{align}
        (DHD^\dagger)_{xy} &= e^{i\theta_{x} - \theta_{y}}H_{xy} = -|H_{xy}| && \implies\\
        \label{eq:entry_stoq_constraint}
        \theta_{x} - \theta_{y} &\equiv \pi - \arg(H_{xy})  \quad \text{(mod $2\pi$)}.
    \end{align}
    For $H_{xy} \neq 0$, $x$ and $y$ must differ on at most $O(1)$ qubits $S$ by the locality of $H$ (there exists some PMR term $D_iP_i$ so that $\bra{x}D_iP_i\ket{y}\neq 0$).
    Let $\{P_\alpha\} \subseteq \{B_\alpha\}$ be the subset of blocks overlapping with $S$.
    Now examine the difference
    \begin{align*}
        \theta_x - \theta_y = \left[\sum_{P_\alpha \in \{B_\alpha\}} \theta_x^{(\alpha)} - \theta_y^{(\alpha)} \right] + \left[\sum_{P_\alpha \notin \{B_\alpha\}} \theta_x^{(\alpha)} - \theta_y^{(\alpha)}\right]. 
    \end{align*}
    For $P_\alpha \notin \{B_\alpha\}$, we have $\theta_x^{(\alpha)} = \theta_y^{(\alpha)}$ since $x$ and $y$ agree on qubits outside of $\{P_\alpha\}$.
    Hence the constraint in~\Cref{eq:entry_stoq_constraint} reduces to
    \begin{align}
        \label{eq:phase_requirement}
        \sum_{P_\alpha \in \{B_\alpha\}} \theta_x^{(\alpha)} - \theta_y^{(\alpha)} \equiv \pi - \arg(H_{xy}) \quad (\text{mod }2\pi).
    \end{align}
    In the PMR-local case, $H_{xy}$ depends on $O(1)$ qubits by definition of PMR locality.
    In case 2 of the 2-local case, $H_{xy}$ depends on 2 qubits.
    So, $|\{P_\alpha\}| = O(1)$, admitting at most $O(1)$ variables on the LHS of the linear constraint in~\Cref{eq:phase_requirement}.
    Letting $x$ and $y$ vary on the qubits of $\supp(D_iP_i) \cup \left( \bigcup_\alpha P_\alpha \right)$ means that \Cref{eq:phase_requirement} contributes $O(1)$ linear equations, each depending on $O(1)$ variables.
    Since there are $\poly(n)$ PMR terms, there are $\poly(n)$ such linear equations, so the system is solvable in $\poly(n)$ time. 
    The system of constraints has a solution if and only if a stoquastizing  diagonal $D \in \mathcal{D}$ exists for $H$.
    This completes the proof.
\end{proof}

We can generalize the problem so that one may choose---in advance---at most $O(1)$ overlapping local support blocks and efficiently decide if there exists a stoquastizing diagonal with support on those blocks:
\begin{definition}[$\DSTOQl$]
    For every $n$,
    fix $\ell$ partitions $\{B_\alpha^{(1)}\}, \ldots, \{B_\alpha^{(\ell)}\}$ of $[n]$ so that every $|B_\alpha^{(i)}| = O(1)$.
    Let
    \begin{align*}
        \mathcal{D}_n = \{ D : D = \prod_{i=1}^\ell \bigotimes_\alpha D_{\alpha}^{(i)}, \,\, \supp(D_{\alpha}^{(i)}) = B_\alpha^{(i)} \}
    \end{align*}
    be a family of diagonal unitaries.
    \begin{itemize}[leftmargin=5.5em]
        \item[\textbf{Input}:] $H \gets$ $k$-local Hamiltonian on $n$ qubits

        \item[\textbf{Promise}:] In a YES instance, the witness $D$ is efficiently realizable.
        \item[\textbf{Problem}:] Decide if there exists $D \in \mathcal{D}_n = \mathcal{D}$ such that $DHD^\dagger$ is stoquastic. \qed
    \end{itemize}
\end{definition}

\begin{corollary}
    \label{cor:dstoq_ell}
    $\DSTOQl$ is decidable in polynomial time on Hamiltonian inputs $H$ that are PMR-local (or geometrically local) when $\ell = O(1)$.
    Furthermore, in every YES instance, one can efficiently recover a stoquastizing  diagonal $D \in \mathcal{D}$ for $H$.
\end{corollary}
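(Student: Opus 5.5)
We reduce the corollary to the PMR-local argument already given in the proof of \Cref{thm:dstoq_efficient}. There, the existence of a block-diagonal stoquastizing diagonal became a system of linear congruences modulo $2\pi$ in the block phases $\theta^{(\alpha)}_z$. The only change here is that the phase function is a sum over $\ell$ layers. We write
\begin{align*}
    \theta_z \equiv \sum_{i=1}^{\ell}\sum_{\alpha} \theta^{(i,\alpha)}_{z|_{B^{(i)}_\alpha}} \pmod{2\pi},
\end{align*}
where the unknowns are the $\theta^{(i,\alpha)}_s$ for $s\in\{0,1\}^{|B^{(i)}_\alpha|}$. There are $\ell\cdot\poly(n)\cdot 2^{O(1)}=\poly(n)$ of them. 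By \Cref{prop:geo_loc_implies_pmr_loc}, the geometrically local case is subsumed, so it suffices to treat PMR-local $H=H_0+\sum_j D_jP_j$.

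For each PMR term, set $S_j=\supp(D_jP_j)$, which has $|S_j|=O(1)$. Let $x,y$ be a pair with $\bra{x}D_jP_j\ket{y}\neq0$. Then $x$ and $y$ differ only on qubits in $S_j$. Within each layer $i$, the block phases of blocks not meeting $S_j$ cancel in $\theta_x-\theta_y$.

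For each layer, the blocks meeting $S_j$ number at most $|S_j|$, and each has size $O(1)$. So the union $T_j$ of all such blocks over the $\ell=O(1)$ layers has size $O(1)$.

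The stoquastization constraint \eqref{eq:entry_stoq_constraint} then reads
\begin{align*}
    \sum_{i=1}^{\ell}\sum_{B^{(i)}_\alpha\cap S_j\neq\emptyset}\Big(\theta^{(i,\alpha)}_{x|_{B^{(i)}_\alpha}}-\theta^{(i,\alpha)}_{y|_{B^{(i)}_\alpha}}\Big)\equiv \pi-\arg(H_{xy}) \pmod{2\pi}.
\end{align*}
By PMR locality, $H_{xy}=\bra{x}D_jP_j\ket{y}$ depends only on $x|_{S_j}$. The left side depends only on $x|_{T_j}$. Hence enumerating assignments of $T_j\supseteq S_j$ yields all distinct constraints. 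That is $2^{O(1)}$ equations per term, each in $O(1)$ unknowns, and $\poly(n)$ equations overall.

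The remaining step is to solve this system of linear equations modulo $2\pi$, which is the step we expect to need the most care. We exploit the paper's input convention: all phases are $m$-th roots of unity with $m\le 2^{\poly(n)}$ even. Then every right-hand side is $2\pi k/m$ with $k\in\Z_m$.

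First we argue that a solution over $\R/2\pi\Z$ exists if and only if a solution over $\frac{2\pi}{m}\Z/2\pi\Z$ exists. The system has integer coefficients (entries in $\{0,\pm1\}$). Putting the coefficient matrix in Smith normal form, solvability over $\R/\Z$ with right-hand side in $\frac1m\Z/\Z$ is equivalent to solvability over $\Z_m$. The reason is that each invariant-factor equation $d_r u_r=c_r$ with $c_r\in\frac1m\Z$ is always solvable over $\R/\Z$ when $d_r\neq0$, and it is solvable over $\Z_m$ after scaling by a suitable common multiple. If needed, we replace $m$ by $m\cdot\mathrm{lcm}(d_r)$, whose bit-length stays polynomial.

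We then decide solvability and produce a solution by computing the Smith or Hermite normal form of an integer matrix of polynomial size. Kannan--Bachem gives this in polynomial time. A solution vector gives explicit block phases, and hence an explicit $D=\prod_i\bigotimes_\alpha D^{(i)}_\alpha$.

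Finally, we check that the recovered $D$ is efficiently realizable. Each factor $D^{(i)}_\alpha$ acts on $O(1)$ qubits, so it has an $O(1)$-size circuit. Conjugation by $D$ maps a Pauli string to a combination of $2^{O(\ell)}$ Pauli strings on the $O(1)$ blocks it touches. This gives the Pauli-tractability and locality preservation required by \Cref{def:eff_realizable}.

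Correctness in both directions is immediate. Any $D\in\mathcal{D}$ that stoquastizes $H$ satisfies every congruence. Conversely, any solution to the system fixes every nonzero off-diagonal entry to $-|H_{xy}|$, because PMR terms are entry-wise disjoint (as in \Cref{lem:termwise_stoq}).
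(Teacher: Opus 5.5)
Your proposal is correct and is essentially the paper's proof. You use block-phase variables for all $\ell$ layers, impose one mod-$2\pi$ linear congruence in $O(1)$ unknowns per nonzero off-diagonal entry ($2^{O(1)}$ per PMR term), and obtain a $\poly(n)$-size system whose solutions are exactly the stoquastizing $D\in\mathcal{D}$. Your Smith-normal-form argument, which uses the $m$-th-root-of-unity input convention and rescales $m$ by the invariant factors, fills in the solvability step that the paper only asserts; your closing efficient-realizability check is unnecessary (it is part of the problem's promise), and its Pauli-rank bound holds only for Pauli strings of $O(1)$ support.
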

\begin{proof}
    Conjugation by any $D \in \mathcal{D}$ preserves the PMR locality of $H$, since each $|B_\alpha^{(i)}| = O(1)$; in particular, every entry $H_{xy} \neq 0$ still depends on $O(1)$ qubits (\Cref{rem:pmr_local_poly_many_entries}).
    Write $\theta_z \equiv \sum_{i=1}^{\ell} \theta_z^{(i)} \pmod{2\pi}$ with each layer block-diagonal over its own partition.
    The stoquastization constraints of~\Cref{eq:phase_requirement} then become linear equations modulo $2\pi$ in the union of all layers' block phase variables, to be solved simultaneously: for each non-zero entry $H_{xy}$, at most $O(\ell)$ blocks across the $\ell$ partitions overlap the $O(1)$ qubits on which the entry depends, so with $\ell = O(1)$ each PMR term contributes $2^{O(\ell)} = O(1)$ equations in $O(1)$ variables.
    The resulting system of $\poly(n)$ linear equations is solvable in polynomial time, and it has a solution if and only if some $D \in \mathcal{D}$ stoquastizes $H$; any solution yields the stoquastizing  $D$.
\end{proof}

\begin{remark}
    The key difference between $\DSTOQ$ and deciding if \emph{any} stoquastizing  diagonal $D'$ exists for PMR-local $H$ is that $D'$ does not necessarily have to be block-diagonal.
    
    The constraint is $H_{xy} \mapsto -|H_{xy}|$ for every pair of computational basis states $\ket{x}$ and $\ket{y}$.
    For the existence of an arbitrary stoquastizing  diagonal, this may create $O(2^n)$ new constraints per $H_{xy} \neq 0$. 
    However, when one is only considering local block-diagonal stoquastizing  diagonals $D$, $H_{xy}$ is transformed by phases with support on $O(1)$ qubits.
    What's more, if $H$ is PMR-local, every entry $H_{xy}$ depends on $O(1)$ qubits.
    These are precisely the two properties that allow one to reduce the $O(2^n)$ constraints to a tractable number of linear equations.
    One is not afforded such properties when searching arbitrary diagonals.
    \qed
\end{remark}

Remarkably, if one restricts each $D \in \mathcal{D}$ to a given family, the decision problem can become $\NP$-complete:
\begin{theorem}\label{thm:diag_stoq_hardness}
    The following problem is $\NP$-complete:
    
    Given a PMR-local Hamiltonian $H$ and a family $\mathcal{D}$ of locality-preserving diagonal unitaries as input, decide if there exists $D \in \mathcal{D}$ such that $DHD^\dagger$ is stoquastic.
\end{theorem}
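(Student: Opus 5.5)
Membership in $\NP$ and $\NP$-hardness are handled separately. Throughout, the family $\mathcal{D}$ is given explicitly as input: a list of locality-preserving diagonal unitaries, or a succinct parametrization of them (for instance, a choice from a finite menu of $O(1)$-local phase gates per block). Each member must have a polynomial-size description from which the matrix entries of $DHD^\dagger$ can be computed efficiently.

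\emph{Membership.} The witness is a description of $D\in\mathcal{D}$. Since $D$ is locality-preserving and $H$ is PMR-local, $DHD^\dagger$ is PMR-local with the same PMR supports. Conjugation by a diagonal unitary only rephases the existing entries of each $D_iP_i$, and a locality-preserving $D$ keeps each rephased term local. Its PMR-local description is therefore computable in polynomial time, and stoquasticity of it can be checked in time $O(2^k m)$ by \Cref{prop:efficient_stoq_check_pmr_local}. This puts the problem in $\NP$. The one point to verify is that a locality-preserving $D$ really yields $O(1)$-support rephased PMR terms whose coefficients are computable from the description of $D$. That follows because the relative phase $e^{i(\theta_x-\theta_y)}$ on an entry of $D_iP_i$ depends only on the bits flipped by $P_i$ and the $O(1)$ qubits that $D$ couples to them.

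\emph{Hardness.} I would reduce from 3-SAT, reusing the gadget logic of \Cref{thm:vgp_hard_to_stoquastize}, but with Hadamard choices replaced by diagonal phase choices. The family $\mathcal{D}$ is the product over variables $x_i$ of a binary choice $D_i\in\{I, S_{i}\}$. Here $S_i$ is a fixed local diagonal, e.g.\ a $Z$ on an assignment qubit $r_i$, which flips the sign of every off-diagonal entry flipping $r_i$. (Concretely, $\mathcal{D}=\{\prod_i Z_{r_i}^{b_i} : b\in\{0,1\}^n\}$ is a family of locality-preserving Pauli diagonals, specified in polynomial size.)

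For each clause $C=(\ell_1\vee\ell_2\vee\ell_3)$ I would add a PMR-local term $h_C = D_C\,X_{r_{i}}X_{r_j}X_{r_k}$ on the three variable qubits, possibly with an ancilla so that distinct clauses give entry-wise disjoint PMR terms. Its real diagonal coefficient $D_C$ is chosen so that the entry's sign after conjugation is $(-1)^{b_i+b_j+b_k}$ times a base sign. Parity alone cannot express OR, though. So instead I would give each clause several disjoint PMR terms, one per forbidden pattern: for each variable subset $T\subseteq\{i,j,k\}$, a term $c_T \prod_{t\in T}X_{r_t}$ (with an ancilla guard to keep the terms entry-wise disjoint). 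The signs $c_T$ are chosen so that the conjunction of parity constraints $\sum_{t\in T}b_t\equiv \epsilon_T$ holds exactly when the clause is satisfied.

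If pure parity constraints cannot encode OR (they are affine over $\mathbb{F}_2$, hence tractable), I would switch to phases of order $4$ or more, $D_i\in\{I,\diag(1,e^{i\pi/2})\}$. The constraint then becomes a sign condition on $e^{i\pi(b_i+b_j+b_k)/2}$ times a fixed phase, which is non-affine. Matching it with one clause term per satisfying-weight class, so that a positive real part is forbidden exactly for the all-false pattern, encodes the clause.

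I expect this last step, making the clause constraint genuinely non-affine using only a restricted family $\mathcal{D}$ (otherwise \Cref{cor:dstoq_ell} shows the problem is in $\P$), to be the main obstacle. The point of \Cref{thm:diag_stoq_hardness} is that restricting $\mathcal{D}$ to a discrete, non-group family breaks the linear-equation structure used in \Cref{thm:dstoq_efficient}, and the gadget must exploit exactly that. I would first try per-variable choices from a non-subgroup set such as $\{I,\diag(1,e^{2\pi i/3})\}$. The stoquasticity condition on a 3-flip entry then reads $e^{2\pi i(b_i+b_j+b_k)/3}\cdot\omega_C\in\R_{\le0}$ with integer sums $0,1,2,3$. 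This selects the allowed residues mod $3$ and can exclude only the all-false pattern (sum $0$) when combined with a second term enforcing sum $\ne 1,2$ issues. Literal negations are handled by conjugating the base phase. Correctness then means checking that satisfying assignments map to stoquastizing choices and conversely, just as in the proof of \Cref{thm:vgp_hard_to_stoquastize}.
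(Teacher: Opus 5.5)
Your membership argument is fine and matches the paper's. The hardness part has a genuine gap, because none of your clause encodings can work for the per-variable binary families you propose. The clause terms are entry-wise disjoint, so stoquasticity of $DHD^\dagger$ is a \emph{conjunction} of one condition per nonzero entry. Under $D=\bigotimes_j\diag(1,e^{ib_j\theta})$, an entry $\bra{x}H\ket{y}$ is multiplied by $e^{i\theta\sum_t b_t(x_t-y_t)}$. The target set $\R_{\le 0}$ is a single ray, not a half-plane, so your order-$4$ criterion ``positive real part forbidden'' is not the stoquasticity condition. Each entry therefore pins a signed sum $\sum_{t\in T}\epsilon_t b_t$, with $\epsilon_t=\pm1$, to a \emph{single} residue mod $3$ (resp.\ mod $4$).

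Adding ``one term per satisfying-weight class'' or a ``second term'' only intersects the allowed sets further; it never produces a disjunction. Moreover, no such constraint on $b_i,b_j,b_k$ alone is compatible with $\ell_1\vee\ell_2\vee\ell_3$. Suppose $\sum_{t\in T}\epsilon_tb_t\equiv r$ held on all seven patterns other than $000$, with $T=\{i,j,k\}$. The patterns $100,010,001$ give $\epsilon_i\equiv\epsilon_j\equiv\epsilon_k\equiv r$, and $110$ gives $2r\equiv r$, so $r\equiv 0$, contradicting $\epsilon_t=\pm1$. Smaller $T$ fail the same way. So ``excluding only the all-false pattern'' is impossible on the clause qubits. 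Your guard ancillas carry no phase choice of their own, so they cannot help. Separately, a negated literal flips the sign of one $\epsilon_t$. That requires an entry with that bit flipped in the opposite direction; conjugating the base phase does not achieve it.

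The missing idea is to aim for \emph{exactly-one} rather than OR, i.e.\ to reduce from monotone 1-in-3-SAT, which is what the paper does. Take $\theta_j\in\{0,2\pi/3\}$, $\omega=e^{2\pi i/3}$, and one term $H_C=-\omega\ket{000}\!\bra{111}_{ijk}+\mathrm{h.c.}$ per clause. Conjugation gives $\bra{000}DH_CD^\dagger\ket{111}=-\omega^{1-m_C}$ with $m_C=b_i+b_j+b_k\in\{0,1,2,3\}$. This equals $-1$ iff $m_C\equiv1\pmod 3$, i.e.\ iff $m_C=1$.

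You correctly identified the non-group choice set $b_j\in\{0,1\}$ as the source of hardness. That restriction is exactly what turns a $\Z_3$-linear equation into the 1-in-3 constraint. Your order-$4$ family would work equally well with residue $1$. If you want to keep 3-SAT, you must add auxiliary qubits with their own phase choices and compose with the standard 3-SAT$\to$1-in-3-SAT clause gadget. As written, the proposal does not supply this.
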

\begin{proof}
    See Appendix~\ref{app:proof_of_nphardness_pmrlocal_vgprecognition_promiseproblem}.
\end{proof}
\Cref{thm:diag_stoq_hardness} demonstrates that hardness of stoquastization can arise from restricting the set of allowed unitary transformations.
Intuitively this should not make the search more difficult, because it shrinks the search space after all.
But the search can nonetheless become intractable.
This contrasts heavily with the fact that searching for an unrestricted single-qubit stoquastizing diagonal can be done efficiently (\Cref{thm:dstoq_efficient}).

Thus we have demonstrated that VGP recognition can be significantly easier than stoquastizing  a 2-local or geometrically local Hamiltonian with arbitrary local unitary transformations (this is in reference to the complexity of $\USTOQ$).
Consequently, attempting to stoquastize a local Hamiltonian with arbitrary unitaries may be unnecessary in certain cases.
Under natural conditions, one can efficiently screen the Hamiltonian to determine if it has VGP and if a convenient choice of locality-preserving stoquastizing diagonal exists.
In a YES instance, this would establish that the given Hamiltonian has VGP while recovering a stoquastizing  diagonal---allowing one to study the dynamics of the VGP Hamiltonian and its stoquastic proxy via sign-problem-free PMR-QMC simulation (\Cref{thm:qmc_boundary}).

\section{Operational Distinctions Between VGP and Stoquastic
  Hamiltonians}
\label{sec:operational}

The collapse $\VGPMA = \StoqMA$ establishes that VGP and stoquastic local Hamiltonians are indistinguishable for ground-state energy estimation.
This section identifies which computational tasks are insensitive
to the VGP--stoquastic distinction, and which appear to genuinely
separate the two classes.

\subsection{Tasks Not Separating VGP From Stoquasticity}
\label{subsec:no_separation}

\subsubsection{Ground-state energy}
This is the content of \Cref{thm:vgp_stoqma}.

\subsubsection{Partition function and thermodynamics}
The partition function $Z = \mathrm{Tr}(e^{-\beta H})$ is a
spectral quantity: under the VGP promise, $Z(H) = Z(\mathcal{S}(H))$, and all equilibrium thermodynamic observables derived from it (free energy, entropy, specific heat) are identical for $H$ and its stoquastic proxy.
Moreover, the closed-walk weights in the PMR expansion of $Z$ depend only on the diagonal-unitary-invariant data $|H_{xy}|$ and $H_{zz}$, so under the VGP promise the expansions of $H$ and $\mathcal{S}(H)$ agree term by term. (We emphasize that sign-problem-freeness of this expansion does not by itself imply efficient sampling.
See the final open problem of \Cref{sec:discussion}).

\subsubsection{Diagonal observables}
For any observable $O$ that is diagonal in the computational
basis,
$\braket{\psi_0|O|\psi_0} = \sum_z a_z^2\, O_{zz}$
depends only on the magnitudes
$a_z = |\braket{z|\psi_0}|$, which by \Cref{thm:vgp_pf} coincide
with the ground-state amplitudes of $\mathcal{S}(H)$ on the
relevant component.
Every diagonal equilibrium expectation value is therefore identical for $H$ and its stoquastic proxy, and is accessible to any method that samples from $a_z^2$.

For general ($k$-local but off-diagonal) observables the situation is more delicate.
The expectation value
$\braket{\psi_0|O|\psi_0}
 = \sum_{x,y} e^{i(\theta_x - \theta_y)}\, a_x\, O_{xy}\, a_y$
involves the relative phases $\theta_x - \theta_y$.
Under the VGP promise, these are diagonal-unitary-invariant along any path of $G_H$ connecting $y$ to $x$, but the pairs $(x,y)$ selected by $O$ need not be connected by short paths in $G_H$---or connected at all---and a sampler for $a_z^2$ does not by itself supply the required amplitude data.
We therefore leave the accessibility of off-diagonal local observables from stoquastic-proxy data as an open question; it is one of the motivations for the results of
\Cref{sec:diagonal_recovery} concerning the complexity of recovering a stoquastizing diagonal.

\subsection{Tasks Separating VGP From Stoquasticity}
\label{subsec:separation}

The VGP--stoquastic distinction becomes operationally meaningful
for tasks that require the global phase structure of the
ground-state rather than spectral or diagonal data.

\subsubsection{Classical ground-state description}
For a stoquastic Hamiltonian, the Perron--Frobenius ground-state has $\braket{z|\psi_0} \geq 0$ on each connected component and is completely characterized by the probability distribution $p(z) = |\braket{z|\psi_0}|^2$---a purely classical object.
For a VGP Hamiltonian, the ground-state has the form $\braket{z|\psi_0} = e^{-i\theta_z} a_z$: the magnitudes are available from the stoquastic proxy, but a complete description also requires the phases $\{\theta_z\}$.
By \Cref{ex:n_qubit_family_non_local_stoquastizing _diagonal} and \Cref{thm:vgp_hard_to_stoquastize}, there are VGP Hamiltonians for which no stoquastizing diagonal is efficiently representable in the sense of \Cref{def:eff_realizable}.
We note that this is a statement about \emph{conjugation}---no diagonal maps the local description of $H$ to a polynomial-size local description of a stoquastic Hamiltonian---and not about the description length of the phase function itself, which can be very simple ($\theta_z = \pi$ exactly on the all-ones string in \Cref{ex:n_qubit_family_non_local_stoquastizing _diagonal}).
\subsubsection{Quantum state preparation}
The ground state $\ket{\phi_0}$ of a stoquastic Hamiltonian is nonnegative in the computational basis, so it is fixed entirely by its amplitude data $\{a_z\}$; the VGP ground state carries the same amplitude data rotated by the stoquastizing diagonal, $\ket{\psi_0}=D^{\dagger}\ket{\phi_0}$ with $\braket{z|\psi_0}=e^{-i\theta_z}a_z$ (\Cref{thm:vgp_pf}).
Any route that produces $\ket{\phi_0}$ from its amplitudes therefore yields, for a VGP Hamiltonian, only the proxy state $\ket{\phi_0}$; recovering the physical ground state $\ket{\psi_0}$ requires the additional step of applying $D^{\dagger}$.
When a circuit for $D$ is known---as in \Cref{ex:n_qubit_family_non_local_stoquastizing _diagonal}, where $D$ is a single multi-controlled-$Z$---this step is available; what \Cref{ex:n_qubit_family_non_local_stoquastizing _diagonal} and \Cref{thm:vgp_hard_to_stoquastize} rule out is a stoquastizing diagonal that is efficiently realizable in the sense of \Cref{def:eff_realizable}, i.e., one admitting a compact representation of its conjugation action; no general procedure for \emph{finding} a circuit for $D$ is known.

This is a statement about stoquastizing diagonal step, not a claim that stoquastic ground states are classically easy to prepare.
Non-negativity of $\{a_z\}$ removes the sign obstruction to coherent state-preparation primitives---the target amplitudes $a_z = \sqrt{p(z)}$ are real and nonnegative, so no relative phases need be synthesized---but those primitives still require coherent access to the amplitudes and are efficient only under additional structure.
Classical sampling from $p(z) = a_z^2$, which sign-free PMR-QMC provides when it mixes, yields the mixture $\sum_z a_z^2\,\ket{z}\!\bra{z}$ rather than the coherent $\ket{\phi_0}=\sum_z a_z\ket{z}$, and amplitude amplification needs a coherent reflection oracle, not samples; neither prepares a ground state by itself.
What separates the two classes is that, whatever the cost of preparing $\ket{\phi_0}$, obtaining $\ket{\psi_0}$ from stoquastic-proxy data carries the additional cost of recovering a stoquastizing diagonal $D$---an object that in general admits no efficiently realizable representation (\Cref{def:eff_realizable}), whose recovery within natural circuit classes ranges from polynomial time (\Cref{thm:dstoq_efficient}) to $\NP$-complete (\Cref{thm:diag_stoq_hardness}), and whose existence is $\PSPACE$-complete to decide absent the VGP promise (\Cref{cor:pspace_stoquastizing_diagonal}).

\subsubsection{Entanglement and other phase-sensitive properties}
A stoquastizing  diagonal need not factorize across a bipartition, and a non-factorizing diagonal can change Schmidt values: a controlled-$Z$ across the cut is diagonal and maps the product state $\ket{+}\ket{+}$ to an entangled state.
Consequently, the entanglement of $\ket{\psi_0}$ across a cut can differ from that of the proxy ground-state
$\ket{\phi_0}$---the two states share magnitudes but not
entanglement structure---so entanglement spectra and related quantities are genuinely phase-sensitive.
Transition amplitudes $\braket{x|e^{-iHt}|y}$ likewise involve coherent summation over paths with phase factors determined by the full Hamiltonian, not just its stoquastic proxy, as does full state tomography.

\subsection{Summary: Spectral vs.\ State-level Equivalence}
\label{subsec:op_summary}

The operational boundary between VGP and stoquastic Hamiltonians, as far as we can currently delineate it, is as follows.
VGP and stoquastic Hamiltonians are equivalent---requiring no stoquastizing  diagonal---for ground-state energy, partition function and thermodynamic quantities, and diagonal equilibrium observables.
They appear to be separated---requiring phase data beyond the stoquastic proxy---for classical state description, classical state preparation, tomography, and phase-sensitive properties such as entanglement across a cut, with off-diagonal local observables an open intermediate case.
This is primarily because the stoquastizing diagonal of a VGP, local Hamiltonian can, in general, fail to admit an efficiently realizable representation.
VGP Hamiltonians are just as powerful as stoquastic Hamiltonians for spectral questions but appear genuinely distinct for state-level questions.

\section{VGP adiabatic quantum computer algorithms}
\label{sec:aqc}

A large body of work on adiabatic quantum computation (AQC) and quantum annealing is organized around the stoquastic/non-stoquastic divide.
Stoquastic anneals---transverse-field drivers with diagonal problem Hamiltonians---are widely regarded as classically tame: their instantaneous Hamiltonians are sign-problem-free, and their ground-state energy problem caps at $\StoqMA$~\cite{Bravyi_2006, Bravyi_2015, Bravyi_Hastings_2017}.
The universality constructions for AQC, by contrast, employ non-stoquastic Hamiltonians~\cite{Aharonov_2007, Biamonte_Love_2008}.
This contrast has motivated a sustained effort to engineer non-stoquastic drivers and catalysts---most prominently antiferromagnetic transverse couplers~\cite{Seki_Nishimori_2012}---on the premise that non-stoquasticity per se is the resource that makes a device hard to simulate, and therefore potentially more powerful (see Ref.~\cite{Albash_Lidar_2018} for a review).
The results of this paper sharpen that premise: to the extent that the argument for extra power runs through the sign problem or through ground-state energy complexity, the line in the sand is not stoquastic versus non-stoquastic but VGP versus non-VGP.

Our first observation is that the sign structure of an anneal is a property of its drivers alone, invariant under the choice of problem instance and annealing schedule.

\begin{proposition}[Schedule invariance of holonomy]
\label{prop:schedule_invariance}
Let
\begin{align*}
H(s)  =  H_{\mathrm{diag}}(s)  +  \sum_i A_i(s)\, O_i,
&& s\in[0,1],
\end{align*}
where $H_{\mathrm{diag}}(s)$ is diagonal, the $O_i$ are fixed off-diagonal operators with
\emph{pairwise-disjoint off-diagonal support}, and $A_i(s)>0$ for all $s$. Then the
transition graph $G_{H(s)}$ and the holonomy of every closed walk in it are independent
of $s$. In particular, $H(s)$ has VGP either for every $s$ or for no $s$, and which of
the two holds is decided by the off-diagonal skeleton $\sum_i O_i$ alone.
\end{proposition}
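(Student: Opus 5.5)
The plan is to work entry by entry. I will show that every off-diagonal matrix element of $H(s)$ is a positive $s$-dependent scalar times a fixed $s$-independent complex number. Both claims then follow by inspection from \Cref{def:transition_graph} and \Cref{def:vgp}.

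First, fix $x\neq y$. The operator $H_{\mathrm{diag}}(s)$ contributes nothing to off-diagonal entries. By pairwise-disjoint off-diagonal support, at most one index $i$ has $(O_i)_{xy}\neq 0$. Hence
\begin{align*}
H(s)_{xy} = A_i(s)\,(O_i)_{xy}
\end{align*}
for that unique $i$, and $H(s)_{xy}=0$ if no such $i$ exists. Since $A_i(s)>0$ for every $s$, we get $H(s)_{xy}\neq 0$ if and only if $(O_i)_{xy}\neq 0$ for some $i$, which holds if and only if $\bigl(\sum_j O_j\bigr)_{xy}\neq 0$. Therefore the edge set of $G_{H(s)}$ is exactly that of $G_{\sum_i O_i}$, which does not depend on $s$.

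Second, for any closed walk $\gamma=(z_0,\ldots,z_L=z_0)$ in this fixed graph, each normalized factor in the holonomy satisfies
\begin{align*}
\frac{H(s)_{z_t z_{t-1}}}{|H(s)_{z_t z_{t-1}}|}=\frac{A_i(s)(O_i)_{z_tz_{t-1}}}{A_i(s)|(O_i)_{z_tz_{t-1}}|}=\frac{(O_i)_{z_tz_{t-1}}}{|(O_i)_{z_tz_{t-1}}|}.
\end{align*}
The positive scalar $A_i(s)$ cancels, so $\Phi_{H(s)}(\gamma)=\Phi_{\sum_i O_i}(\gamma)$ for all $s$. It follows that $H(s)$ has VGP for one value of $s$ if and only if it has VGP for all values, if and only if $\sum_i O_i$ has VGP. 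This is decided by the skeleton alone.

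There is no real obstacle here. The one point needing care is the disjointness hypothesis: without it, two terms with different phases could combine with $s$-dependent weights at the same entry, changing that entry's phase or even cancelling it. Positivity of $A_i(s)$, rather than mere nonvanishing, is likewise essential, since a sign flip of $A_i$ would multiply the holonomy of a cycle by $-1$ for each traversal of an edge of $O_i$.
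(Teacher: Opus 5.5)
Your proof is correct and follows essentially the same argument as the paper. Disjointness makes each nonzero off-diagonal entry of $H(s)$ equal to $A_i(s)$ times a fixed entry of a single $O_i$, and the positive scalar changes neither which entries are nonzero nor their phases. Your explicit identification of $G_{H(s)}$ and every holonomy with those of $\sum_i O_i$ also makes the final ``decided by the skeleton alone'' clause precise, which the paper's one-line proof leaves implicit.
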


\begin{proof}
By pairwise disjointness of off-diagonal support, every nonzero off-diagonal entry of
$H(s)$ equals $A_i(s)$ times a fixed entry of a single $O_i$. Multiplication by the
positive scalar $A_i(s)$ changes neither which entries are nonzero nor their complex
phases, so $G_{H(s)}$ and all edge phases---hence all holonomies (\Cref{def:vgp})---are
independent of $s$. Diagonal terms contribute no edges.
\end{proof}

The standard annealing setting---a diagonal problem Hamiltonian and a fixed driver, possibly with catalysts, interpolated by positive schedules---satisfies these hypotheses, so the VGP question for an entire annealing path reduces to the VGP question for its drivers.
(If schedules are allowed to change sign, holonomies can depend on $s$ and VGP must be
assessed pointwise; the disjoint-support hypothesis is what rules out the other failure
mode, in which distinct drivers contribute to a common matrix element.)

Second, a VGP anneal is operationally a stoquastic anneal.
Suppose the driver skeleton admits a stoquastizing  diagonal $D$ (\Cref{prop:vgp_equiv_def}), and consider the conjugated path $H'(s) = DH(s)D^\dagger$, which by \Cref{prop:schedule_invariance} is stoquastic for every $s$.
The two paths have identical spectra and spectral gaps at every $s$, their instantaneous eigenstates are related by $D$, and---because $D$ is diagonal---their computational-basis statistics agree at all times: if $\ket{\psi'(0)} = D\ket{\psi(0)}$, then $\ket{\psi'(t)} = D\ket{\psi(t)}$ and $|\braket{z|\psi'(t)}|^2 = |\braket{z|\psi(t)}|^2$.
The initial condition is matched automatically, since the ground-state of the conjugated driver is $D$ applied to the ground-state of the original driver.
An annealer whose drivers are VGP but not stoquastic is therefore indistinguishable, at the level of computational-basis sampling, from a stoquastic annealer with the same schedules---whether or not anyone can exhibit $D$.
When no efficiently realizable $D$ exists (\Cref{ex:n_qubit_family_non_local_stoquastizing _diagonal}, \Cref{thm:vgp_hard_to_stoquastize}), the stoquastic twin admits no compact \emph{local} description---though its matrix entries remain efficiently computable, which is precisely what \Cref{thm:vgp_stoqma} exploits---and its ground-state energy problem remains $\StoqMA$-verifiable through the stoquastic proxy.

Third, VGP paths inherit the stoquastic complexity ceiling.
If every instantaneous Hamiltonian of an anneal has VGP, then in particular the ground-state energy problem of its final Hamiltonian lies in $\StoqMA$ (\Cref{thm:vgp_stoqma}), exactly as in the stoquastic case (\Cref{cor:stoqma_complete}).
The universality constructions for AQC~\cite{Aharonov_2007, Biamonte_Love_2008} rest on final Hamiltonians whose ground-state energy decision problem is hard for quantum polynomial-time computation; VGP paths cannot reproduce them unless such problems admit $\StoqMA$ verification.
Non-VGP instantaneous Hamiltonians are thus \emph{necessary} for the standard route to universal AQC, while non-stoquastic ones are not \emph{sufficient}.

These distinctions cut through well-studied driver designs.
The canonical non-stoquastic catalyst is the antiferromagnetic transverse coupler $+J X_iX_j$ with $J > 0$~\cite{Seki_Nishimori_2012}, and whether it takes an anneal across the VGP boundary depends entirely on context.
\begin{enumerate}
    \item[$(i)$] An exchange driver $J\sum_{\langle ij \rangle}(X_iX_j + Y_iY_j)$, $J>0$, on a \emph{bipartite} coupling graph is non-stoquastic, yet the sublattice gauge $D = \prod_{i \in A} Z_i$ stoquastizes it; this is precisely the Marshall sign rule~\cite{Marshall_1955}, read in our language as a stoquastizing diagonal.
    Such drivers---including the magnetization-conserving ``XY-mixers'' used for constrained optimization, whenever their coupling graphs are bipartite---are VGP, and by the discussion above they confer no advantage attributable to the sign problem.
    \item[$(ii)$] The same couplers combined with a transverse field $-\Gamma \sum_i X_i$, $\Gamma>0$, are genuinely non-VGP: the triangle $\ket{z} \to X_i\ket{z} \to X_iX_j\ket{z} \to \ket{z}$ in $G_H$ has entries $(-\Gamma)$, $(-\Gamma)$, $(+J)$ and holonomy $\Phi = (-1)^3 \cdot (+1) = -1$ (\Cref{def:vgp}).
    \item[$(iii)$] On non-bipartite coupler graphs, the couplers alone are already non-VGP: traversing an odd coupler cycle of length $\ell_c$ yields an induced cycle with all-positive entries and holonomy $(-1)^{\ell_c} = -1$; the same computation shows that complete-graph XY-mixers, which contain such triangles within each magnetization sector, are non-VGP even though even-length XY rings are VGP.
\end{enumerate}
It is the non-trivial holonomy cycle certificates of $(ii)$ and $(iii)$---not the non-positivity of any particular matrix entry---that license the claim that a device has crossed the sign problem boundary.
The empirical observation that the benefits attributed to non-stoquastic catalysts often survive ``de-signing,'' i.e., replacement of the catalyst by a sign-problem-free counterpart of similar structure~\cite{Crosson_2020}, is consistent with this picture: non-stoquasticity per se was never the operative resource.

For the design and benchmarking of annealers this suggests a discipline.
Before attributing power to a non-stoquastic driver, screen it for VGP: \Cref{prop:disjoint_eff_vgp_recognition} certifies VGP efficiently for drivers with disjoint PMR supports, and \Cref{cor:dstoq_ell} decides in polynomial time whether an efficiently realizable stoquastizing diagonal exists---recovering it in the YES case, at which point the device can be simulated sign-problem-free by PMR-QMC (\Cref{thm:qmc_boundary}) or replaced outright by its explicitly stoquastic twin.
Conversely, the absence of an evident stoquastizing diagonal is not evidence of non-VGP: recognition is $\coNP$-hard already for $2$-local Hamiltonians and $\PSPACE$-complete in general (\Cref{thm:2loc_vgp_hardness}, \Cref{thm:loc_vgp_hardness}), and there are VGP drivers with no efficiently realizable stoquastizing diagonal at all (\Cref{thm:vgp_hard_to_stoquastize}).
A claim of non-VGP, instead, should be provided by a proof of the non-existence of a stoquastizing diagonal or by an explicit non-trivial holonomy cycle, as in $(ii)$--$(iii)$ above.

Two caveats keep the line in the sand honest.
Crossing it is not sufficient for advantage: a non-VGP anneal defeats the closed-walk expansion of PMR-QMC, not every classical simulation method.
And remaining on the VGP side is not sufficient for simulability: sign-problem-freeness removes one obstruction, but sign-free QMC can still fail to equilibrate~\cite{Hastings_Freedman_2013}, a gap we record as the final open problem of \Cref{sec:discussion}.
What the VGP framework does settle is this: any argument for the power of a non-stoquastic annealer that proceeds from the sign problem, from ground-state energy complexity, or from the non-positivity of matrix elements applies only on the non-VGP side of the boundary---and whether a proposed device is on that side is a property of the cycles of its transition graph, not of its entries.

\section{Discussion}
\label{sec:discussion}

We summarize the main conclusions and state open problems.

\subsubsection{VGP as the PMR-QMC sign-problem-free boundary}
Within the PMR-QMC framework, stoquasticity is a basis-dependent sufficient condition for sign-problem-free simulation, while VGP is the exact, diagonal-similarity invariant necessary and sufficient condition (\Cref{thm:qmc_boundary}).
Two Hamiltonians related by a diagonal unitary share all holonomies and all QMC weights, yet one may be stoquastic and the other not.
VGP captures the physically invariant content.
Physical simulation questions should be indexed by invariants, not by merely sufficient---but not necessary---conditions such as stoquasticity.

\subsubsection{A computational wall between VGP and stoquasticity}
There exist VGP local Hamiltonians that belong to a family that admit efficient recognition of the VGP property, but are formally hard to stoquastize in a complexity-theoretic sense (\Cref{thm:vgp_hard_to_stoquastize}).
Thus VGP and stoquasticity are two distinct classes, computationally, but VGP more-accurately describes the intrinsic boundaries in Hamiltonian complexity that stoquasticity fails to completely capture. 

\subsubsection{Diagonal unitary transformations do not enlarge $\StoqMA$}
The local Hamiltonian problem restricted to VGP Hamiltonians is $\StoqMA$-complete (\Cref{thm:vgp_stoqma}, \Cref{cor:stoqma_complete}): the ground-state energy decision problem does not become harder when stoquasticity is hidden behind a diagonal unitary, however wild that diagonal may be.
The mechanism is worth isolating.
The stoquastic proxy map $H \mapsto \mathcal{S}(H)$ produces a cospectral stoquastic representative from any VGP instance using only local data; the reduction never verifies VGP, never constructs a stoquastizing unitary, and never constructs a compact operator-basis decomposition of $\mathcal{S}(H)$.
It uses the VGP promise solely to guarantee spectral equivalence between $H$ and $\mathcal{S}(H)$, and it needs only efficient query access to the matrix entries of $\mathcal{S}(H)$---not a local description of it (see~\Cref{lem:oracle-measurement}).
This yields the collapse $\VGPMA = \StoqMA$ (\Cref{cor:vgpma_collapse}).

\subsubsection{The equivalence extends to thermodynamics and diagonal observables}
Beyond energy, the partition function, all equilibrium thermodynamic quantities, and all diagonal equilibrium expectation values are \emph{identical} for a VGP Hamiltonian $H$ and its stoquastic proxy $\mathcal{S}(H)$ (\Cref{sec:operational}); whether general off-diagonal local observables likewise agree---and can be recovered from proxy data without the stoquastizing diagonal---is left open (\Cref{prob:offdiag_observables}).
The VGP--stoquastic distinction becomes operationally relevant for tasks involving the full quantum state: classical state description, state preparation, and phase-sensitive properties such as entanglement across a cut (\Cref{sec:operational}).
The recovery of an explicit stoquastizing diagonal, which is unnecessary for the spectral questions, is precisely what these state-level tasks require---and \Cref{sec:diagonal_recovery} shows that it is tractable under some natural conditions (\Cref{thm:dstoq_efficient}, \Cref{cor:dstoq_ell}) even though it is $\PSPACE$-complete in general (\Cref{cor:pspace_stoquastizing_diagonal}).

\subsubsection{Recognition is harder than exploitation}
A striking asymmetry runs through our results: one can \emph{use} the VGP promise far more easily than one can verify it (in the worst case).
Ground-state energy estimation under the promise is no harder than in the stoquastic case, yet deciding whether a given local Hamiltonian has VGP is $\coNP$-hard already at $2$-locality and $\PSPACE$-hard for $k \geq 5$ (\Cref{thm:2loc_vgp_hardness}, \Cref{thm:loc_vgp_hardness})---harder than the local Hamiltonian problem itself.
Nor does the promise deliver a usable stoquastizing diagonal: there are VGP local Hamiltonians admitting no efficiently realizable stoquastizing  diagonal at all (\Cref{thm:vgp_hard_to_stoquastize}), and conversely there are families where VGP is efficiently certifiable while stoquasticity recognition is $\coNP$-hard (\Cref{cor:recognition_vgp_easy_stoq_hard}).
Stoquasticity and VGP are thus not merely different classes but differently \emph{shaped} ones, and the practical consequence is that one should screen for VGP with the efficient sufficient conditions of \Cref{prop:disjoint_eff_vgp_recognition} and \Cref{thm:dstoq_efficient} rather than search for a stoquastizing diagonal.
We leave the identification of other Hamiltonian families that admit efficient recognition of the VGP property for future work.

\subsubsection{Consequences for adiabatic quantum computation}
The program of engineering non-stoquastic drivers and catalysts rests on the premise that non-stoquasticity is itself the resource that makes a device hard to simulate.
Our results relocate that premise (\Cref{sec:aqc}): for standard anneals the holonomies are schedule- and instance-invariant (\Cref{prop:schedule_invariance}), so the relevant property is a feature of the drivers alone; a VGP anneal is indistinguishable from a stoquastic anneal at the level of computational-basis sampling, whether or not the stoquastizing diagonal is efficiently realizable; and VGP paths inherit the $\StoqMA$ ceiling.
The line in the sand is VGP versus non-VGP, not stoquastic versus non-stoquastic.
Concretely, bipartite exchange drivers and bipartite XY-mixers are VGP by the Marshall sign rule~\cite{Marshall_1955} and confer no sign-problem-derived advantage, whereas antiferromagnetic couplers combined with a transverse field, or placed on non-bipartite graphs, carry explicit non-trivial holonomy cycles and do cross the boundary.

\subsubsection{Scope and limitations}
Three limitations bound these conclusions.
First, our notion of sign-problem-freeness in classical simulation is specific to the PMR-QMC framework; we have not shown that VGP characterizes the sign problems arising in worldline, stochastic series expansion, or determinantal formulations, which may have different tractability boundaries.
Second, our completeness results concern the local Hamiltonian problem \emph{restricted to the VGP class}: widening the promise from stoquastic to VGP leaves the ground-state energy problem $\StoqMA$-complete (\Cref{cor:stoqma_complete}). 
This should not be read as an instance-wise dividing line between $\StoqMA$ and $\QMA$.
We do not show that non-VGP Hamiltonians fall outside $\StoqMA$---many non-VGP families are trivially easy---so VGP is not a boundary separating $\StoqMA$-easy from $\QMA$-hard instances.
What the results establish is that $\StoqMA$-completeness is retained across the entire VGP class: stoquasticity was never the operative feature distinguishing $\StoqMA$ from $\QMA$, and the diagonal-unitary-invariant class VGP inherits the same complexity.
In this sense the traditional attribution of the $\StoqMA$ vs. $\QMA$ boundary to stoquasticity is too narrow; it is a statement about vanishing versus non-vanishing geometric phase.
Third, sign-problem-freeness removes one obstruction to classical simulation, not all of them: a VGP Hamiltonian might still be hard to sample from in PMR-QMC (\Cref{prob:qmc_efficiency}).

\subsubsection{Conclusion}
The results of this paper can be summarized as follows.
The PMR-QMC sign problem boundary, established in~\cite{Hen_2021}, is that $H$ is sign-problem-free under PMR-QMC if and only if $H$ has VGP.
The complexity statement, established here, is that the local Hamiltonian problem is $\StoqMA$-complete under a VGP promise---and hence that hiding stoquasticity behind a diagonal unitary, even a non-representable one, confers no additional verification power.
The operational boundary, delineated here, is that VGP and stoquastic Hamiltonians are equivalent for spectral, thermodynamic, and diagonal-observable questions but appear distinct for state-level questions, with the stoquastizing  diagonal marking the divide.
Along the way we classified the complexity of recognizing VGP and of recovering stoquastizing  diagonals, exhibited VGP Hamiltonians that are provably hard to stoquastize, and drew out the consequences for the design of adiabatic quantum algorithms.
We hope that these results can be used to re-examine any potential separation between $\StoqMA$ and $\QMA$ through the lens of invariance under diagonal unitary transformation: the geometric phases of cycles in the transition graph, not the sign of any particular matrix element, is what one should be looking through.

\subsubsection{Open problems.}

\begin{problem}[$\coNP$-completeness of VGP recognition for 2-local Hamiltonians]
    \label{prob:2_local_poly_diam}
    We have shown that VGP recognition is $\coNP$-hard for 2-local spin-1/2 Hamiltonians.
    Is this decision problem also in $\coNP$?
    A natural approach to proving $\coNP$ membership is bounding the sizes of isometric cycles in the transition graph of an arbitrary 2-local spin-1/2 Hamiltonian $H$.
    If one can prove that the isometric cycles in $G_H$ are bounded in size by a polynomial, then $\coNP$ membership follows immediately.
    If one can provide an example where there are isometric cycles of superpolynomial length in $G_H$, then we conjecture that VGP recognition in 2-local Hamiltonians is harder than $\coNP$-hard.
\end{problem}

\begin{problem}[SAT encoding of 2-local VGP recognition]
    \label{prob:2local_vgp_recog_sat_encoding}
    If VGP recognition is in $\coNP$ for 2-local Hamiltonians, then one can construct, in polynomial time, a SAT encoding of the complementary recognition problem (i.e., the formula is satisfiable exactly when the input does not have VGP).
    Thus VGP recognition can be decided by checking unsatisfiability of this encoding.
    If one can find a useful encoding, then modern SAT solvers~\cite{Biere_2021} may often decide such instances efficiently in practice.
\end{problem}

\begin{problem}[Accessibility of off-diagonal observables]
    \label{prob:offdiag_observables}
    Under a VGP promise, all diagonal equilibrium observables are computable from the stoquastic proxy (\Cref{sec:operational}).
    For an off-diagonal $k$-local observable $O$, the expectation value requires the relative phases $\theta_x - \theta_y$ on the pairs $(x,y)$ where $O$ is supported.
    When those pairs are edges of $G_H$, the phases are locally computable from the Hamiltonian; in general they are path products in $G_H$, and the connecting paths need not be short.
    For which classes of $(H, O)$ can $\braket{\psi_0|O|\psi_0}$ be estimated from stoquastic-proxy data without recovering a stoquastizing  diagonal?
\end{problem}

\begin{problem}[Non-constructive sign-curing maps]
    \label{prob:sign_curing_maps}
    The stoquastic proxy map $\mathcal{S}$ (\Cref{def:stoq_proxy_map}) is not efficiently realizable in general.
    That is, the Pauli expansion of $\mathcal{S}(H)$ may be non-local and have exponentially many non-zero terms, even when $H$ is local (\Cref{ex:n_qubit_family_non_local_stoquastizing _diagonal}).
    However, $\mathcal{S}(H)$ can still have its ground-state energy estimated by a $\StoqMA$ verifier~(see discussion of \Cref{lem:stoq_map_decomp}).
    Are there other maps on Hamiltonians that fail to preserve locality, but allow one to obtain efficient access to the matrix entries of a cospectral stoquastic Hamiltonian $H_{\stoq}$?
    And, in this case, can one prove that the local Hamiltonian problem for $H_{\stoq}$ is in $\StoqMA$ using~\Cref{lem:oracle-measurement}?
\end{problem}

\begin{problem}[QMC efficiency beyond the sign problem]
    \label{prob:qmc_efficiency}
    VGP guarantees sign-problem-free simulation, but sign-freeness is necessary, not sufficient, for efficient Monte Carlo.
    Are there VGP families where the effective Monte Carlo cost is qualitatively worse than for explicitly stoquastic Hamiltonians with similar physical parameters?
\end{problem}

\section*{Acknowledgments}

We thank Milad Marvian and Michael Jarret for useful discussions. 

\bibliography{refs}
\appendix

\onecolumngrid

\section{Complexity Preliminaries}
\label{app:complexity_class_definitions}

\begin{definition}
    $\P$ is the class of decision problems (or languages) that can be solved by a deterministic Turing machine in polynomial time, i.e., $O(n^k)$ for some constant $k$.
    \qed
\end{definition}
\begin{definition}
    $\NP$ (Nondeterministic Polynomial time) is the class of decision problems for which a YES answer can be verified by a deterministic Turing machine in polynomial time, given a polynomial-sized certificate (or witness). 
    Equivalently, it is the class of languages decidable by a nondeterministic Turing machine in polynomial time.

    $\coNP$ is the class of decision problems whose complements are in $\NP$. Alternatively, it is the class of problems for which a NO answer can be verified by a deterministic Turing machine in polynomial time given a short certificate of rejection.
    \qed
\end{definition}
\begin{definition}
    $\PSPACE$ is the class of decision problems that can be solved by a deterministic (or nondeterministic) Turing machine using a polynomial amount of memory space, i.e., $O(n^k)$ space for some constant $k$, regardless of the time computation takes.
    \qed
\end{definition}
\begin{definition}
    $\SigmaTwoP$ is the second level of the polynomial hierarchy, formally defined as $\SigmaTwoP = \NP^{\NP}$. \qed
\end{definition}
An oracle Turing machine $M^A$ is a standard Turing machine equipped with a special query tape that can query membership in a language $A$ (the ``oracle'') in a single computation step. 
\begin{itemize}
    \item $\NP^{\NP}$ represents the class of problems decidable in polynomial time by a \emph{nondeterministic} Turing machine with access to an oracle for an $\NP$-complete problem (like SAT).
    \item $\NP^{\coNP}$ represents the class of problems decidable by a nondeterministic polynomial-time Turing machine with access to a $\coNP$ oracle.
\end{itemize}
Because an oracle machine can simply negate the answer from its oracle at no extra cost, an oracle for $\NP$ provides the exact same computational power as an oracle for $\coNP$. Therefore, $\SigmaTwoP = \NP^{\NP} = \NP^{\coNP}$.

We will now define the Hamiltonian complexity classes relevant to this paper.

\begin{definition}
    \label{def:stoqma}
    Let $a,b:\mathbb{N}\to[0,1]$ be efficiently computable functions satisfying $a(n)-b(n)\ge 1/\poly(n)$. A promise problem $L=(L_{\text{yes}},L_{\text{no}})$ is in $\StoqMA(a,b)$ if there exists a polynomial-time uniform family of quantum circuits $\{V_x\}$ with the following form.
    
    For an input $x\in\{0,1\}^n$, the verifier $V_x$ acts on a witness register $W$ of $\poly(n)$ qubits, a register of ancilla qubits initialized to $\ket{0}$, and a register of ancilla qubits initialized to $\ket{+}$.
    The circuit $V_x$ is a classical reversible circuit; equivalently, it is generated by gates from
    \begin{align*}
        \{X,\mathrm{CNOT},\mathrm{Toffoli}\}.
    \end{align*}
    At the end of the computation, a designated output qubit is measured in the Hadamard basis $\{\ket{+},\ket{-}\}$, and the verifier accepts upon outcome $\ket{+}$.
    
    The following completeness and soundness conditions hold:
    \begin{enumerate}[label=(\roman*)]
        \item If $x\in L_{\text{yes}}$, then there exists a quantum witness $\ket{\psi}$ such that
        \begin{align*}
            \Pr[V_x \text{ accepts } \ket{\psi}] \ge a(n).
        \end{align*}
        \item If $x\in L_{\text{no}}$, then for every quantum witness $\ket{\psi}$,
        \begin{align*}
            \Pr[V_x \text{ accepts } \ket{\psi}] \le b(n).
        \end{align*}
    \end{enumerate}
    We define $\StoqMA$ to be the union of $\StoqMA(a,b)$ over all such inverse-polynomially separated parameters $a$ and $b$. \qed
\end{definition}

\begin{definition}
    \label{def:ma}
    $\MA$ (Merlin--Arthur) is the class of promise problems for which a YES answer can be verified by a probabilistic polynomial-time classical verifier given a polynomial-sized classical witness.
    More precisely, there is a polynomial-time randomized verifier such that: 
    \begin{itemize} 
        \item For every YES instance, there exists a classical witness that causes the verifier to accept with probability at least $2/3$.
        \item For every NO instance, every classical witness causes the verifier to accept with probability at most $1/3$. 
    \end{itemize}
    The constants $2/3$ and $1/3$ may be replaced by any efficiently computable completeness and soundness parameters separated by at least an inverse polynomial. \qed 
\end{definition}

\begin{definition} 
    \label{def:qma}
    $\QMA$ (Quantum Merlin--Arthur) is the quantum analogue of $\MA$.
    It is the class of promise problems for which a YES answer can be verified by a polynomial-time quantum verifier given a polynomial-sized quantum witness.
    More precisely, there is a polynomial-time uniform family of quantum circuits such that: \begin{itemize} 
        \item For every YES instance, there exists a quantum witness that causes the verifier to accept with probability at least $2/3$.
        \item For every NO instance, every quantum witness causes the verifier to accept with probability at most $1/3$.
    \end{itemize} 
    As with $\MA$, the completeness and soundness parameters may be replaced by any efficiently computable parameters separated by at least an inverse polynomial. \qed 
\end{definition}

The complexity classes sit in the following hierarchies:
\begin{center} $\P \subseteq \NP \subseteq \MA \subseteq \StoqMA = \VGPMA \subseteq \QMA \subseteq \PSPACE$,\\ $\P \subseteq \NP \subseteq \SigmaTwoP \subseteq \PSPACE$. \end{center}

\section{Proof of~\Cref{lem:oracle-measurement}}
\label{app:proof_of_oracle_measurement}

The proof has two steps.
The first records that the measurement primitive of Ref.~\cite{Ioannou_2020} applies to any diagonal projector supplied as a reversible circuit, and that locality plays no role in it.
The second reduces a general nonnegative diagonal to that case by binary expansion, and charges the resulting truncation error against the promise gap.

\subsection{Diagonal projectors given by reversible circuits}

Let $\Pi = \sum_{y \in S}\ket{y}\!\bra{y}$ be a diagonal projector whose indicator $y \mapsto [\,y \in S\,]$ is computed by a reversible circuit $C_\Pi$ over $\{X,\mathrm{CNOT},\mathrm{Toffoli}\}$ of size $\poly(n)$.
The measurement primitive of Ref.~\cite{Ioannou_2020} measures the term $X \otimes \Pi$ as follows: apply $C_\Pi$ to write the indicator onto a flag qubit, introduce one $\ket{+}$ ancilla, read the designated output qubit in the Hadamard basis, and uncompute the work register of $C_\Pi$.
The resulting acceptance probability is affine in $\braket{\psi|X\otimes\Pi|\psi}$, with coefficients independent of the witness.

The point to record is that this primitive consumes $\Pi$ \emph{only} through the circuit $C_\Pi$.
In Ref.~\cite{Ioannou_2020} the projectors arise from local classical Hamiltonians, and locality is invoked solely to guarantee that a circuit of size $\poly(n)$ exists; it is used nowhere else in the argument.
When such a circuit is supplied directly, as it is below, the primitive applies verbatim.

\subsection{General nonnegative diagonals by binary expansion}

Write $G := F/M$, so that $0 \le G(y) \le 1$ for every $y$, and let $\widehat G(y) := \widehat F(y)/M$ be the truncation produced by $\mathcal{O}_F$, which satisfies $|\widehat G(y) - G(y)| \le 2^{-b}$.
Expanding $\widehat G$ in its bits gives
\begin{align*}
    \widehat G = \sum_{\ell=1}^{b} 2^{-\ell}\,\Pi_\ell,
    && \Pi_\ell := \!\!\sum_{y \,:\, \ell\text{-th bit of }\widehat G(y) \,=\, 1}\!\! \ket{y}\!\bra{y},
\end{align*}
a nonnegative combination of $b$ diagonal projectors.
Each indicator $y \mapsto [\,\ell\text{-th bit of }\widehat G(y)\,]$ is a Boolean function of the output of $\mathcal{O}_F$, and is therefore computed exactly by a reversible circuit of size $\poly(n,b)$: run $\mathcal{O}_F$, copy the relevant output bit onto a flag qubit, and uncompute the work register.
The previous step therefore applies to every $\Pi_\ell$.

Sample $\ell \in \{1,\ldots,b\}$ with probability proportional to $2^{-\ell}$, using fresh ancillas exactly as the sum over decomposition terms is handled in Ref.~\cite{Ioannou_2020}, and run the measurement above for the sampled $\Pi_\ell$.
The acceptance probability of the resulting verifier is affine in
\begin{align*}
    \sum_{\ell} 2^{-\ell}\,\braket{\psi|X\otimes\Pi_\ell|\psi}  \propto  \braket{\psi|X\otimes\widehat G|\psi}.
\end{align*}

It remains to charge the truncation error.
Since $\widehat G$ and $G$ are simultaneously diagonal,
\begin{align*}
    \big|\braket{\psi|X\otimes\widehat G|\psi}-\braket{\psi|X\otimes G|\psi}\big|
     \le \big\|X\otimes(\widehat G-G)\big\|
     = \max_y\big|\widehat G(y)-G(y)\big| \le 2^{-b}.
\end{align*}
Taking $b=\lceil\log_2(M/\delta)\rceil+O(1)$ therefore makes the induced error in $p(\psi)$ at most $\delta$, as claimed.
Every gate used lies in $\{X,\mathrm{CNOT},\mathrm{Toffoli}\}$ and every ancilla is prepared in $\ket{0}$ or $\ket{+}$, so the circuit is a valid $\StoqMA$ verifier.
The case of $F$ in place of $X \otimes F$ is identical, dropping the flag qubit.

\qed

\section{Proof of~\Cref{thm:2loc_vgp_hardness}}
\label{app:proof_of_twolocalvgp_conphard}
To prove this theorem, we reduce from an instance of the $\NP$-complete $\PARTITION$ decision problem~\cite{Lucas_2014}.

\begin{definition}
    The $\PARTITION$ decision problem is as follows:
    
    \begin{itemize}[leftmargin=5.5em]
        \item[\textbf{Input}:] Suppose $a_1, \ldots, a_n$ are positive integers.
        \item[\textbf{Problem}:] Decide if there is an assignment of signs $s_1,\ldots,s_n \in \{\pm 1\}^n$ such that $ s_1a_1 + \ldots + s_na_n = 0$. \qed
    \end{itemize}
\end{definition}

Now we prove~\Cref{thm:2loc_vgp_hardness}.
    We use Ref.~\cite{Lucas_2014}'s encoding of an instance of $\PARTITION$ into an Ising Hamiltonian. Let $a_1, \ldots, a_n$ be the given positive integers, introduce qubits $1, 2$ and $3, \ldots, n+2$, and set
    \begin{align*}
        H_0 := \sum_{i=1}^{n} a_i Z_{i+2}
    \end{align*}
    then put
    \begin{align*}
        H_{+} := H_0 + \frac{1}{2}I && \text{and} && H_{-} := H_0 - \frac{1}{2}I.
    \end{align*}
    Now consider the Hamiltonian
    \begin{align*}
        H = -( (H_{+}) X_{1} + (H_{-})X_2 + X_1X_2).
    \end{align*}
    By the induced-cycle characterization of VGP in \Cref{prop:vgp_equiv_def}(iii), it is sufficient to show that the weight of every induced cycle of $G_H$ has trivial complex phase if and only if there is no assignment of signs $\{s_i\}$ making $\sum_{i=1}^{n} s_ia_i = 0$.
    
    All PMR terms of $H$ act non-trivially on qubits $1$ and $2$ only, with $H_{\pm}$ diagonal on the remaining qubits.
    Consequently, every connected component of $G_H$ is a complete graph $K_4$ on the four assignments of qubits $1, 2$, with the spectator qubits fixed to some configuration $z$; within such a component, every edge of a given type ($X_1$, $X_2$, or $X_1X_2$) carries the same weight (\Cref{def:transition_graph}), namely $H_+(z)$, $H_-(z)$, and $1$, respectively, where $H_\pm(z) := \bra{z}H_0\ket{z} \pm \tfrac12$.
    (All edges are present: $H_\pm(z) \neq 0$ since $\bra{z}H_0\ket{z}$ is an integer.)
    The induced cycles of $K_4$ are its triangles, and every triangle uses each edge type (PMR term) exactly once, so the weight of any induced cycle in the component labeled by $z$ is
    \begin{align*}
        w(z) = H_+(z)\, H_-(z) = \bra{z}\, H_0^2 - \frac{1}{4}I \,\ket{z}.
    \end{align*}
    Equivalently, in the fundamental-cycle language of \Cref{def:fundamental_cycle}, the triangles arise from the product of the three PMR terms of $H$, and the sign-adjusted product
    \begin{align*}
        (-1)^3(-H_{+} X_1)(-H_{-}X_2)(-X_1X_2)
        = H_{+} H_{-} = H_0^2 - \frac{1}{4}I
    \end{align*}
    encodes exactly these weights on its diagonal.
    Now observe that $\bra{z} H_0^2 \ket{z}$ is either 
    \begin{itemize}[leftmargin=2em]
        \item a positive integer
        \item or 0, which happens if and only if
        $\sum_{i=1}^n s_ia_i = 0$ where
        \begin{align*}
            s_i = \begin{cases}
                +1 & \text{if bit } i+2 \text{ of } z \text{ is } 0,\\
                -1 & \text{if bit } i+2 \text{ of } z \text{ is } 1.
            \end{cases}
        \end{align*}
    \end{itemize}
    It follows that 
    \begin{align*}
        w(z) = \braket{z|\,H_0^2 - \frac{1}{4}I \,|z} < 0 \iff \bra{z} H_0^2 \ket{z} = 0,
    \end{align*}
    and otherwise $w(z) \geq 3/4$, since $\bra{z}H_0\ket{z}$ is an integer.
    That is, the induced-cycle weight in the component labeled by $z$ has non-trivial (negative) phase if and only if $\bra{z} H_0^2 \ket{z} = 0$. Hence $H$ \emph{does not} have VGP if and only if there exists a computational basis state $\ket{z}$ so that $\bra{z} H_0^2 \ket{z} = 0$, the existence of which would settle this instance of the $\PARTITION$ decision problem.
    The construction of $H$ is efficient, so deciding if such a computational basis state exists is $\NP$-hard.
    Thus the converse problem---deciding if $H$ has VGP---is $\coNP$-hard.

\qed

\section{Proof of~\Cref{thm:loc_vgp_hardness}}
\label{app:proof_of_threelocalvgp_pspacecomplete}
We first argue $\PSPACE$ membership.
A Hamiltonian fails to have VGP precisely when $G_H$ contains a closed walk of non-trivial holonomy, and a witnessing walk may be exponentially long, so it must be searched rather than merely evaluated.
By the assumption of~\Cref{sec:preliminaries}, every non-zero entry may be written as $H_{xy} = R_{xy}\,\omega^{k_{xy}}$ with $R_{xy} > 0$, $\omega = e^{2\pi i/m}$, and $k_{xy} \in \Z_m$ computable in $\poly(n)$ time from the local description of $H$.
Taking $m$ even without loss of generality, so that $-1 = \omega^{m/2}$, the holonomy of a closed walk of length $L$ is
\begin{align*}
    \Phi(\gamma) = \omega^{\,(m/2)L \,+ \, \sum_t k_t},
\end{align*}
so the entire holonomy is tracked by a single accumulated exponent modulo $m$---an integer of $\log_2 m = \poly(n)$ bits, however long the walk.
Moreover, a witnessing walk may be taken to be a simple cycle, hence of length at most $2^n$.
Non-VGP detection is therefore in $\mathsf{NPSPACE}$: nondeterministically guess a start vertex and walk $G_H$ edge by edge for at most $2^n$ steps, storing only the current vertex, a step counter, and the accumulated exponent modulo $m$, and accept upon returning to the start with accumulated exponent $\not\equiv 0$.
By Savitch's theorem $\mathsf{NPSPACE} = \PSPACE$~\cite{Savitch_1970}, and $\PSPACE$ is closed under complement, so VGP recognition is in $\PSPACE$.
It suffices to prove the problem is $\PSPACE$-hard.

The rest of the proof of~\Cref{thm:loc_vgp_hardness} will be approached as follows:
\begin{enumerate}[label=\textbf{\arabic*.}]
    \item We will introduce \emph{nondeterministic constraint logic} (NCL), a combinatorial system used as a formal computational model.
    An NCL instance models computation via moves in its \emph{reconfiguration graph}, which contains as vertices \emph{legal configurations} of the NCL instance.
    \item Given an arbitrary NCL instance $\Gamma$, we will construct a 5-local, geometrically local Hamiltonian $H$ such that some collection of connected components of $G_H$ is isomorphic to the reconfiguration graph of $\Gamma$.
    We also construct $H$ to be stoquastic (entry-wise non-positive), so it has VGP.
    \item We introduce a known $\PSPACE$-complete problem pertaining to NCL called $\FREENCLREV$ (see \Cref{lem:hard_decision_problem_in_ncl}).
    Given an arbitrary instance of $\FREENCLREV$ and its verifier $V$, we will introduce a 5-local gadget Hamiltonian $H_{\text{gadget}} \geq 0$ (entry-wise) so that $H + H_{\text{gadget}}$ does not have VGP if and only if $V(\Gamma)$ accepts.
    This proves VGP recognition is $\PSPACE$-hard.
\end{enumerate}

\subsection{Nondeterministic Constraint Logic (NCL)}
Nondeterministic Constraint Logic (NCL), introduced by Hearn and Demaine (Ref. \cite{Hearn_2004}), is a graph-based model of reconfiguration. 
An NCL instance consists of an edge-weighted graph together with an orientation of its edges. 
A configuration is legal if, at every vertex, the total weight of incoming edges is at least a prescribed minimum inflow, usually 2. 
A reconfiguration move consists of reversing the orientation of a single edge, subject to the constraint that both the starting and resulting orientations are legal. 
The reachability question asks whether one legal orientation can be transformed into another by such a sequence of moves.

In the standard AND/OR formulation, the graph has maximum degree 3 and uses only red and blue edges, with red edges having weight 1 and blue edges having weight 2. 
Every vertex has minimum inflow requirement of 2. An OR vertex is incident to three blue edges, so any one incoming edge suffices to satisfy the inflow constraint. 
An AND vertex is incident to two red edges and one blue edge; the blue edge may be directed outward only when both red edges are directed inward. 
Despite this very restricted local structure, deciding if you can (legally) reverse edges to go from a start configuration to a target configuration is $\PSPACE$-complete, making AND/OR NCL a useful source problem for reductions to reconfiguration problems. 
$\PSPACE$-completeness also survives if the NCL constraint graph has \emph{bounded bandwidth}~\cite{vanderZanden_2015}.
\begin{definition}
    A graph has \emph{bounded bandwidth} if its vertices (and thus its edges) can be linearly ordered such that the maximum index difference between any connected vertices is strictly limited by a constant. \qed
\end{definition}
This is analogous to geometric locality for Hamiltonians.
As we will see, one can indeed embed an NCL reconfiguration graph into a geometrically local Hamiltonian's transition graph.

\subsection{Encoding the reconfiguration graph of an NCL instance into a 5-local Hamiltonian}

Suppose $\Gamma = (V,E)$ is an $\ANDOR$ NCL instance with bounded bandwidth. 
For every edge $e$, introduce a new qubit $b(e)$.
For every vertex $w$ with incident edge $e$, define $\sigma(e,w)$ to be 
\begin{itemize}
    \item $+1$ if $b(e) = 0$ implies $e$ is pointing toward $w$.
    \item $-1$ otherwise.
\end{itemize}
Now, suppose $e=\{u,v\} \in E$ is arbitrary.
Every vertex in $\Gamma$ has degree 3, so for each $w \in \{u,v\}$, let $e_1^{(w)} $ and $e_2^{(w)}$ denote the edges---different from $e$---that are incident to $w$.
Define the following indicator:
\begin{align*}
    \ind_{w, i} = \frac{I + \sigma(w,e_i^{(w)})\, Z_{b(e_i^{(w)})}}{2}.
\end{align*}
In words: $\bra{z}\ind_{w, i}\ket{z}$ is 1 when $e_i^{(w)}$ is oriented \emph{toward} $w$ (i.e., providing inflow to $w$), and $0$ otherwise.

We construct a Hamiltonian term $h_e$ by performing a case analysis on $e$:
\begin{enumerate}
    \item Assume that $e$ is a blue edge.
    There are three cases:
    \begin{enumerate}
        \item[a.] Assume $u$ and $v$ are both $\OR$ vertices.
        Then every edge incident to $u$ or $v$ is blue.
        Define
        \begin{align*}
            h_e  := \left[\ind_{u,1} + \ind_{u, 2}  \right]\left[\ind_{v, 1} + \ind_{v, 2}  \right]
        \end{align*}
        \item[b.] Assume that $u$ and $v$ are both $\AND$ vertices.
        Define
        \begin{align*}
            h_e = \ind_{u,1}\ind_{u,2}\ind_{v,1}\ind_{v,2}
        \end{align*}
        \item[c.] Without loss of generality, assume that $u$ is an $\AND$ vertex and $v$ is an $\OR$ vertex.
        Define
        \begin{align*}
            h_e &:= \ind_{u,1}\ind_{u,2}[\ind_{v,1} + \ind_{v,2} - \ind_{v,1}\ind_{v,2}]
        \end{align*}
    \end{enumerate}
    \item Assume that $e$ is a red edge.
    Both $u$ and $v$ are $\AND$ vertices.
    Without loss of generality, assume that $e_1^{(u)}$ is blue and $e_1^{(v)}$ is blue.
    Define
    \begin{align*}
        h_e := \ind_{u, 1}\ind_{v, 1}.
    \end{align*}
\end{enumerate}
One can verify that the Hamiltonians are 5-local and are entry-wise $\geq 0$.
Defining $h_e$ for every $e \in E$, the stoquastic 5-local Hamiltonian that encodes the reconfiguration graph of $\Gamma$ is
\begin{align*}
    H = \sum_{e} -h_eX_{b(e)}.
\end{align*}
Notice that $H$ is geometrically local since the NCL instance has bounded bandwidth.
Recall that the condition for the legality of the reversal of edge $e = \{u,v\}$ is that $u$ and $v$ must have their minimum inflow constraint satisfied \emph{no matter what orientation} $e$ is in.
We analyze how each term enforces this legality condition for edge reversal, case by case:
\begin{enumerate}
    \item[1.a.] Every relevant edge in this case is blue, i.e. at least one edge oriented toward $u$ and at least one edge oriented toward $v$ is sufficient and necessary to satisfy the inflow requirements of both $u$ and $v$ without $e$.
    That is exactly what $h_e$ checks in this case.
    \item[1.b.] Reversing $e$ is legal if and only if every red edge incident to $u$ and $v$ is providing inflow to $u$ and $v$.
    \item[1.c.]
    Reversing $e$ is legal if and only if $u$ is provided inflow from both of its red edges and $v$ is provided inflow by at least one of its blue edges.
    \item[2.] Reversing $e$ is legal if and only if the blue edges incident to $u$ and $v$ are providing inflow to $u$ and $v$.
    That is the only way for $u$ and $v$ to have their minimum inflow requirement met without $e$.
\end{enumerate}
Therefore, given a legal configuration $z \in \{0,1\}^{|E|}$, a transition $\ket{z} \to X_{b(e)}\ket{z}$ is allowed in $G_H$ if and only if the edge $e$ can legally be reversed in configuration $z$ of $\Gamma$.
Therefore, connected components of $G_H$ either contain only legal configurations of $\Gamma$ or none at all.

\subsection{$\PSPACE$-hardness reduction}

\begin{lemma}[$\FREENCLREV$ from Ref.~\cite{DeBiasi_2016}]           \label{lem:hard_decision_problem_in_ncl}
    Suppose $\Gamma = (V,E)$ is an $\ANDOR$ NCL instance with bounded bandwidth.
    The following decision problem is $\PSPACE$-complete:

    Suppose $e_A$ and $e_B$ are red edges in $E$.
    Decide if there exist mutually reachable configurations $x$ and $y$ such that reversing $e_A$ is legal from configuration $x$ and reversing $e_B$ is legal from configuration $y$.
\end{lemma}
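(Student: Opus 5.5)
The plan is to prove membership directly and to obtain hardness by a reduction from the configuration-to-edge form of $\ANDOR$ NCL reachability, which remains $\PSPACE$-complete for bounded-bandwidth constraint graphs~\cite{vanderZanden_2015}. The reduction uses the fact that NCL moves are reversible. As a result the reconfiguration graph is undirected, and ``mutually reachable'' just means ``in the same connected component.''

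\emph{Membership.} I would give an $\mathsf{NPSPACE}$ algorithm and apply Savitch's theorem~\cite{Savitch_1970}, as in the membership argument for \Cref{thm:loc_vgp_hardness}.
\begin{enumerate}
\item Guess a configuration $x$, check that it is legal, and check that reversing $e_A$ from $x$ is legal. Each check only inspects the degree-3 neighbourhoods of the endpoints of $e_A$.
\item Guess a walk of at most $2^{|E|}$ legal single-edge reversals from $x$, storing only the current configuration and a step counter.
\item Accept if the walk reaches a configuration $y$ from which reversing $e_B$ is legal.
\end{enumerate}
Because reachability is symmetric, reaching $y$ from $x$ already gives mutual reachability. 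Polynomial space therefore suffices.

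\emph{Hardness.} The source problem is: given $\Gamma$, a legal start configuration $C_0$ and a target edge $e_B$, decide whether some configuration reachable from $C_0$ allows $e_B$ to be reversed. This problem is $\PSPACE$-complete for bounded bandwidth. The difficulty is that $\FREENCLREV$ does not fix a start configuration. I therefore need a red edge $e_A$ that acts as a certificate of membership in the component of $C_0$. The required property is:
\begin{equation*}
\text{every legal } x \text{ from which } e_A \text{ can be reversed lies in the component of } C_0 .
\end{equation*}
To get this, I would revisit the Hearn--Demaine QBF-to-NCL construction~\cite{Hearn_2004}, as reproved with bounded bandwidth. In that construction, the quantifier and clause gadgets are driven by a chain of latch/\AND\ gadgets emanating from a single ``try in'' edge. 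I would take $e_A$ to be a red edge at the head of this chain. Reversing it should be possible only when every gadget downstream is in its canonical reset state. I would then show that all such reset configurations are mutually reachable, using local reconfigurations inside each gadget. This establishes the required property. For the other direction, take $e_B$ to be the ``satisfied out'' red edge. It can be reversed from a configuration reachable from the reset state iff the QBF is true. Correctness of the reduction then follows: a YES answer to $\FREENCLREV$ yields $x$ in the reset component, hence a reachable $y$ with $e_B$ reversible, hence a true QBF; the converse is the standard completeness direction. Bounded bandwidth is preserved because the added edges are only a constant-size modification of the existing gadgets.

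\emph{Main obstacle.} The hard step is the ``free'' start: ruling out spurious configurations $x$ that lie outside the reset component yet still allow $e_A$ to be reversed. Examples are configurations in which gadgets hold junk orientations that nevertheless meet the inflow constraints. My first attempt would be a gadget-by-gadget invariant argument. I would show that legality of the $e_A$ reversal propagates inflow requirements down the latch chain, and that these requirements force each gadget's internal edges into a state reachable from reset. If this propagation fails for some gadget, the fallback is to insert protected-\OR\ or latch gadgets that enforce the reset condition explicitly. This is essentially the route taken in Ref.~\cite{DeBiasi_2016}, on which I would otherwise rely.
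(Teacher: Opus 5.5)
The paper does not prove this lemma; it imports it from \cite{DeBiasi_2016}. Your membership argument (guess $x$, guess a walk, apply Savitch) is correct. The hardness half, however, has a genuine gap. Everything rests on your displayed property: every legal $x$ from which $e_A$ can be reversed lies in the component of $C_0$, or at least can never reach an $e_B$-reversible configuration on a NO instance. The mechanism you propose cannot deliver this.

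For a red edge $e_A=\{u,v\}$ between two \AND\ vertices, reversibility from a legal $x$ is exactly the condition that the blue edges at $u$ and $v$ point into $u$ and $v$. What ``propagates'' from this is only legality at the far endpoints. Through an \AND\ vertex orientations are forced: blue out forces both reds in, and red out forces the blue in. At an \OR\ vertex, however, the requirement becomes ``one of the other two blue edges points in'', which legality of $x$ already guarantees. So reversibility of $e_A$ pins down only the \AND-closure of two edges.

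Since the quantifier and latch gadgets contain \OR\ vertices, this closure does not reach the downstream gadgets or the clause network. Those parts are constrained by nothing beyond legality, so they can sit in junk states outside the reset component. NCL reconfiguration graphs generally have many components, e.g.\ from directed cycles that supply their own inflow. Ruling out that such states reach an $e_B$-reversible configuration when the QBF is false is precisely what the hardness proof has to do, and your proposal leaves it open. Your fallback of inserting enforcement gadgets is a new construction: it would have to force, and later release, every gadget through \AND-chains rooted at $e_A$, and its correctness is not argued.

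There is also a mismatch with the bounded-bandwidth hypothesis. The Hearn--Demaine QBF reduction \cite{Hearn_2004} wires the quantifier gadgets to an arbitrary CNF network and does not yield bounded bandwidth. As far as I know, the bounded-bandwidth hardness of \cite{vanderZanden_2015} comes from a different reduction, not from a bandwidth-bounded version of those gadgets. So the try-in edge, latch chain and satisfied-out edge you want to modify are not available in a bounded-bandwidth source instance. Your remark that the modifications are of constant size presupposes a bounded-bandwidth host and does not address this.

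A self-contained proof would have to do one of two things. Either carry out the free-start modification inside a construction that actually has bounded bandwidth; the paper's footnote indicates that \cite{DeBiasi_2016} works in such a graph, with $e_A$ and $e_B$ incident to adjacent vertices. Or prove bounded bandwidth for your modified QBF graph. As written, your hardness direction amounts to citing \cite{DeBiasi_2016}, which is all the paper itself does.
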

We prove~\Cref{thm:loc_vgp_hardness} by reducing from $\FREENCLREV$.
Because our 5-local construction allows edge reversals in illegal configurations, we will handle illegal states by guaranteeing that the gadget Hamiltonian we construct \emph{does not allow} illegal configurations to be part of non-VGP cycles.
We prove this using a key fact about the behavior of the illegal-configuration connected components, which we formulate in the following lemma:
\begin{lemma}[Locked edge lemma]
    \label{lem:edge_locking_lemma}
    Suppose $x$ is an illegal configuration for $\Gamma$.
    That is, some vertex $u$ does not have its minimum inflow constraint satisfied under the configuration $x$ of $\Gamma$.
    Then every edge $e_1,e_2,e_3$ incident to $u$ cannot be reversed under the transition rules imposed by $H$.
    Equivalently, the transitions $X_{b(e_1)}$, $X_{b(e_2)}$, and $X_{b(e_3)}$ are not allowed in the connected component of $G_H$ containing $\ket{x}$.
    Consequently, $u$ does not have its minimum inflow constraint satisfied under any configuration $z$ in the connected component of $G_H$ containing $\ket{x}$.
\end{lemma}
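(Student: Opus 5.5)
The plan is a direct local check followed by an induction along walks in $G_H$.

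\textbf{Reduction to one-step blocking.} The only edges of $G_H$ come from the terms $-h_eX_{b(e)}$. The term $h_e$ is diagonal and built only from the indicators $\ind_{w,i}$ of the edges \emph{other than} $e$ at its two endpoints $w\in\{u,v\}$, so $h_e$ does not depend on $b(e)$. Hence the transition $\ket{z}\to X_{b(e)}\ket{z}$ is present in $G_H$ if and only if $\bra{z}h_e\ket{z}\neq 0$. So it suffices to prove the following claim: if vertex $u$ violates its inflow constraint in configuration $z$, then $\bra{z}h_e\ket{z}=0$ for every edge $e$ incident to $u$. I will show this by showing that some factor of $h_e$ attached to the endpoint $u$ vanishes.

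\textbf{Case check.} I split on the type of $u$ and the colour of $e$.
\begin{itemize}
\item If $u$ is an $\OR$ vertex, it is unsatisfied exactly when all three of its blue edges point away from $u$. Then $\ind_{u,1}=\ind_{u,2}=0$, which kills the factor $[\ind_{u,1}+\ind_{u,2}]$ in case 1.a and the factor $[\ind_{u,1}+\ind_{u,2}-\ind_{u,1}\ind_{u,2}]$ in case 1.c (with $u$ in the role of the $\OR$ endpoint).
\item If $u$ is an $\AND$ vertex, its inflow is below $2$ exactly when its blue edge points away and at most one of its two red edges points in.
\begin{itemize}
\item If $e$ is the blue edge at $u$, then $e_1^{(u)},e_2^{(u)}$ are the reds. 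The factor $\ind_{u,1}\ind_{u,2}$ in cases 1.b and 1.c is $0$, since the reds are not both inward.
\item If $e$ is a red edge, then by the convention of case 2, $e_1^{(u)}$ is $u$'s blue edge. The factor $\ind_{u,1}$ is $0$ because the blue edge points outward.
\end{itemize}
\end{itemize}
In every case $h_e$ vanishes on $z$, which proves the claim.

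\textbf{Propagation to the component.} I induct on the length of a walk $\ket{x}=\ket{z_0}\to\ket{z_1}\to\cdots\to\ket{z_t}$ in $G_H$. The hypothesis is that $u$ is unsatisfied in $z_j$. By the claim, the step $z_j\to z_{j+1}$ must flip $b(f)$ for some edge $f$ not incident to $u$. That flip leaves the orientations of $e_1,e_2,e_3$ (the edges at $u$) unchanged. The inflow of $u$ depends only on those three orientations, so $u$ remains unsatisfied in $z_{j+1}$. Thus every configuration in the connected component of $\ket{x}$ leaves $u$ unsatisfied, and by the claim no $X_{b(e_i)}$ transition occurs anywhere in that component.

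\textbf{Main obstacle.} The delicate step is the $\AND$ bookkeeping. I must confirm that the labelling conventions ($e_1^{(w)}$ blue in case 2, and the roles of $u$ and $v$ in case 1.c) always place a vanishing factor at the unsatisfied endpoint, whichever endpoint of $e$ the vertex $u$ happens to be. I would handle this by noting that each construction of $h_e$ is symmetric in $u$ and $v$ up to the declared vertex types.
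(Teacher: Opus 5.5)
Your proposal is correct and follows the same route as the paper's proof: for each edge $e$ at the deficient vertex $u$, the factor of $h_e$ attached to $u$ requires $u$'s other edges alone to supply inflow at least $2$, so it vanishes. This freezes the orientations at $u$, and the deficiency therefore persists throughout the connected component. Your explicit observation that $h_e$ does not depend on $b(e)$ (so the edge $\ket{z}\to X_{b(e)}\ket{z}$ exists if and only if $\bra{z}h_e\ket{z}\neq 0$), and your induction along walks, spell out steps the paper leaves implicit.
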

\begin{proof}
    In each case of the definition of $h_e$, the indicator factors attached to an edge $e$ incident to $u$ demand that $u$'s \emph{other} incident edges alone provide inflow at least $2$: for an $\OR$ vertex, at least one other blue edge must point toward $u$, while for an $\AND$ vertex, both red edges (when $e$ is the blue edge) or the blue edge (when $e$ is red) must point toward $u$.
    In every case this inflow alone satisfies $u$'s minimum inflow constraint, contradicting the deficiency of $u$ under $x$.
    Hence the factors related to $u$ in $h_e$ vanish on $\ket{x}$ for every $e$ incident to $u$, the transitions $X_{b(e_i)}$ are disallowed, and, since the orientations of $u$'s edges are thereby frozen, the deficiency of $u$ persists throughout the connected component containing $\ket{x}$.
\end{proof}

Now we define the gadget Hamiltonian.
First, partition $V = V_{\OR} \cup V_{\AND}$ into two sets containing $\OR$ vertices and $\AND$ vertices, respectively.
For every vertex $v \in V_\OR$, let $e_1^{(v)},e_2^{(v)},$ and $e_3^{(v)}$ denote its 3 blue edges and for every $v \in V_\AND$ let $e_1^{(v)},e_{2}^{(v)},$ and $e_{3}^{(v)}$ denote its one blue edge and two red edges, respectively.
Create an ordering on the vertices in the partition:
\begin{align*}
    V_\OR = \{\alpha_0, \ldots, \alpha_\ell\} && V_{\AND} = \{ \beta_0, \ldots, \beta_m \}.
\end{align*}
Now introduce ancilla qubits
\begin{align*}
    p_0, \ldots, p_\ell && \text{and} && q_0, \ldots, q_{2m+1}
\end{align*}
and two more qubits $a$ and $b$.
Each ancilla can be identified with some vertex in the constraint graph.

We define the following 5-local Hamiltonian on the new qubits:
\begin{align*}
    H_{\gadget} &= \sum_{i=0}^{\ell-1} [\ind_{\alpha_i,1} + \ind_{\alpha_i,2} + \ind_{\alpha_i,3}]X_{p_{i}}X_{p_{i+1}}\\
    &+ [\ind_{\alpha_\ell,1} + \ind_{\alpha_\ell,2} + \ind_{\alpha_\ell,3}] X_{p_\ell}X_{q_0}\\
    &+ \sum_{i=0}^{m-1} \left[ \ind_{\beta_i,1} + \ind_{\beta_i,2} \right]X_{q_{2i}}X_{q_{2i+1}} + \left[ \ind_{\beta_i,1} + \ind_{\beta_i,3} \right]X_{q_{2i+1}}X_{q_{2i+2}}\\
    &+  \left[ \ind_{\beta_m,1} + \ind_{\beta_m,2} \right]X_{q_{2m}}X_{q_{2m+1}} + \left[ \ind_{\beta_m,1} + \ind_{\beta_m,3} \right]X_{q_{2m+1}}X_{a}\\
    &+ h_{e_A}X_{a}X_b + h_{e_B}X_bX_{p_0}.
\end{align*}
In words: in the summands in the first four lines, the $XX$ transitions are allowed if and only if the corresponding vertex has its inflow satisfied.
In the last line, the $X_aX_b$ transition is allowed if and only if $e_A$ can be reversed.
Likewise, the $X_bX_{p_0}$ transition is allowed if $e_B$ can be reversed.
If $\Gamma$ is chosen to have an odd number of $\OR$ vertices\footnote{One can always do this without compromising hardness; simply add an $\OR$ vertex to $\Gamma$ and make all of its edges self-loops.\\}, then one can only form an odd-length fundamental cycle $C$ when every PMR term of $H_\gadget$ appears in the fundamental cycle an odd number of times.

We will argue this.
In any closed walk, each qubit must be flipped an even number of times.
Since the gadget's $XX$-terms form a single cycle on the ancillas, the parity vector of gadget-edge usages is either identically zero or identically one.
Because the gadget cycle has odd length, a closed walk has non-trivial holonomy if and only if every gadget edge is used an odd number of times.
The PMR terms of $H_{\text{gadget}}$ commute among one another: their diagonal components act only on the edge qubits $\{b(e)\}_{e \in E}$, which their permutation components (supported on the ancillas) leave untouched.
Consequently, every odd-length fundamental cycle is the same (i.e. equal to $C$).

Notice that, given a bitstring $z \in \{0,1\}^{|E|}$ in connected component $K \subseteq G_H$, then $z$ represents a legal configuration for $\Gamma$ and the transitions $X_{b(e_A)}$ and $X_{b(e_B)}$ are allowed in $K$ if and only if $\bra{z}C\ket{z} \neq 0$.
Also notice that $H_{\gadget} \geq 0$ entry-wise.
So, $H' :=H + H_\text{gadget}$ has VGP if and only if $G_{H'}$ does not contain an odd-length cycle.
To see this, note that in any closed walk every qubit is flipped an even number of times; the $H$-edges (each flipping one $b(e)$ qubit, with negative entries) therefore occur an even number of times, and by \Cref{def:vgp} the holonomy of the walk reduces to $(-1)^{\#\text{gadget edges}}$, the gadget edges being the entry-wise nonnegative ones.
Moreover, the gadget's $XX$ terms link the ancillas $p_0, \ldots, p_\ell, q_0, \ldots, q_{2m+1}, a, b$ into a single ring; since every ancilla must also be flipped an even number of times, the multiset of ring-edge uses in a closed walk has all-even or all-odd multiplicities, and in the all-odd case the number of gadget edges used is congruent modulo $2$ to the ring length $\ell + 2m + 5$, which is odd precisely when the number of $\OR$ vertices is odd.
Hence, a closed walk has non-trivial holonomy exactly when it uses each ring edge an odd number of times, i.e., exactly when it has odd length.

We now proceed with proving that there exists a path $P$ between two legal configurations $x$ and $y$ such that reversal of $e_A$ and $e_B$ are legal at some point along $P$ if and only if $G_{H'}$ does not have an odd-length cycle.
For the forward direction, let $P$ be the path in $G_H$ from $x$ to $y$.
We can form the following odd-length cycle in $G_{H'}$:
\begin{enumerate}
    \item Starting from $\ket{x}$, apply every PMR term from $H_{\text{gadget}}$ to $C$ \emph{except} $h_{e_B}X_bX_{p_0}$.
    \item Next, apply the path $P$.
    \item Now the bit assignments to qubits $\{b(e)\}_{e \in E}$ agree with the configuration $y$. So, we can apply $h_{e_B}X_bX_{p_0}$.
    \item Finish the closed walk by applying the path $P$ in reverse.
\end{enumerate}

For the reverse direction, 
\begin{itemize}
    \item $G_{H'}$ contains an odd-length cycle $\gamma$ $\iff$
    \item $H'$ contains an odd-length fundamental cycle $C$ that admits $\gamma$ into $G_{H'}$ $\iff$
    \item $C$ consists of PMR terms from $H_{\text{gadget}}$, each PMR term occurring as a factor an odd number of times in $C$ $\iff$
    \item There exists a configuration $x$ and a sequence of valid moves reversing $e_A$ and $e_B$.
    \end{itemize}
It suffices to show that such an $x$ must be legal.
If $x$ is illegal, then by~\Cref{lem:edge_locking_lemma}, there is always a vertex $u$ of $\Gamma$ that does not have its minimum inflow constraint satisfied under configuration $x$.
Hence a PMR term in $H_{\text{gadget}}$ evaluates to $0$ on every configuration in the connected component containing $x$ and $y$.
But the only way to have an odd-length cycle in $G_{H'}$ is if there is a fundamental cycle with \emph{every} PMR term of $H_{\text{gadget}}$ as factors.
So $\gamma$ is not a valid walk in $G_{H'}$, contradiction.

Since $\Gamma$ has bounded bandwidth, we may embed the qubits $\{b(e)\}$ in a one-dimensional lattice $\Lambda$ so that $H$ has geometric locality.
Additionally, we require that $e_A$ and $e_B$ lie within $O(1)$ distance of each other in $X$\footnote{Ref. \cite{DeBiasi_2016} does this explicitly in their proof that $\FREENCLREV$ is $\PSPACE$-hard. In Figure 5~\cite{DeBiasi_2016}, the hardness proof chooses $e_A$ and $e_B$ to be incident to vertices that share an edge in a bounded bandwidth constraint graph.}.
Next, embed the ancilla qubits $p_0,\ldots,p_\ell$ and $q_0,\ldots,q_{2m+1}$ in $\Lambda$ so that each ancilla is within $O(1)$ distance of its interacting qubits in $H_{\text{gadget}}$, following the bounded bandwidth ordering of $\Gamma$.
Since $e_B$ is red, there is another blue edge incident to the same vertex as $e_B$---identify $p_0$ with that blue edge when embedding into $\Lambda$ to guarantee the term $h_{e_B}X_bX_{p_0}$ does not compromise geometric locality.
We may do the same for the vertices identified with $q_0$ and $p_\ell$ by the same reasoning.
This renders $H'$ geometrically local and completes the proof.
\qed

\begin{corollary}
    Suppose $H$ is a geometrically local Hamiltonian. Then deciding if $G_H$ is bipartite is $\PSPACE$-complete. $\PSPACE$-hardness is proven for $H \geq 0$ entry-wise and 5-local. \qed
\end{corollary}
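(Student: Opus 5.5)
The plan is to read this corollary off the proof of \Cref{thm:loc_vgp_hardness}. The bridge is an elementary observation: if $H$ is real and entry-wise nonnegative off the diagonal, then every edge phase $H_{z_t,z_{t-1}}/|H_{z_t,z_{t-1}}|$ equals $+1$. Hence the holonomy of a closed walk of length $L$ is $\Phi(\gamma)=(-1)^L$ (\Cref{def:vgp}). So for such $H$, VGP holds iff $G_H$ has no odd closed walk. A graph has an odd closed walk iff it has an odd cycle, so this is iff $G_H$ is bipartite. For this class, bipartiteness testing and VGP recognition are therefore the same problem, and I will transfer both directions.

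\textbf{Membership.} I would adapt the $\mathsf{NPSPACE}$ argument from the proof of \Cref{thm:loc_vgp_hardness}, replacing the accumulated exponent modulo $m$ by a single parity bit. Nondeterministically guess a start vertex $z_0$. Then walk along edges of $G_H$, checking each step $\braket{x|H|y}\neq 0$ in $\poly(n)$ time from the local description. Store only the current vertex, a step counter up to $2^n+1$, and the parity of the walk length. Accept upon returning to $z_0$ with odd parity. This decides non-bipartiteness in $\mathsf{NPSPACE}=\PSPACE$ by Savitch's theorem, and closure of $\PSPACE$ under complement gives the claim. This works for arbitrary geometrically local (indeed local) $H$, since only the nonzero pattern matters.

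\textbf{Hardness.} I would reuse the reduction from $\FREENCLREV$ (\Cref{lem:hard_decision_problem_in_ncl}) in the appendix. The one change is to flip the sign of the NCL part: set
\begin{align*}
H'' := \sum_e h_e X_{b(e)} + H_{\gadget}.
\end{align*}
Each $h_e\ge 0$ is diagonal, and the $X_{b(e)}$ terms are entry-wise disjoint from the gadget terms. Hence $H''$ is entry-wise nonnegative, 5-local, and geometrically local under the same lattice embedding as $H'$. Moreover $G_{H''}=G_{H'}$, because the transition graph depends only on which entries are nonzero.

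The appendix argument establishes the following, using only the structure of $G_{H'}$: the locked edge lemma (\Cref{lem:edge_locking_lemma}), the even-flip parity argument on the odd-length ancilla ring, and the explicit odd cycle built from a path $P$ between legal configurations. Its conclusion is that $G_{H'}$ contains an odd cycle iff the $\FREENCLREV$ instance is a YES instance. Therefore $G_{H''}$ is non-bipartite iff YES, and bipartiteness is $\PSPACE$-hard already for entry-wise nonnegative 5-local geometrically local $H$.

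\textbf{Main obstacle.} The step needing care is confirming that the appendix's equivalence really is a statement about odd cycles in the graph alone, so that the sign flip is harmless. I would re-read the holonomy reduction there, where the phase collapses to $(-1)^{\#\text{gadget edges}}$ and then, via the ring parity argument, to the parity of the walk length. I would check that the "odd-length cycle" conclusion was derived purely from flip-count parities and edge availability, never from the negative sign of the $-h_eX_{b(e)}$ entries. I expect this to hold because those signs appear an even number of times in every closed walk. If that check passes, the rest is bookkeeping.
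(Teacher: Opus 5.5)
Your proposal is correct and is essentially the paper's argument: the corollary is read off from the proof of \Cref{thm:loc_vgp_hardness}, with the $\mathsf{NPSPACE}$ walk (a parity bit replacing the accumulated phase exponent) giving membership and the $\FREENCLREV$ reduction giving hardness, and the check you flag does pass, because the paper already shows that every closed walk flips each $b(e)$ an even number of times, so the holonomy collapses to $(-1)^{L}$ and the equivalence is purely a statement about odd cycles of $G_{H'}$. Your passage to $H''=\sum_e h_eX_{b(e)}+H_{\gadget}$ is exactly the step the paper leaves implicit behind the ``entry-wise $\geq 0$'' clause, since the paper's $H'$ still carries the nonpositive entries $-h_eX_{b(e)}$ and only its support matters for bipartiteness.
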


\section{Proof of~\Cref{thm:diag_stoq_hardness}}
\label{app:proof_of_nphardness_pmrlocal_vgprecognition_promiseproblem}
Membership in $\NP$ is immediate: the witness is $\theta \in \{0, 2\pi/3\}^n$; conjugation by the single-qubit diagonal $D(\theta)$ preserves every PMR term and its support, so $D(\theta)HD(\theta)^\dagger$ remains PMR-local and its stoquasticity can be checked efficiently (\Cref{prop:efficient_stoq_check_pmr_local}).

We prove $\NP$-hardness by reducing from the $\NP$-complete \emph{monotone 1-in-3-SAT decision problem}~\cite{Garey_1979}.

\begin{definition}
    \label{def:monotone_1-in-3_SAT}
    The \emph{monotone 1-in-3 SAT decision problem} is as follows:
    \begin{itemize}[leftmargin=5.5em]
    \item[\textbf{Input}:] $\psi$ is a Boolean formula in conjunctive normal form with exactly 3 positive literals per clause.
        \item[\textbf{Problem}:] Decide if there exists a Boolean assignment $x \in \{0,1\}^n$ to the $n$ variables of $\psi$ such that exactly one literal evaluates to true per clause of $\psi$. \qed
    \end{itemize}
\end{definition}

Now we prove the rest of~\Cref{thm:diag_stoq_hardness}.

\begin{proof}
    Choose
    \begin{align*}
        \mathcal{D} = \left\{ \bigotimes_{j=1}^n \begin{pmatrix}
            1 & 0\\
            0 & e^{i\theta_j}
        \end{pmatrix} : (\theta_1,\ldots,\theta_n) \in \{0,2\pi/3\}^n \right\}.
    \end{align*}
            
    Fix the $\NP$-complete monotone 1-in-3-SAT decision problem and let $V$ be its polynomial-time verifier which accepts inputs Boolean formula $\psi$ and assignment $x$ if and only if for each clause of $\psi$, exactly one literal is true. Suppose $\psi$ is an arbitrary Boolean formula with 3 literals per clause and let $n$ be the number of variables in $\psi$.

    We will reduce the monotone 1-in-3-SAT problem to determining if there exists a diagonal unitary $D =D(\theta)$ of the form
    \begin{align*}
        D(\theta) = \bigotimes_{j=1}^n \begin{pmatrix}
            1 & 0\\
            0 & e^{i\theta_j}
        \end{pmatrix}, && \theta = (\theta_1,\ldots,\theta_n) \in \{0,2\pi/3\}^n
    \end{align*}
    such that $D(\theta)$ stoquastizes a Hamiltonian $H$.

    Notice that if we define bits
    \begin{align*}
        b_j = \begin{cases}
            0 & \theta_j = 0,\\
            1 & \theta_j = 2\pi/3,
        \end{cases}
    \end{align*}
    then for any computational basis state $\ket{x_1x_2x_3}_{i,j,k}$
    \begin{align*}
        &D\ket{x_1x_2x_3}_{i,j,k} = e^{i(\theta_ix_1+ \theta_jx_2+ \theta_kx_3)}\ket{x_1x_2x_3}_{i,j,k} && \implies\\
        &D\ket{x_1x_2x_3}_{i,j,k} = \omega^{b_1x_1 + b_2x_2 + b_3x_3}\ket{x_1x_2x_3}_{i,j,k}
    \end{align*}
    where we define $\omega := e^{i2\pi/3}$. 
    Define the above exponent $\phi(x) = b_1x_1 + b_2x_2 + b_3x_3$ for every computational basis state $x= \ket{x_1x_2x_3}$.
    
    For every clause $C$ on variables $i,j,k$ in formula $\psi$, define the Hamiltonian
    \begin{align*}
        H_{C} := -\omega \ket{000}\bra{111}_{i,j,k} - \omega^{2}\ket{111}\bra{000}_{i,j,k}.
    \end{align*}
    Then $H_{C}$ is 3-local and only has two non-zero entries, which are $\bra{000}H_{C}\ket{111} = -\omega$ and $\bra{111}H_{C}\ket{000} = -\omega^2$. $H_C$ is also Hermitian. Since $D(\theta)$ is diagonal, for any operator $A$ and any computational basis states $\ket{a},\ket{b}$ we have
    \begin{align*}
        \bra{a}D(\theta)AD(\theta)^\dagger\ket{b} = \omega^{\phi(a) - \phi(b)} \bra{a}A\ket{b}.
    \end{align*}
    After conjugating $H_{C}$,
    \begin{align*}
        q &:= \bra{000}D(\theta)H_{C}D(\theta)^\dagger\ket{111}\\
        &= \omega^{\phi(\ket{000})-\phi(\ket{111})}\left(-\omega\right),
    \end{align*}
    where
    \begin{align*}
        \phi(\ket{000}) -\phi(\ket{111})
        &= -(b_1 + b_2 + b_3) =: -m_C.
    \end{align*}
    Then
    \begin{align*}
        q=-\omega^{1-m_C}.
    \end{align*}
    It is easy to check that $q=-1$ if and only if $m_C=1$. In any other case, $q$ is non-real. Therefore, $H_C$ is stoquastic if and only if exactly one of $b_1,b_2$ and $b_3$ is 1.
    
    Let $\mathcal{C} = \{(i,j,k)\}$ be the set of clauses of formula $\psi$. Define the total Hamiltonian
    \begin{align*}
        H := \sum_{C \in \mathcal{C}} H_C.
    \end{align*}
    Since the sum of stoquastic terms is stoquastic and, for every pair of distinct clauses $C_1$ and $C_2$, we have that $H_{C_1}$ and $H_{C_2}$ do not overlap on the same non-zero matrix entries, $H$ is stoquastic if and only if every $H_C$ is stoquastic.
    Another consequence that follows from this is that $H$ is 3-PMR-local.

    Now, assume $x\in\{0,1\}^n$ is a Boolean assignment and $V(\psi,x)$ is accepted. 
    Define $\theta\in\{0,2\pi/3\}^n$ by setting $b_j:=x_j$ for all variables $j$ and then letting
    \begin{align*}
        \theta_j=\begin{cases}
            0 & b_j=0,\\
            2\pi/3 & b_j=1.
        \end{cases}
    \end{align*}
    Let $D=D(\theta)$. Since $V(\psi,x)$ accepts, every clause $C\in\mathcal{C}$ has exactly one satisfied literal under $x$; equivalently, $m_C=1$ for all $C$. By the calculation above this implies
    \begin{align*}
        \bra{000}DH_CD^\dagger\ket{111} = \bra{111}DH_CD^\dagger\ket{000} = -1
    \end{align*}
    and all other off-diagonal entries of $DH_CD^\dagger$ are zero. Hence each $DH_CD^\dagger$ is stoquastic, and therefore so is their sum
    \begin{align*}
        DHD^\dagger=\sum_{C\in\mathcal{C}} DH_CD^\dagger .
    \end{align*}

    Conversely, suppose there exists $\theta\in\{0,2\pi/3\}^n$ such that $D(\theta)HD(\theta)^\dagger$ is stoquastic. 
    Since distinct clause terms $H_{C_1}$ and $H_{C_2}$ do not overlap on off-diagonal matrix entries, no cancellation of a positive or complex off-diagonal element is possible in the sum. 
    Thus $DHD^\dagger$ is stoquastic only if each $DH_CD^\dagger$ is stoquastic. 
    By the previous argument this holds if and only if $m_C=1$ for every clause $C$. 
    Defining a Boolean assignment $x$ by $x_j:=b_j$ for all $j$, we get that in every clause exactly one literal is true, i.e., $V(\psi,x)$ accepts.
    Therefore determining whether a 3-PMR-local Hamiltonian can be stoquastized by such a local diagonal unitary is $\NP$-hard.
\end{proof}

\end{document}